\documentclass{article}
\PassOptionsToPackage{table,xcdraw}{xcolor}
\usepackage{arxiv_style,times}
\usepackage{adjustbox}
\usepackage{algorithm}
\usepackage{algorithmic}
\usepackage{textcomp}
\usepackage{mathtools}
\usepackage{microtype}
\usepackage{graphicx}
\usepackage{caption}
\usepackage{booktabs}
\usepackage{hyperref}
\usepackage{xurl}
\usepackage{array}
\usepackage{amsmath,amsfonts}
\usepackage{amssymb}
\usepackage{amsthm}
\usepackage{enumitem}
\usepackage{cases}
\usepackage[capitalise]{cleveref}

\usepackage{footnotehyper}
\makesavenoteenv{minipage}

\newcommand{\Require}{\REQUIRE}
\newcommand{\State}{\STATE}
\newcommand{\For}[1]{\FOR{#1}}
\newcommand{\EndFor}{\ENDFOR}
\newcommand{\Return}{\textbf{return}\ }
\makeatletter\@ifundefined{algorithmicindent}{\newlength{\algorithmicindent}}{}\makeatother
\renewcommand{\algorithmiccomment}[1]{\hfill{\color{gray!70}$\triangleright$~#1}}
\makeatletter
\newcommand{\FORR}[1]{\ALC@it\algorithmicfor\ #1%
  \begin{ALC@loop}}
\newcommand{\ENDFORR}{\end{ALC@loop}\ALC@it\algorithmicendfor}
\makeatother

\newtheorem{proposition}{Proposition}

\newtheorem{lemma}{Lemma}
\newtheorem{theorem}{Theorem}
\newtheorem{corollary}{Corollary}

\definecolor{lightgraybox}{gray}{0.94}
\colorlet{metablue}{blue!60!green}
\colorlet{mygreen}{green!55!black}
\definecolor{myyellow1}{HTML}{D89A3C}
\colorlet{myyellow}{myyellow1!90!black}
\newcommand{\cellhi}{\cellcolor{metablue!15}}

\newcommand{\myellow}[1]{{\color{myyellow}#1}}
\newcommand{\mgray}[1]{{\text{ \hypersetup{hidelinks} \color{lightgray} #1}}}
\newcommand{\shortnote}[1]{{\tag*{\mgray{#1}}}}
\newcommand{\graysd}[1]{{\color{gray}\tiny$\pm\,#1$}}
\hypersetup{colorlinks,linkcolor=metablue,citecolor=metablue,urlcolor=metablue}

\newcommand{\graybox}[1]{%
    \par\noindent
    \colorbox{lightgraybox}{%
        \parbox{\dimexpr\linewidth-2\fboxsep\relax}{#1}%
    }%
    \par
}

\newcommand{\diff}{\mathop{}\!{\mathrm{d}}}
\newcommand{\tr}{\mathrm{tr}}
\newcommand{\T}{{\top}}
\newcommand{\R}{\mathbb{R}}

\DeclareMathOperator*{\argmin}{arg\,min}

\title{Hard-Constrained Sampling on Embedded Riemannian Manifolds via Adjoint Schr\"{o}dinger Bridges}

\author{%
\textbf{Mattia Mosso}$^{1}$ \qquad
\textbf{Jaemoo Choi}$^{1}$ \qquad
\textbf{Heng Yang}$^{2}$\\[0.45em]
{\small $^{1}$School of Aerospace Engineering, Georgia Institute of Technology}\\
{\small $^{2}$School of Engineering and Applied Sciences, Harvard University}\\[0.3em]
{\footnotesize\texttt{mattia\_mosso@seas.harvard.edu \quad jchoi843@gatech.edu \quad hankyang@seas.harvard.edu}}
}

\begin{document}

\maketitle
\addtocontents{toc}{\protect\setcounter{tocdepth}{0}}

\begin{abstract}
\noindent
A variety of tasks require sampling from unnormalized Boltzmann distributions supported on manifolds. Building upon the foundations of adjoint matching and adjoint Schr\"{o}dinger bridge sampling, this paper provides a theoretically justified method, through the lens of stochastic optimal control, to address this problem on smooth, compact, path-connected embedded Riemannian manifolds.
As an element of novelty compared to existing literature, feasibility is enforced at the level of the state space, meaning the controlled diffusion is defined intrinsically on the curved space. 
Empirical validations are provided for several physics applications.
\end{abstract}

\section{Introduction}
\label{sec:intro}
Sampling from unnormalized Gibbs--Boltzmann distributions is a fundamental problem in computational physics, molecular simulation, Bayesian inference, and statistical mechanics. Given an energy function $E:\R^d\to\R$, the objective is to generate samples from the distribution
\begin{equation}
    \nu(\mathrm{d}x)
    =
    \frac{1}{Z}\exp\bigl(-E(x)\bigr)\,\mathrm{d}x,
    \qquad
    Z
    :=
    \int_{\R^d}\exp\bigl(-E(x)\bigr)\,\mathrm{d}x,
    \label{eq: pi_Rd}
\end{equation}
without evaluating the generally intractable normalizing constant $Z$.
Classical Markov chain Monte Carlo methods (MCMC) \citep{metropolis1953equation, neal2001annealed, delmoral2006sequential} can mix slowly in high-dimensional or multimodal energy landscapes because each new sample is generated through a sequential transition from the previous state.
Diffusion neural samplers instead learn a finite-horizon stochastic process that transports a tractable source distribution to the target, thereby providing amortized sample generation after training.
Within this class, stochastic optimal control (SOC) frameworks have given rise to path integral sampler \citep{zhang2022path}, while regression-based SOC solvers such as adjoint matching (AM) underpin scalable solutions including adjoint sampling and adjoint Schr\"odinger bridge sampler (ASBS), without requiring samples from the target distribution
\citep{domingoenrich2025adjoint,havens2025adjoint,liu2025adjoint,Guo2026DiscreteASBS}.

However, existing adjoint-sampling formulations are developed primarily for Euclidean state spaces and do not directly enforce exact support constraints.
Many scientific and engineering variables instead belong to lower-dimensional geometric spaces.
Examples include unit-norm vectors, rigid-body rotations, orthogonal matrices, conservation-law level sets, and closed-loop robot configurations.
Orthogonality constraints, in particular, lead to Stiefel manifolds in Bayesian factor models, probabilistic principal component analysis, and low-rank matrix models
\citep{jauch2021bayesian,pourzanjani2021bayesian,cui2024lowrank}.
This motivates us to consider a smooth embedded manifold
\begin{equation}
    \mathcal{M}
    :=
    \left\{
        x\in\R^d
        :
        c(x)=0
    \right\},
    \qquad
    c:\R^d\to\R^m,
    \quad m<d,
    \label{eq: M_constraint}
\end{equation}
and the intrinsic target measure
\begin{equation}
    \nu_{\mathcal{M}}(\mathrm{d}x)
    =
    \frac{1}{Z_{\mathcal{M}}}
    \exp\bigl(-E(x)\bigr)
    \,d\mathrm{vol}_{\mathcal{M}}(x),
    \qquad x\in\mathcal{M}.
    \label{eq: pi_M}
\end{equation}
Because $\nu_{\mathcal{M}}$ is supported on a lower-dimensional subset of $\R^d$, it is singular with respect to ambient Lebesgue measure.
Consequently, a Euclidean controlled diffusion does not preserve feasibility, while penalty formulations generally provide only approximate constraint satisfaction.

\definecolor{colordm}{HTML}{01847F}
\definecolor{colortm}{HTML}{920872}

\newcommand{\colordm}[1]{{\color{colordm}#1}}
\newcommand{\colortm}[1]{{\color{colortm}#1}}

\begin{table*}[t]
\centering
\caption{
Conceptual comparison between Euclidean and embedded Riemannian state
spaces.
}
\label{tab:euclidean_manifold_comparison}
\resizebox{\linewidth}{!}{
\renewcommand{\arraystretch}{1.15}
\begin{tabular}{ccc}
\toprule
&
  Euclidean state space $\R^d$
&
  Embedded Riemannian state space
  $\mathcal M\subset\R^d$
\\
\midrule

Ref. dyn. $p^{\mathrm {base}}$
&
  $\diff X_t=f_t(X_t)\diff t+\sigma_t\diff W_t$,
  $X_0\sim\mu$
&
  $\diff X_t= f_t^{\mathcal M}(X_t)\diff t
  +\sigma_t\diff W_t^{\mathcal M}$,
  $X_0\sim\mu_{\mathcal M}$
\\

Ctrl. dyn. $p^u$
&
  $\diff X_t=\bigl(f_t+\sigma_tu_t\bigr)(X_t)\diff t
  +\sigma_t\diff W_t$,
  $X_0\sim\mu$
&
  $\diff X_t=
  \bigl(f_t^{\mathcal M}+\sigma_tu_t\bigr)(X_t)\diff t
  +\sigma_t\diff W_t^{\mathcal M}$,
  $X_0\sim\mu_{\mathcal M}$,
  $u_t(x)\in T_x\mathcal M$
\\
\midrule

SB Prob.
&
  $\min\limits_{u~\text{s.t.}~X_1\sim\nu}
  \underset{\boldsymbol X\sim p^u}{\mathbb E}
  \Big[
    \int\limits_0^1
    \frac12\|u_t(X_t)\|^2\diff t
  \Big]$
&
  $\min\limits_{u~\text{s.t.}~X_1\sim\nu_{\mathcal M}}
  \underset{\boldsymbol X\sim p_{\mathcal M}^u}{\mathbb E}
  \Big[
    \int\limits_0^1
    \frac12
    \|u_t(X_t)\|_{T_{X_t}\mathcal M}^2
    \diff t
  \Big]$
\\

SOC Prob.
&
  $\min\limits_u
  \underset{\boldsymbol X\sim p^u}{\mathbb E}
  \Big[
    \int\limits_0^1
    \frac12\|u_t(X_t)\|^2\diff t
    +
    \log\frac{\widehat\varphi_1}{\nu}(X_1)
  \Big]$
&
  $\min\limits_u
  \underset{\boldsymbol X\sim p_{\mathcal M}^u}{\mathbb E}
  \Big[
    \int\limits_0^1
    \frac12
    \|u_t(X_t)\|_{T_{X_t}\mathcal M}^2
    \diff t
    +
    \log
    \frac{\widehat\varphi_1^{\mathcal M}}
         {\nu_{\mathcal M}}(X_1)
  \Big]$
\\
\midrule

Opt. Ctrl.
&
  $u_t^\star(x)
  =
  \sigma_t\nabla\log\varphi_t(x)$
&
  $u_t^\star(x)
  =
  \sigma_t\nabla_{\mathcal M}
  \log\varphi_t^{\mathcal M}(x)
  \in T_x\mathcal M$
\\

Corrector
&
  $\nabla\log\widehat\varphi_1(x)$
&
  $\nabla_{\mathcal M}
  \log\widehat\varphi_1^{\mathcal M}(x)
  \in T_x\mathcal M$
\\
\midrule

\begin{tabular}[c]{@{}c@{}}
Zero-drift\\
isotropic reference
\end{tabular}
&
  \begin{tabular}[c]{@{}c@{}}
  $f_t\equiv0
  \Rightarrow
  p_{1|t}^{\mathrm{base}}(x_1|x)
  =
  q_t(x_1-x)$
  \\
  $q_t(\varepsilon)
  =
  \mathcal N
  \bigl(
    \varepsilon;
    0,\bar\sigma_t^2I_d
  \bigr),
  \
  \bar\sigma_t^2
  :=
  \int_t^1\sigma_s^2\diff s$
  \end{tabular}
&
  \begin{tabular}[c]{@{}c@{}}
  $f_t^{\mathcal M}\equiv0
  \Rightarrow
  p_{1|t}^{\mathrm{base},\mathcal M}(x_1|x) 
  =
  h_{\bar\sigma_t^2}^{\mathcal M}(x,x_1) \ \  \text{intrinsic heat kernel}$
  \\
  $h_{\bar\sigma_t^2}^{\mathcal M}(x,x_1)
  \neq
  q_t(x_1-x)
  \ \text{in general}$
  \end{tabular}
\\
\midrule

\begin{tabular}[c]{@{}c@{}}
\textbf{\colordm{Denoising}}\\
\textbf{\colordm{Matching}}
\end{tabular}
&
  \begin{tabular}[c]{@{}c@{}}
  $\nabla\log\widehat\varphi_1(x)
  =
  \mathbb E_{p_{0|1}^\star(x_0|x)}
  \!\left[
    \colordm{
    \nabla_x
    \log p_{1|0}^{\mathrm{base}}(x|x_0)}
  \right]$
  \end{tabular}
&
  \begin{tabular}[c]{@{}c@{}}
  $\nabla_{\mathcal M}
  \log\widehat\varphi_1^{\mathcal M}(x)
  =
  \mathbb E_{p_{0|1}^{\star,\mathcal M}(x_0|x)}
  \!\left[
    \colordm{
    \nabla_{\mathcal M,x}
    \log
    p_{1|0}^{\mathrm{base},\mathcal M}(x|x_0)}
  \right]$
  \end{tabular}
\\
\midrule

\begin{tabular}[c]{@{}c@{}}
\textbf{\colortm{Adjoint}}\\
\textbf{\colortm{Matching}}
\end{tabular}
&
  \begin{tabular}[c]{@{}c@{}}
  $\nabla\log\varphi_t(x)
  =
  \mathbb E_{p_{1|t}^\star(x_1|x)}
  \!\left[
    \colortm{
    \nabla\log\varphi_1(x_1)}
  \right]$
  \end{tabular}
&
  \begin{tabular}[c]{@{}c@{}}
  $\nabla_{\mathcal M}
  \log\varphi_t^{\mathcal M}(x)
  =
  \mathbb E_{p_{\mathcal M}^{\star}(\cdot\mid X_t=x)}
  \!\left[
    (\mathcal W_{t,1})^{*}
    \colortm{
    \nabla_{\mathcal M}
    \log\varphi_1^{\mathcal M}(X_1)}
  \right]$
  \end{tabular}
\\
\bottomrule
\end{tabular}
}
\end{table*}

Sampling directly on manifolds has traditionally been addressed using constrained or Riemannian variants of Langevin and Hamiltonian Monte Carlo methods \citep{brubaker2012family,girolami2011riemann,cheng2022efficient}.
These methodologies are geometrically principled and can provide asymptotically exact samples under appropriate assumptions. Nevertheless, they remain sequential Markov-chain procedures and can therefore exhibit slow mixing.
For implicitly defined nonlinear manifolds, they may also require repeated constraint solves, geometric operations, and careful discretization at every sampling step.
Additional related work is discussed in Appendix~\ref{app: related_work}.

In this paper, we extend ASBS to unnormalized distributions supported on smooth, compact, path-connected embedded Riemannian manifolds.
We formulate sampling as a Schr\"odinger bridge problem relative to an intrinsic manifold diffusion and establish its equivalent terminal-cost SOC formulation.
The resulting controlled process evolves directly on $\mathcal{M}$, with both its drift and diffusion acting through the tangent bundle $T\mathcal{M}$, so feasibility is enforced at the level of the state space.

The main difficulty is that Euclidean ASBS relies on additive Gaussian transition kernels and on the canonical identification of all tangent spaces with $\R^d$.
Neither property holds on a curved manifold (see Table~\ref{tab:euclidean_manifold_comparison}).
We replace these Euclidean operations with intrinsic transition kernels, manifold bridge constructions, and vector transport between tangent spaces.
This leads to Riemannian ASBS (R--ASBS), which uses local heat-kernel and geodesic approximations together with Log maps and parallel transport.
For general embedded manifolds where these geometric primitives are unavailable or expensive, we introduce Extended R--ASBS.
Both algorithms alternate between two supervised regression problems: controller matching transports terminal adjoints to intermediate states, while corrector matching removes the endpoint bias induced by a general non-memoryless source distribution.
The resulting architecture avoids differentiating through the simulated stochastic trajectories and supports batch-parallel training and amortized sample generation.

\paragraph{Contributions.}
Our main contributions are summarized as follows:
\begin{enumerate}[label=\textbf{\Roman*.}]
    \item \textbf{Sampling, Schr\"odinger bridges, and stochastic optimal control on manifolds.}
    We establish the intrinsic SB--SOC equivalence and reciprocal property on \eqref{eq: M_constraint}.
    \item \textbf{Riemannian adjoint Schr\"odinger bridge sampling.}
    We derive exact denoising- and adjoint-matching identities, and obtain R--ASBS by replacing their generally intractable heat-kernel, bridge, and damped-transport terms with practical geometric approximations. 
    \item \textbf{An implementable extension for implicit manifolds.}
    We introduce Extended R--ASBS, which replaces unavailable Log maps, parallel transport, and heat-kernel scores with nearest-point retraction, projection-as-transport, and a projected-chord corrector.
    \item \textbf{Empirical validation.}
    We evaluate the proposed methods on multimodal spherical distributions, Gibbs sampling on Stiefel manifolds, high-dimensional closed-loop inverse kinematics with obstacle avoidance, and robust Wahba optimization on $\mathrm{SO}(3)$.
\end{enumerate}

\section{Mathematical Preliminaries} \label{sec: preliminaries}
\textbf{Notation.}
Let $\R^d$ denote the ambient Euclidean space and let $\langle \cdot, \cdot \rangle$ and $\| \cdot \|$ denote the standard inner product and norm. For a smooth scalar function $f: \R^d \to \R$, we write $\nabla_\alpha f$ and $\nabla_\alpha^2 f$ for its Euclidean gradient and Hessian with respect to the variables $\alpha$. Let $(\Omega, \mathcal{F}, \{ \mathcal{F}_t \}_{t \geq 0}, \mathbb{P})$ be a filtered probability space satisfying the usual conditions.

In the following section, we briefly summarize the main results of \citet{liu2025adjoint}, which will serve as a foundational background for our proposed method.

\subsection{Optimizing Diffusion Processes for Unconstrained Sampling}
\label{subsec: SOC_formulation}
Let $\nu$ be defined as in \eqref{eq: pi_Rd} and let $\mu$ be an easy-to-sample initial source distribution on $\R^d$. A diffusion sampler is specified by a base drift
$f_t(x): [0,1] \times \R^d \to \R^d$, a noise schedule $\sigma_t : [0,1] \to \R_{>0}$, and a learned control field $u_t^\theta (x)$, parameterized by $\theta$, transporting samples to the target distribution $\nu(x)$ at the terminal time $t=1$.
The controlled diffusion process is described by the control-affine It\^{o} stochastic differential equation (SDE) \citep{sarkka2019applied}:
\begin{equation}
    \diff X_t = [f_t (X_t) + \sigma_t u_t^\theta (X_t)] \diff t + \sigma_t \, \diff W_t, \qquad 
    X_0 \sim \mu,
    \label{eq: flat_controlled_sde}
\end{equation}
where $(W_t)_{t \in [0,1]}$ is a standard Brownian motion in $\R^d$.
Define the full sample trajectory as $\boldsymbol{X} := \{X_t: t \in [0,1]\}$. The probability law of this random trajectory induced by \eqref{eq: flat_controlled_sde} on the trajectory space $C([0,1], \R^d)$ is called the controlled path measure and is denoted by $p^u$. 
Similarly, $p^{\mathrm{base}}$ denotes the path measure corresponding to the uncontrolled dynamics obtained by setting $u \equiv 0$.

\subsection{Euclidean Adjoint Schr\"{o}dinger Bridge Sampler}
Euclidean ASBS formulates transport from the source $\mu$ to the target $\nu$ as a Schr\"odinger bridge and uses its equivalent terminal-cost stochastic optimal control formulation. Denoting the associated SB potentials by $\varphi_t$ and $\widehat\varphi_t$, the optimal control satisfies $u_t^\star=\sigma_t\nabla\log\varphi_t$. The complete SB--SOC derivation and the corresponding denoising- and adjoint-matching identities are given in Appendix~\ref{sec: euclidean_SB_SOC}.

Specializing to a Brownian-motion base process ($f_t\equiv0$), adjoint matching\footnote{A review on adjoint matching and reciprocal adjoint matching is provided in Appendix~\ref{app: adjoint}.} gives the ASBS \emph{adjoint-matching} condition
\graybox{%
\begin{equation}
    u^\star = \argmin_{u \in \mathcal{U}} \ \mathbb{E}_{p_{t \vert 0,1}^{\mathrm{base}} p_{0,1}^{\bar{u}}} \big[ \|u_t(X_t) + \sigma_t (\nabla E + \nabla \log \widehat{\varphi}_1) (X_1)\|^2 \big], \qquad t\sim\mathrm{Unif}[0,1]
\label{eq: ASBS_u_star}
\end{equation}
}
where $\bar{u} = \texttt{stopgrad}(u)$, $p^{\mathrm{base}}_{t|0,1}$ denotes the base-process bridge conditional law of $X_t$ given the endpoints $(X_0,X_1)$, and $p^{\bar u}_{0,1}$ is the current controlled endpoint law. Thus, as in reciprocal adjoint matching (RAM), the regression can be performed using endpoint samples from the current controlled process and intermediate points sampled from the base bridge\footnote{Strictly speaking, \eqref{eq: ASBS_u_star} is a fixed-point characterization: the displayed minimizer equals $u^\star$ exactly when the current endpoint law satisfies $p_{0,1}^{\bar u}=p_{0,1}^{\star}$, as happens at convergence of the alternating scheme described below.}. 

Clearly, computing the AM objective \eqref{eq: ASBS_u_star} requires knowing $\nabla \log \widehat{\varphi}_1 (x)$, which serves as a corrector that debiases the optimization towards the desired target. Notably, by denoise matching, this corrector function also admits a variational form
\graybox{%
\begin{equation}
    \nabla \log \widehat{\varphi}_1 = \argmin_h \ \mathbb{E}_{p_{0,1}^{u^\star}} \big[ \| h(X_1) - \nabla_{X_1} \log p^{\mathrm{base}} (X_1 \vert X_0) \|^2 \big].
    \label{eq: euclidean_corrector}
\end{equation}
}
This objective is called \emph{corrector matching}. It estimates the endpoint corrector using samples from the current controlled process, without requiring samples from the target distribution.
In the context of Euclidean ASBS corrector, computing \eqref{eq: euclidean_corrector} is straightforward, since it only requires evaluating this transition from time $0$ to $1$, leading to the linear score $\nabla_{X_1} \log p_{1|0}^{\mathrm{base}}(X_1 \vert X_0) = - \frac{X_1 - X_0}{\int_0^1 \sigma_t^2 \diff t}$.

To summarize, \eqref{eq: ASBS_u_star} and \eqref{eq: euclidean_corrector} characterize two distinct, but interdependent, matching objectives that any kinetic-optimal drift $u_t^\star$ of SBs must satisfy.
ASBS relaxes the interdependency with an alternating optimization scheme. 
Intuitively, at each stage $k$, we first find the control $u^{(k)}$ that best aligns with the corrector from the previous stage, $h^{(k-1)}$, then update the corrector $h^{(k)}$ accordingly to reflect the memorylessness of the current control.
This alternating procedure can be interpreted as an iterative proportional fitting scheme between forward and backward half-bridges, and converges to the Schr\"{o}dinger bridge solution when each matching stage reaches its critical point, i.e., $\lim_{k \to \infty} u^{(k)} = u^\star$ \citep[Thm. 3.2]{liu2025adjoint}.

\section{Extension to Embedded Riemannian Constraint Manifolds} \label{sec: manifold_extension}
Throughout this section, we extend the Euclidean formulation reviewed above to hard-constrained sampling problems whose feasible set is \eqref{eq: M_constraint}. The aim is to construct a controlled diffusion evolving directly on $\mathcal{M}$, so that its terminal marginal coincides with $\nu_\mathcal{M}$.
In contrast to penalty-based or projected ambient approaches, feasibility is enforced at the level of the state space: the process is defined intrinsically on $\mathcal{M}$, and therefore remains feasible for all times.

\subsection{Reference and Controlled Diffusions on $\mathcal{M}$} \label{subsec: notation}
In the following, we assume $c\in C^\infty(\mathbb R^d,\mathbb R^m)$ and the Jacobian $J_c(x) \in \R^{m \times d}$ has full row rank $\forall x \in \mathcal{M}$, implying $\mathcal{M}$ is a smooth, boundaryless, embedded submanifold of dimension $n=d-m$. Furthermore, we assume that $\mathcal{M}$ is compact and path-connected.
We denote by $T_x \mathcal{M}$ the tangent space of $\mathcal{M}$ at $x$, and by $P_x := I_d - J_c(x)^\T (J_c J_c^\T)^{-1} J_c(x)$ the orthogonal projection from $\R^d$ onto $T_x \mathcal{M}$.
The Riemannian gradient, divergence, and Laplace--Beltrami operator on $\mathcal{M}$ are denoted by $\nabla_{\mathcal{M}}$, $\operatorname{div}_{\mathcal{M}}$, and $\Delta_{\mathcal{M}}$, respectively.
We denote by $W_t^{\mathcal{M}}$ Brownian motion on $\mathcal{M}$, namely the diffusion whose infinitesimal generator is $\frac12 \Delta_{\mathcal{M}}$.
Unless otherwise stated, all probability densities on $\mathcal{M}$ are understood with respect to $d\mathrm{vol}_{\mathcal{M}}$. We refer to \citet{lee2018introduction}, together with Appendix~\ref{app: geometric_stochastic_preliminaries}, for a detailed discussion on Riemannian manifolds.

Let $\mu_{\mathcal M}$ be a tractable source distribution on $\mathcal{M}$, absolutely continuous with respect to $d\mathrm{vol}_{\mathcal M}$. 
We consider a time-dependent tangent base drift, smooth in \((t,x)\),
\begin{equation*}
    f_t^{\mathcal M}:\mathcal M\to T\mathcal M,
    \qquad
    f_t^{\mathcal M}(x)\in T_x\mathcal M,
\end{equation*}
and a scalar noise schedule \(\sigma_t \in C^1([0,1])\) satisfying \(0<\underline{\sigma}\leq \sigma_t\leq\overline{\sigma}\). The reference process on $\mathcal{M}$ is the intrinsic diffusion with generator
\begin{equation}
    \mathcal L_t^{\mathrm{base},\mathcal M} \psi = \left\langle f_t^{\mathcal{M}},\nabla_{\mathcal M}\psi \right\rangle + \frac{\sigma_t^2}{2}\Delta_{\mathcal M}\psi,
    \label{eq: manifold_base_generator}
\end{equation}
for every smooth test function $\psi:\mathcal M\to \R$. Equivalently, we write\footnote{The notation \eqref{eq: manifold_base_sde} is understood through the generator \eqref{eq: manifold_base_generator}.}
\begin{equation}
    \diff X_t = f_t^{\mathcal M}(X_t) \diff t + \sigma_t\, \diff W_t^{\mathcal M}, \quad X_0\sim\mu_{\mathcal M}.
    \label{eq: manifold_base_sde}
\end{equation}

A controlled manifold diffusion is obtained by adding a tangent control field $u_t:\mathcal M\to T\mathcal M, \ u_t(x)\in T_x\mathcal M$, according to
\begin{equation}
    \diff X_t = \left[ f_t^{\mathcal M}(X_t) + \sigma_t u_t(X_t) \right] \diff t + \sigma_t\, \diff W_t^{\mathcal M},
    \label{eq: manifold_controlled_sde}
\end{equation}
with the corresponding generator
\begin{equation}
    \mathcal L_t^{u,\mathcal M}\psi = \left\langle f_t^{\mathcal M}+\sigma_tu_t, \nabla_{\mathcal M}\psi \right\rangle + \frac{\sigma_t^2}{2}\Delta_{\mathcal M}\psi.
    \label{eq: manifold_controlled_generator}
\end{equation}
Since \eqref{eq: manifold_controlled_sde} is defined intrinsically on $\mathcal M$, the hard constraint is preserved by construction (i.e., $X_0\in\mathcal M \implies X_t \in \mathcal M,  \ \forall t \in [0,1], \ p_{\mathcal{M}}^u \text{-a.s.}$).

\subsection{Manifold SOC--SB Equivalence}
Let $\mathcal U_{\mathcal M}$ denote the class of progressively measurable tangent controls $u_t(X_t)\in T_{X_t}\mathcal M$ for which the controlled martingale problem is well posed, the Girsanov density with respect to $p_{\mathcal M}^{\mathrm{base}}$ is a true martingale, and
$\mathbb E_{p_{\mathcal M}^{u}}\int_0^1\|u_t(X_t)\|_{T_{X_t}\mathcal M}^2\diff t<\infty$.
The preceding discussion yields the following counterpart of the ASBS SOC--SB equivalence (the related proofs for this section, with additional remarks, are presented in Appendix~\ref{app: manifold_SOC_formulation}).

\begin{theorem}[Embedded-manifold SOC--SB equivalence] \label{theo: SOC_SB_manifold_equivalence}
Let $\mathcal M\subset\mathbb R^d$ satisfy the hypotheses established in Section~\ref{subsec: notation}. Assume $\mu_{\mathcal M},\nu_{\mathcal M}\in C^2(\mathcal M)$ are strictly positive probability densities, $E\in C^2(\mathcal M)$, and the reference transition kernels admit smooth strictly positive densities with respect to $d\mathrm{vol}_{\mathcal M}$. 
Then the manifold Schr\"odinger Bridge problem
\begin{equation}
    \min_p D_{\mathrm{KL}} \left( p\,\|\,p_{\mathcal M}^{\mathrm{base}} \right), \qquad p_0=\mu_{\mathcal M}, \quad p_1=\nu_{\mathcal M},
    \label{eq: manifold_SB_problem}
\end{equation}
admits a unique solution $p_{\mathcal M}^\star$. There exist strictly positive Schr\"odinger potentials $\varphi_t^{\mathcal M},\widehat\varphi_t^{\mathcal M}$, unique up to reciprocal multiplicative constants, satisfying
\begin{align}
\varphi_t^{\mathcal M}(x)
&=\int_{\mathcal M}p_{1|t}^{\mathrm{base},\mathcal M}(y|x)\varphi_1^{\mathcal M}(y)\,d\mathrm{vol}_{\mathcal M}(y),
\label{eq: manifold_phi_backward}\\
\widehat\varphi_t^{\mathcal M}(x)
&=\int_{\mathcal M}p_{t|0}^{\mathrm{base},\mathcal M}(x|y)\widehat\varphi_0^{\mathcal M}(y)\,d\mathrm{vol}_{\mathcal M}(y),
\label{eq: manifold_phihat_forward}
\end{align}
with $\varphi_0^{\mathcal M}\widehat\varphi_0^{\mathcal M}=\mu_{\mathcal M}$ and $\varphi_1^{\mathcal M}\widehat\varphi_1^{\mathcal M}=\nu_{\mathcal M}$. Moreover,
\begin{equation}
\frac{d p_{\mathcal M}^\star}{d p_{\mathcal M}^{\mathrm{base}}}(\boldsymbol X)
=
\frac{\varphi_1^{\mathcal M}(X_1)}{\varphi_0^{\mathcal M}(X_0)}
=
\frac{\widehat\varphi_0^{\mathcal M}(X_0)}{\mu_{\mathcal M}(X_0)}\varphi_1^{\mathcal M}(X_1).
\label{eq: manifold_SB_RN}
\end{equation}
The optimal path measure is induced by \eqref{eq: manifold_controlled_sde} with
\graybox{%
\begin{equation}
    u_t^\star(x)=\sigma_t\nabla_{\mathcal M}\log\varphi_t^{\mathcal M}(x).
    \label{eq: manifold_optimal_control}
\end{equation}
}
It is also the unique path-law solution of the terminal-cost SOC problem
\graybox{%
\begin{equation}
    \min_{u\in\mathcal U_{\mathcal M}} \mathbb E_{\boldsymbol X\sim p_{\mathcal M}^{u}} \left[ \int_0^1 \frac12 \|u_t(X_t)\|_{T_{X_t}\mathcal{M}}^2 \diff t + \log \frac{\widehat\varphi_1^{\mathcal M}(X_1)}{\nu_{\mathcal M}(X_1)}
    \right].
    \label{eq: SOC_theo1}
\end{equation}
}
If $\nu_{\mathcal M}(x)\propto e^{-E(x)}$ as in \eqref{eq: pi_M}, its terminal adjoint is
\graybox{%
\begin{equation}
    \nabla_{\mathcal M} \log \frac{\widehat\varphi_1^{\mathcal M}(x)}{\nu_{\mathcal M}(x)} = \nabla_{\mathcal M}E(x) + \nabla_{\mathcal M} \log \widehat\varphi_1^{\mathcal M}(x).
    \label{eq: manifold_terminal_adjoint}
\end{equation}
}
\end{theorem}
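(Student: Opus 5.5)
The plan is to follow the classical static/dynamic Schr\"odinger bridge route, transplanted to the compact manifold, with the reference transition densities playing the role of Gaussians.

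\textbf{Step 1: reduce to the static problem.} Disintegrate any $p\ll p_{\mathcal M}^{\mathrm{base}}$ along its endpoint law $p_{0,1}$:
\[
D_{\mathrm{KL}}(p\,\|\,p_{\mathcal M}^{\mathrm{base}})=D_{\mathrm{KL}}(p_{0,1}\,\|\,p^{\mathrm{base}}_{0,1})+\mathbb E_{p_{0,1}}\bigl[D_{\mathrm{KL}}(p_{\cdot|0,1}\,\|\,p^{\mathrm{base}}_{\cdot|0,1})\bigr].
\]
Hence the minimizer keeps the base bridges, which is the reciprocal property, and reduces \eqref{eq: manifold_SB_problem} to a static entropic problem with reference $\mu_{\mathcal M}(x)\,p_{1|0}^{\mathrm{base},\mathcal M}(y|x)$ on $\mathcal M\times\mathcal M$.

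\textbf{Step 2: existence, uniqueness, and the potentials.} Uniqueness follows from strict convexity of KL on the convex set of couplings. For existence, compactness of $\mathcal M$ together with continuity and positivity of the kernel give uniform bounds $0<a\le p_{1|0}^{\mathrm{base},\mathcal M}\le b$. By Csisz\'ar/Fortet/Beurling (or convergence of the Sinkhorn/IPF iteration, a Hilbert-metric contraction thanks to the bounded kernel), there is a unique product-form solution $\varphi_0^{\mathcal M}(x)^{-1}\widehat\varphi_0^{\mathcal M}(x)\ldots$. Concretely, define $\widehat\varphi_0^{\mathcal M}$ and $\varphi_1^{\mathcal M}$ as the Schr\"odinger system solutions, positive and bounded, unique up to $c,c^{-1}$.

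Then define $\varphi_t^{\mathcal M}$ and $\widehat\varphi_t^{\mathcal M}$ by \eqref{eq: manifold_phi_backward}--\eqref{eq: manifold_phihat_forward}. The marginal constraints give $\varphi_0\widehat\varphi_0=\mu_{\mathcal M}$ and $\varphi_1\widehat\varphi_1=\nu_{\mathcal M}$. Gluing the base bridges onto the optimal coupling yields \eqref{eq: manifold_SB_RN}; the second form uses $\varphi_0=\mu/\widehat\varphi_0$.

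\textbf{Step 3: identify the optimal drift.} By the Chapman--Kolmogorov equation and smoothness of the kernels, $\varphi_t$ is a positive $C^{1,2}$ solution of
\[
\partial_t\varphi+\mathcal L_t^{\mathrm{base},\mathcal M}\varphi=0 .
\]
Hence $M_t=\varphi_t(X_t)/\varphi_0(X_0)$ is a positive $p^{\mathrm{base}}$ martingale, and $M_1$ is the density \eqref{eq: manifold_SB_RN}. Applying It\^o's formula on $\mathcal M$ (for instance via an isometric embedding, or intrinsically using the martingale problem), the Doob $h$-transform with $h=\varphi$ shows that the tilted law solves the martingale problem for
\[
\mathcal L^{\mathrm{base}}+\sigma_t^2\langle\nabla_{\mathcal M}\log\varphi_t,\nabla_{\mathcal M}\cdot\rangle,
\]
which is \eqref{eq: manifold_optimal_control}. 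Well-posedness holds because this drift is smooth and bounded on $[0,1]\times\mathcal M$.

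\textbf{Step 4: the SOC problem.} For $u\in\mathcal U_{\mathcal M}$, the manifold Girsanov theorem gives
\[
D_{\mathrm{KL}}(p^u\,\|\,p^{\mathrm{base}})=D_{\mathrm{KL}}(\mu\,\|\,\mu)+\mathbb E\int_0^1\tfrac12\|u\|^2\,\diff t .
\]
Writing $p^\star=p^{\mathrm{base}}\varphi_1(X_1)/\varphi_0(X_0)$, we get
\[
D_{\mathrm{KL}}(p^u\,\|\,p^\star)=\mathbb E_{p^u}\Bigl[\int\tfrac12\|u\|^2-\log\varphi_1(X_1)\Bigr]+\mathbb E_\mu\log\varphi_0 .
\]
Since $\log\varphi_1=\log\nu-\log\widehat\varphi_1$, the objective \eqref{eq: SOC_theo1} equals $D_{\mathrm{KL}}(p^u\,\|\,p^\star)$ minus a constant. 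It is therefore uniquely minimized at $p^\star$, which lies in $\mathcal U_{\mathcal M}$ by Step 3. Finally, \eqref{eq: manifold_terminal_adjoint} follows from $\log\nu=-E-\log Z_{\mathcal M}$ after taking $\nabla_{\mathcal M}$.

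\textbf{Main obstacle.} I expect the hardest part to be the rigorous regularity in Steps 2--3: showing that $\varphi_1$ and $\widehat\varphi_0$ (which are only measurable a priori) give smooth, positive $\varphi_t$ for $t<1$ and controlling the drift as $t\to1$. Here I would use the smoothness of the kernel in the conditioning variable for $t<1$, and at $t=1$ use $\varphi_1=\nu/\widehat\varphi_1$, where $\widehat\varphi_1$ is smooth and positive as a kernel integral. This gives $\varphi_1\in C^2$, so the gradient stays bounded up to $t=1$.
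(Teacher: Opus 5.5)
Your proposal is correct and follows essentially the same route as the paper: you reduce to the static problem with product-form potentials and read the endpoint tilt $\varphi_1^{\mathcal M}(X_1)/\varphi_0^{\mathcal M}(X_0)$ as a Doob $h$-transform to identify the drift. You then use the Girsanov entropy identity and write the SOC objective as $D_{\mathrm{KL}}(p^u\|p^\star)$ minus a constant, which is exactly the paper's Gibbs decomposition with $g=-\log\varphi_1^{\mathcal M}$ and $Z_g=\varphi_0^{\mathcal M}$, and your regularity fix via $\varphi_1^{\mathcal M}=\nu_{\mathcal M}/\widehat\varphi_1^{\mathcal M}$ matches the paper's Step~2. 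The one blemish is the garbled expression in your Step~2: the static optimizer's density relative to $\mu_{\mathcal M}(x)\,p_{1|0}^{\mathrm{base},\mathcal M}(y|x)$ is $\varphi_1^{\mathcal M}(y)/\varphi_0^{\mathcal M}(x)$, and it should not involve $\varphi_0^{\mathcal M}(x)^{-1}\widehat\varphi_0^{\mathcal M}(x)$.
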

Thus, as in Euclidean ASBS, the term $\nabla_{\mathcal M} \log \widehat\varphi_1^{\mathcal M}$ acts as a corrector that removes the initial value-function bias induced by a general non-memoryless source distribution. The difference is that the corrector is now an intrinsic tangent vector field on $\mathcal{M}$.

\subsection{Exact Intrinsic Matching Identities}
\label{subsec: exact_manifold_matching}
We next derive the exact matching identities from the manifold SB structure established in Theorem~\ref{theo: SOC_SB_manifold_equivalence}, and subsequently distinguish them from the geometric approximations used for computation. 

\colordm{Denoising matching (corrector)}. 
The denoising identity extends directly, yielding a verbatim equivalent on $\mathcal{M}$ of the Euclidean case \eqref{eq: euclidean_DM_identity}, because it only requires differentiating a smooth positive transition density; see Appendix~\ref{app: manifold_SOC_formulation}, Lemma~\ref{prop: exact_manifold_DM}.

\colortm{Adjoint matching (controller)}.
Unlike denoising matching, adjoint matching necessitates differentiating the diffusion semigroup itself. For the Brownian reference used by ASBS, the Euclidean identity $\nabla P_tf=P_t\nabla f$ is replaced on a curved manifold by a derivative-semigroup formula involving damped stochastic parallel transport \citep{thalmaier1998remarks,coulibaly2011brownian}.
For $f_t^{\mathcal M}\equiv0$, let $\mathcal W_{t,s}:T_{X_t}\mathcal M\to T_{X_s}\mathcal M$ solve, along the reference Brownian path,
\begin{equation}
\frac{D}{ds}\mathcal W_{t,s}v
=-\frac{\sigma_s^2}{2}\operatorname{Ric}_{X_s}^{\sharp}\!\left(\mathcal W_{t,s}v\right),
\qquad
\mathcal W_{t,t}=I,
\label{eq: damped_transport}
\end{equation}
where $D/ds$ is covariant differentiation along $X_s$ and $\operatorname{Ric}^{\sharp}$ is the Ricci endomorphism. Its metric adjoint $(\mathcal W_{t,s})^*:T_{X_s}\mathcal M\to T_{X_t}\mathcal M$ pulls back terminal covectors, identified with vectors through the metric, back to time $t$; see Appendix~\ref{app: manifold_SOC_formulation}, Lemma~\ref{prop: exact_manifold_AM}.

\subsection{Practical Geometric Approximations and R--ASBS}
\label{subsec: practical_geometric_approximations}
While Theorem~\ref{theo: SOC_SB_manifold_equivalence} and Lemma~\ref{prop: exact_manifold_DM}--\ref{prop: exact_manifold_AM} establish the exact continuous-time matching identities, they do not directly yield an implementable learning procedure.
A necessary next step is translating the exact intrinsic quantities to computable geometric surrogates.
In particular, Euclidean ASBS depends on three core operations: evaluating the base transition score used in corrector matching, sampling intermediate bridge states conditioned on the endpoints, and propagating endpoint adjoints to intermediate times. In $\R^d$, these admit closed-form expressions because Brownian transition densities are Gaussian and the derivative flow is the identity. On a manifold, the corresponding exact objects are the heat-kernel score, the reference diffusion bridge, and damped stochastic parallel transport. Algorithm~\ref{alg: asbs_m} approximates them, respectively, by a short-time heat-kernel score, a noisy geodesic bridge, and Levi--Civita parallel transport along the sampled bridge.

An analytical heat kernel solution, even when available, is often represented by an infinite series and is computationally prohibitive inside every neural-network update. We therefore approximate its score by the short-time Varadhan expansion \citep[\S\S5.1--5.2, 5.5]{hsu2002stochastic}
\begin{equation}
\nabla_{\mathcal M,X_1}\log p_{1|0}^{\mathrm{base},\mathcal M}(X_1|X_0)
\approx
\frac{\operatorname{Exp}_{X_1}^{-1}(X_0)}{\int_0^1\sigma_t^2\diff t}
-\frac12\nabla_{\mathcal M,X_1}\log\Theta(X_0,X_1),
\label{eq: Varadhan_approximation}
\end{equation}
where $\operatorname{Exp}_y^{-1}(x)\in T_y\mathcal M$ is the Log map and $\Theta$ is the Jacobian of the exponential map. The first term is the dominant geodesic correction and the second accounts for local volume distortion. Intermediate reference-bridge sampling is likewise approximated by injecting Euclidean bridge noise into the tangent space of the deterministic geodesic.

Finally, the exact adjoint target in \eqref{eq: manifold_exact_AM_loss} depends on the path-dependent damped operator $(\mathcal W_{t,1})^*$. Given a sampled approximate bridge path $\gamma_{t,1}$, we consider
\begin{equation}
(\mathcal W_{t,1})^*a_1
\approx
\mathcal T_{1\to t}^{\gamma}a_1,
\label{eq: practical_transport_approximation}
\end{equation}
where $\mathcal T_{1\to t}^{\gamma}:T_{X_1}\mathcal M\to T_{X_t}\mathcal M$ is Levi--Civita parallel transport along $\gamma$. Specifically, given a smooth curve $\gamma:[t,1]\to\mathcal M$, it is computed by
\begin{equation}
\nabla_{\dot\gamma_s}A_s=0,
\qquad
A_1=a_1\in T_{X_1}\mathcal M.
\label{eq: parallel_transport_equation}
\end{equation}
Approximation \eqref{eq: practical_transport_approximation} removes the Ricci damping in \eqref{eq: damped_transport} and replaces the stochastic reference path by the sampled geometric bridge. The endpoint-locality result in Appendix~\ref{app: theoretical_analysis} provides short-time theoretical support for these local heat-kernel, bridge, and transport substitutions, but does not constitute an exact convergence bound for the resulting algorithm.

This yields the proposed \emph{Riemannian Adjoint Schr\"odinger Bridge Sampler} (R--ASBS), shown in Algorithm~\ref{alg: asbs_m}. Further analyses on the algorithms' structure and complexity can be found in Appendix~\ref{app: additional_theoretical_results}.
Because the algorithm relies on closed-form Log maps and parallel transport together with the approximations \eqref{eq: Varadhan_approximation} and \eqref{eq: practical_transport_approximation}, it is designed for manifolds including $\mathbb S^n$, $\mathbb T^n$, and $\mathrm{SO}(3)$ for small $n$. The exponential map may be replaced by a more efficient (second-order) retraction $R_x: T_x\mathcal{M} \to \mathcal{M}$.

\subsection{Extension to General Embedded Riemannian Constraint Manifolds} \label{subsec: general_manifold_extension}
\emph{Extended R--ASBS} (Algorithm~\ref{alg: extended_asbs_m}) further replaces the parallel-transport surrogate in \eqref{eq: practical_transport_approximation} by Projection-as-Transport (PAT). 
That is, letting $t_j=j\Delta t$ and $X_j:=X_{t_j}$, a discretized path is written as $X_0,X_1,\dots,X_N$, backward parallel transport is approximated by sequential projection
\begin{equation}
    \mathcal{T}_{1 \to t_0}^\gamma (v) \approx P_{X_{t_0}} P_{X_{t_1}} \cdots P_{X_{t_{N-1}}} v \qquad \mathrm{with} \quad X_N=X_{t_N}=X_1,
    \label{eq: PAT}
\end{equation}
starting from $v = a_1(X_1) \in T_{X_1} \mathcal{M}$, with $a_1$ as in Lemma~\ref{prop: exact_manifold_AM}, and walking backward.
A further simplification is substituting the Log map in \eqref{eq: Varadhan_approximation} with a Projected chord corrector, which is obtained by replacing the exact corrector target with $\nabla_{\mathcal M, X_1} \log p_{1|0}^{\mathrm{base},\mathcal M}(X_1 \vert X_0) \approx - P_{X_1} \frac{X_1 - X_0}{\int_0^1 \sigma_t^2 \diff t}$, i.e., the projected Euclidean corrector \eqref{eq: euclidean_corrector}. 
To allow more generality, the retraction map $\Pi_\mathcal{M}$ is chosen to be the nearest-point projection onto $\mathcal{M}$, available via Newton's method\footnote{The map
$R_x^{\mathrm{proj}}(v):=\Pi_{\mathcal M}(x+v)$ in Algorithm \ref{alg: extended_asbs_m} defines a projection-based retraction in a neighborhood of
$\mathcal M$ \citep{absil2008optimization}.} on $c(x) = 0$.
Thus Extended R--ASBS introduces a second approximation layer: PAT replaces the ordinary-parallel-transport surrogate, while the projected chord replaces the heat-kernel denoising target. Notably, Algorithm~\ref{alg: extended_asbs_m} extends the full AM implementation, not RAM, because a computable reference bridge is unavailable in these more complex geometries.

\section{Numerical Experiments}
\label{sec: numerical_resuls}
\subsection{Spherical and Semi-Spherical Distributions for Earth and Climate Events}
\label{subsec: sphere_climate}
\begin{figure}[!t]
\centering
    \includegraphics[width=1.0\textwidth]{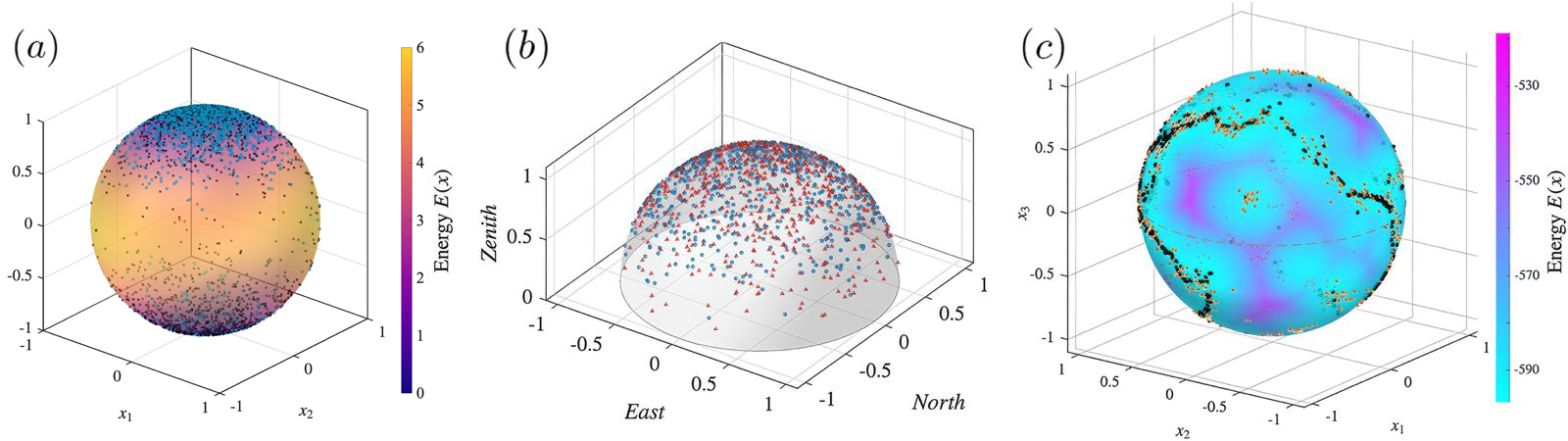}
    \caption{All cases utilize $\mu_\mathcal{M}(x) = 1/4\pi$. $(a)$ The generated samples ($\bullet$) effectively sit at the minimum energy regions, while geometric Langevin MCMC ($\textcolor{cyan}{\bullet}$) stays biased to the initial start $\delta_{(0,0,1)}$. $(b)$ shows the cosmic ray arrival directions (sky projection). The synthetic data ($\textcolor{red}{\blacktriangle}$) obtained with R--ASBS exhibit fidelity to the test data ($\textcolor{cyan}{\bullet}$), recovering the sparse distribution with Zenithal clustering. Finally, $(c)$ illustrates how the sampled points ($\textcolor{orange}{\blacktriangle}$) accumulate near the real data $\mathcal{D}_V$ ($\bullet$).
    }
    \label{fig: sphere_semisphere}
\end{figure}
A natural and geometrically intuitive application of R--ASBS lies in the unit sphere $\mathbb{S}^2 \subset \R^3$. 
This manifold finds direct employment in Earth and climate sciences, where events are intrinsically distributed on spherical surfaces (see benchmark of \citet{thornton2022riemannian}).
In particular, Algorithm~\ref{alg: asbs_m} is useful for evaluating probabilistic models of natural phenomena, such as earthquakes, floods, cosmic ray arrivals, and wildfires, by sampling from the model and directly comparing the generated data with real-world observations.

We first evaluate R--ASBS on a known analytical distribution with energy $E(x_1, x_2, x_3) := 6 (1-x_3^2)$, comparing it against a geometric Langevin MCMC baseline \citep{cheng2022efficient}. As shown in Figure~\ref{fig: sphere_semisphere}-$(a)$, the MCMC sampler exhibits characteristic slow mixing and initialization bias over a finite number of iterations. Due to the symmetry of $E$, the target assigns equal mass to the northern and southern hemispheres. Instead, $86\%$ of the MCMC particles become trapped in a single energy basin, failing to cross the high-energy equatorial region. In contrast, R--ASBS allocates $43.8\%$ of its particles to the northern hemisphere, close to the theoretical ratio of $0.5$.

Having established consistency of the sampler, we proceed to validate two empirical physical models: $(i)$ global earthquake distribution on $\mathbb{S}^2$; $(ii)$ cosmic ray arrivals to the hemisphere\footnote{The hemisphere is treated as a sphere, where reflection $x_3 \gets |x_3|$ is applied.} $\mathbb{S}_+^2 = \{x \in \mathbb{S}^2: x_3 \geq 0 \}$. Let $\mathcal{D}, \mathcal{D}_T, \mathcal{D}_V$ denote respectively the full dataset\footnote{The cosmic-ray dataset $SD1500$ is taken from \citet{PierreAugerOpenData}. For the earthquake dataset, we considered worldwide seismic events with magnitude $M \geq 6.5$ in the time window $1526-2026$, from \citet{USGSComCat}.}, training data and validation data.
To cast empirical evidence into a Boltzmann target $\nu_\mathcal{M} (x) \propto e^{- E_{\mathcal{D}_T} (x)}$, we define the energy landscape using a von Mises-Fisher kernel density estimator $E_{\mathcal{D}_T} (x) := - \log \Big( \frac{1}{N_{\text{train}}} \sum_{j=1}^{N_{\text{train}}} \exp(\kappa \langle x, z_j \rangle) \Big)$, where $\kappa$ acts as a smoothing parameter. 
Results are shown in Figure~\ref{fig: sphere_semisphere}-$(b),(c)$.

\subsection{Stiefel Manifold and Matrix Orthogonality Condition} \label{subsec: estimation_expectation}
In random matrix theory and quantum mechanics, elements of the Stiefel manifold (Appendix~\ref{app: app_stiefel_manifold}) represent an orthogonal basis of $p$ wavefunctions embedded in an $n$-dimensional Hilbert space. For a fixed symmetric matrix $H \in \R^{n \times n}$ representing the system Hamiltonian, the total energy of the state $X$ is given by the trace of the projected Hamiltonian $E(X) := \tr (X^\T H X)$.

When coupled to a thermal reservoir at temperature $T_r$, the states distribute according to the canonical Gibbs-Boltzmann distribution on the manifold
\begin{equation}
    \nu_{St(n,p)}(X) = \frac{1}{Z} \exp \left( -\frac{E(X)}{k_B T_r} \right)
    =  \frac{1}{Z} \exp \left( -\beta \ \tr (X^\T H X) \right), \qquad \beta := 1/(k_B T_r).
\label{eq: matrix_distribution}
\end{equation}
\noindent
\begin{minipage}[t]{0.56\linewidth}
\vspace{0pt}
The Euclidean gradient is $\nabla_X E(X) = 2 H X$, while the gradient of the negative log-density is $2 \beta H X$.

In the high-temperature regime $T_r\to\infty$ ($\beta\to0$), the target distribution converges to the uniform invariant probability measure on $St(n,p)$.
By its isotropy, the energy expectation is exactly $\mathbb{E}_{X \sim \mathrm{Unif}} [\tr(X^\T H X)] = \frac{p}{n} \tr H$.
If $N_s$ denotes the number of matrices sampled from \eqref{eq: matrix_distribution}, then by the law of large numbers
\begin{equation}
     \lim_{\substack{\beta \to 0 \\ N_s \to \infty}} \ \left \vert \frac{1}{N_s} \sum_{k=1}^{N_s} \tr(X_k^\T H X_k) - \frac{p}{n} \tr H  \right\vert = 0.
    \label{eq: frobenius_convergence}
\end{equation}
Next, let $\lambda_1 \le \lambda_2 \le \dots \le \lambda_n, \ \{\lambda_i \}_{i=1}^n = \mathrm{mspec}(H)$.
We now evaluate the expected total energy, i.e., the expected value under \eqref{eq: matrix_distribution} of the Rayleigh-Ritz trace $\langle E \rangle = \ \tr (X^\T H X)$. 
As $T_r \to 0$ ($\beta \to \infty$), the Boltzmann density concentrates entirely around the global minima of $E(X)$. By the Ky Fan minimum principle \citep{fan1949theorem}, this minimum is achieved when the columns of $X$ span the eigenspace of the $p$ lowest eigenvalues of $H$. That is, 
\end{minipage}
\hfill
\begin{minipage}[t]{0.41\linewidth}
\vspace{0pt}
\centering
\includegraphics[width=\linewidth]{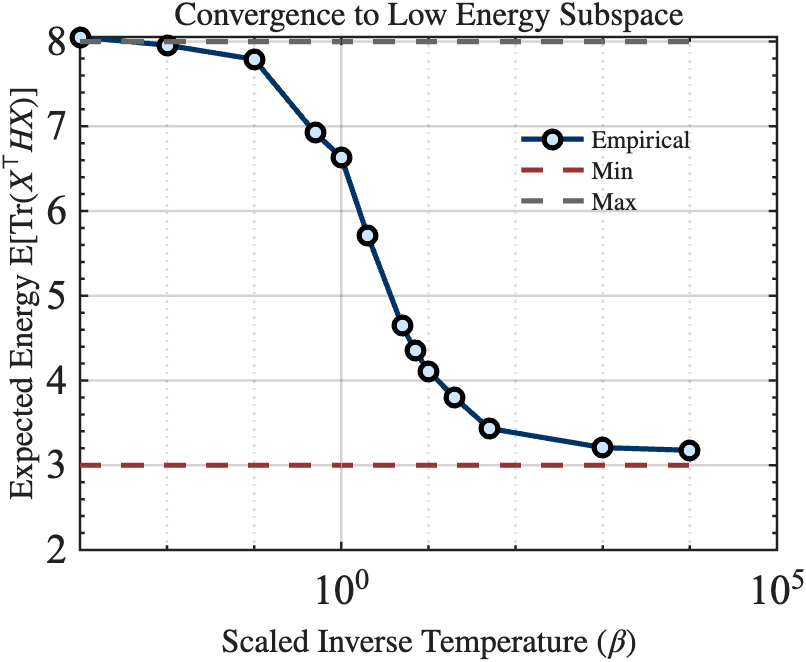}
\captionof{figure}{
The expectation was approximated by averaging the energy over $N_s = 5000$ orthogonal matrices sampled from \eqref{eq: matrix_distribution}.
The figure confirms the properties established in \eqref{eq: frobenius_convergence} and \eqref{eq: minimum_energy}. Gray and red dashed lines respectively denote $\sum_{i=1}^p \lambda_i$, and $\frac{p}{n} \tr H$, where $\mathrm{spec}(H) = \{1, 2, 5, 8\}$.}
\label{fig: stiefel_manifold}
\end{minipage}
\begin{equation}
    \lim_{\beta \to \infty} \mathbb{E} [ \tr (X^\T H X)] = \min_{X \in St(n,p)} \tr (X^\T H X) = \sum_{i=1}^p \lambda_i.
\label{eq: minimum_energy}
\end{equation}
Applying Algorithm~\ref{alg: extended_asbs_m} with $H$ defined in Appendix~\ref{app: app_stiefel_manifold} gives the results presented in Figure~\ref{fig: stiefel_manifold}.
As expected from standard statistical mechanics, we observe that $\frac{\diff }{\diff \beta} \langle E\rangle_\beta = - \mathrm{Var}_\beta (E) \leq 0$.

\subsection{High-Dimensional Redundant Inverse Kinematics for Closed-Loop Chains and Obstacle Avoidance}
To evaluate Extended R--ASBS on a high-dimensional space without geometric primitives, we consider the configuration space of closed-loop kinematic chains. Such systems frequently arise in parallel manipulators, multi-finger grasping, and cooperative multi-robot transport. 
We focus on generating diverse, valid configurations (static poses) avoiding a set of static obstacles.

Let $q \in \mathbb{T}^d$ denote the generalized joint coordinates\footnote{In this example, the ambient configuration space is the flat torus $\mathbb T^d$. Equivalently, computations are performed in an angular chart with all coordinates understood modulo \(2\pi\).} of the unconstrained articulated system.
The configuration space is defined by the nonlinear loop-closure constraints\footnote{We practically substitute $c_3(q)$ in \eqref{eq: robot_constraint} with $c_3(q) = \mathrm{atan}2 \left(\sin(\sum_i q_i - \vartheta_T), \cos(\sum_i q_i - \vartheta_T) \right)$.} $c(q)=0$, which enforce coincidence of the end-effector connection points.
In the scenario presented in Figure~\ref{fig: result_robotics}$(a)$, the robot is anchored at the base $(0,0)$, operates in a $2$D workspace, and
\begin{equation}
    c(q) = \begin{bmatrix}
        \sum_{i=1}^{10} L_i \cos \left(\sum_{j=1}^{i} q_j\right)- x_T\\
        \sum_{i=1}^{10} L_i \sin  \left(\sum_{j=1}^{i} q_j\right) - y_T\\
        \sum_{i=1}^{10} q_i - \vartheta_T
    \end{bmatrix},
    \label{eq: robot_constraint}
\end{equation}
where $(x_T, y_T)$ and $ \vartheta_T$ denote the target position of the end-effector block and its required orientation, respectively.
The energy landscape evaluates the distance from the links to the obstacles $\{ \mathcal{O}_k \}_{k=1}^{K}$
\begin{equation}
    E(q) = \sum_{k=1}^{K} \alpha \exp\left( - \frac{\mathrm{dist}(q, \mathcal{O}_k)^2}{2 \sigma_\mathrm{obs}^2} \right) + \lambda \| q - q_\mathrm{rest} \|^2, \notag
\end{equation}
where $\mathrm{dist}(\cdot)$ computes the minimum Euclidean distance between the robot's forward kinematic links and the obstacle centers, $\alpha$ scales the repulsion barrier, and $\lambda$ acts as a weak prior pulling the robot toward a comfortable resting posture $q_\mathrm{rest}$; see Figure~\ref{fig: result_robotics}$(b),(c)$.
\begin{figure}[!t]
\centering
\includegraphics[width=1.0\columnwidth]{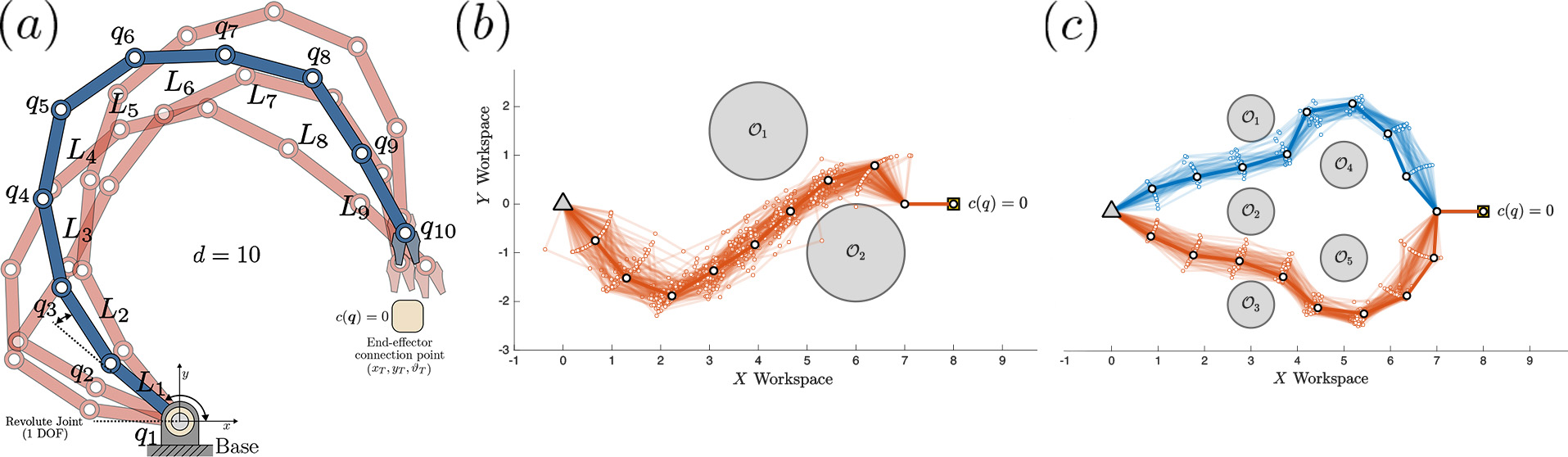}
\caption{$(a)$ Redundant planar closed-loop manipulator with $10$ revolute joints. $(b)$ The problem has obstacles $\mathcal{O}_1 := (4, 1.5)$, $\mathcal{O}_2 := (6, -1)$, with weak prior weight $\lambda = 0.05$. 
$(c)$ Same parameters, but with a different obstacle configuration. Because of the symmetry of the obstacles' displacement, the system exhibits two equiprobable modes (\textcolor{orange}{-} and \textcolor{blue}{-}). For illustration, random poses are highlighted.}
\label{fig: result_robotics}
\end{figure}

\subsection{The Wahba Problem with Outliers}
The Wahba problem \citep{wahba1965least, chin2019star}, also known as rotation search, is a fundamental problem in computer vision, robotics and aerospace engineering. It consists of finding the rotation between two coordinate frames given $N$ pairs of vector observations $a_i, b_i \in \R^3$, expressed in the two frames. 
Because many applications include severe percentages of outliers, we consider its robust formulation \citep{yang2019quaternion, yang2020teaser}, based on truncated least squares (TLS):
\begin{equation}
    \min_{\boldsymbol{R} \in \mathrm{SO}(3)} \sum_{i=1}^N \min \left( \frac{1}{\alpha_i^2} \|b_i - \boldsymbol{R} a_i\|^2, \bar{c}^2 \right),
    \label{eq:robust_wahba_problem}
\end{equation}
where $\alpha_i^2, \bar{c}^2$ are constants chosen by the designer \citep{yang2019quaternion}.
We adopt a quaternion formulation for \eqref{eq:robust_wahba_problem}. Unit quaternions are an alternative expression for $3$D rotations \citep{breckenridge1999quaternions} and are defined as a unit column vector $\boldsymbol{q} = [v^\T \ s]^\T$, where $v \in \R^3$ and $s$ are the vector and scalar parts, respectively. The set of quaternions is the $3$-Sphere manifold $\mathbb{S}^3 = \{\boldsymbol{q}\in \R^4 : \|\boldsymbol{q}\|=1 \}$. 
Define $\nu_{\mathbb{S}^3}(\boldsymbol{q}) \propto e^{-\beta J(\boldsymbol{q})}$, where $J$ is the objective of \eqref{eq:robust_wahba_problem} and $\beta > 0$ scale this energy-reward function.
Assume we have access to a large number of samples $N_s$, $\mathcal{B} := \{\boldsymbol{q}^{(i)}\}_{i=1}^{N_s} \sim \nu_{\mathbb{S}^3}, N_s \gg 1$. Then, $\boldsymbol{q}^\diamond := \argmin_{\boldsymbol{q} \in \mathcal{B}} J(\boldsymbol{q})$ would intuitively offer a close-to-optimal solution to \eqref{eq:robust_wahba_problem}.
Therefore, in this specific task, Algorithm~\ref{alg: extended_asbs_m} acts as a stochastic optimizer (Figure~\ref{fig: wahba_problem}).

Results are summarized in Table~\ref{tab: asbsm_wahba_results} and Appendix~\ref{app: app_robust_Wahba_problem}.
The table reports TLS objective values for $\boldsymbol{q}^\diamond$, and data are reported as means with standard deviations over $5$ independent repeats. It also reports rotation errors relative to the ground-truth rotation ($\vec{u} := [0.35; -0.75; 0.56], \vartheta := 72 \deg$), together with the number of true inliers recovered as active and true outliers clipped.
\begin{table}[H]
\centering
\caption{\underline{Extended} R--ASBS performance across outlier ratios
($N_s=10{,}000$, $\beta=1$).}
\label{tab: asbsm_wahba_results}
\vspace{-6pt}

\small
\newcommand{\sd}[1]{\graysd{#1}}

\renewcommand{\arraystretch}{1.05}
\setlength{\tabcolsep}{4.5pt}
\setlength{\aboverulesep}{1pt}
\setlength{\belowrulesep}{1pt}

\begin{adjustbox}{max width=\textwidth}
\begin{tabular}{c c c c c c c}
\toprule
Out. &
True TLS &
R--ASBS TLS &
TLS gap (\%)  &
R--ASBS err. (deg) &
Inliers active (\%) &
Out. clipped (\%)\\
\midrule
$25\%$
& $3439.58$ \sd{29.44}
& $3444.48$ \sd{29.05}
& $0.143$ \sd{0.146}
& $0.806$ \sd{0.246}
& $97.84$ \sd{0.59}
& $98.72$ \sd{0.66} \\

$50\%$
& $4832.57$ \sd{31.46}
& $4833.15$ \sd{31.23}
& $0.012$ \sd{0.017}
& $0.733$ \sd{0.258}
& $97.88$ \sd{1.01}
& $98.44$ \sd{0.57} \\

$75\%$
& $6285.06$ \sd{47.37}
& $6283.50$ \sd{47.87}
& $-0.025$ \sd{0.025}
& $1.119$ \sd{0.316}
& $98.16$ \sd{0.78}
& $97.95$ \sd{0.50} \\

$85\%$
& $6874.01$ \sd{22.90}
& $6873.50$ \sd{23.16}
& $-0.007$ \sd{0.012}
& $0.892$ \sd{0.688}
& $98.27$ \sd{0.76}
& $98.12$ \sd{0.35} \\

$90\%$
& $7164.73$ \sd{32.43}
& $7161.57$ \sd{33.36}
& $-0.044$ \sd{0.052}
& $1.253$ \sd{0.429}
& $97.20$ \sd{2.05}
& $98.13$ \sd{0.30} \\

$95\%$
& $7446.11$ \sd{12.51}
& $7441.82$ \sd{10.99}
& $-0.058$ \sd{0.029}
& $2.245$ \sd{1.111}
& $96.00$ \sd{2.83}
& $97.73$ \sd{0.40} \\
\bottomrule
\end{tabular}
\end{adjustbox}
\end{table}

\section{Conclusion and Future Work}
We proposed a theoretically supported enlargement of the ASBS sampler to settings where the unnormalized probability distribution is intrinsically defined on non-Euclidean spaces.
The continuous-time formulation is justified through a manifold SOC--SB equivalence, while the practical R--ASBS algorithm is derived as a geometry-aware approximation of the resulting matching conditions. Extensive numerical results demonstrate the efficacy of the method.
An interesting direction, that by now we leave as future work, is expanding the theoretical certificates on convergence.

\section*{Acknowledgments}
The authors would like to thank Paolo Giaretta and Tong Wang for their helpful discussions and comments. They also express their gratitude to Luigi Tesio for his assistance in designing Figure~\ref{fig: result_robotics}.

\bibliography{refs}

@article{fan1949theorem,
  author  = {Fan, Ky},
  title   = {On a Theorem of {Weyl} Concerning Eigenvalues of Linear Transformations {I}},
  journal = {Proceedings of the National Academy of Sciences},
  volume  = {35},
  number  = {11},
  pages   = {652--655},
  year    = {1949},
  doi     = {10.1073/pnas.35.11.652}
}

@book{absil2008optimization,
  author    = {Absil, P.-A. and Mahony, Robert and Sepulchre, Rodolphe},
  title     = {Optimization Algorithms on Matrix Manifolds},
  publisher = {Princeton University Press},
  year      = {2008}
}

@article{grigoryan1997gaussian,
  author  = {Grigor'yan, Alexander},
  title   = {Gaussian Upper Bounds for the Heat Kernel on Arbitrary Manifolds},
  journal = {Journal of Differential Geometry},
  volume  = {45},
  number  = {1},
  pages   = {33--52},
  year    = {1997},
  doi     = {10.4310/jdg/1214459753}
}

@article{jamison1975markov,
  author  = {Jamison, Benton},
  title   = {The {Markov} Processes of {Schr{\"o}dinger}},
  journal = {Zeitschrift f{\"u}r Wahrscheinlichkeitstheorie und Verwandte Gebiete},
  volume  = {32},
  number  = {4},
  pages   = {323--331},
  year    = {1975},
  doi     = {10.1007/BF00535844}
}

@book{ikeda1989stochastic,
  author    = {Ikeda, Nobuyuki and Watanabe, Shinzo},
  title     = {Stochastic Differential Equations and Diffusion Processes},
  edition   = {2},
  series    = {North-Holland Mathematical Library},
  volume    = {24},
  publisher = {North-Holland},
  address   = {Amsterdam},
  year      = {1989}
}

@book{hsu2002stochastic,
  author    = {Hsu, Elton P.},
  title     = {Stochastic Analysis on Manifolds},
  series    = {Graduate Studies in Mathematics},
  volume    = {38},
  publisher = {American Mathematical Society},
  address   = {Providence, RI},
  year      = {2002},
  doi       = {10.1090/gsm/038}
}

@article{thalmaier1998remarks,
  author  = {Thalmaier, Anton},
  title   = {Some Remarks on the Heat Flow for Functions and Forms},
  journal = {Electronic Communications in Probability},
  volume  = {3},
  pages   = {43--49},
  year    = {1998},
  doi     = {10.1214/ECP.v3-992}
}

@article{coulibaly2011brownian,
  author  = {Coulibaly-Pasquier, Kol{\'e}h{\`e} A.},
  title   = {{Brownian} Motion with Respect to Time-Changing {Riemannian} Metrics, Applications to {Ricci} Flow},
  journal = {Annales de l'Institut Henri Poincar{\'e}, Probabilit{\'e}s et Statistiques},
  volume  = {47},
  number  = {2},
  pages   = {515--538},
  year    = {2011},
  doi     = {10.1214/10-AIHP364}
}

@book{lee2018introduction,
  author    = {Lee, John M.},
  title     = {Introduction to {Riemannian} Manifolds},
  series    = {Graduate Texts in Mathematics},
  volume    = {176},
  edition   = {2},
  publisher = {Springer},
  address   = {Cham},
  year      = {2018},
  doi       = {10.1007/978-3-319-91755-9}
}

@article{lang2007bayesian,
  author  = {Lang, Lixin and Chen, Wen-shiang and Bakshi, Bhavik R. and Goel, Prem K. and Ungarala, Sridhar},
  title   = {{Bayesian} Estimation via Sequential {Monte Carlo} Sampling---Constrained Dynamic Systems},
  journal = {Automatica},
  volume  = {43},
  number  = {9},
  pages   = {1615--1622},
  year    = {2007},
  doi     = {10.1016/j.automatica.2007.02.012}
}

@article{li2015efficient,
  author  = {Li, Yifang and Ghosh, Sujit K.},
  title   = {Efficient Sampling Methods for Truncated Multivariate Normal and {Student-t} Distributions Subject to Linear Inequality Constraints},
  journal = {Journal of Statistical Theory and Practice},
  volume  = {9},
  number  = {4},
  pages   = {712--732},
  year    = {2015},
  doi     = {10.1080/15598608.2014.996690}
}

@article{girolami2011riemann,
  author  = {Girolami, Mark and Calderhead, Ben},
  title   = {Riemann Manifold {Langevin} and {Hamiltonian} {Monte Carlo} Methods},
  journal = {Journal of the Royal Statistical Society: Series B (Statistical Methodology)},
  volume  = {73},
  number  = {2},
  pages   = {123--214},
  year    = {2011},
  doi     = {10.1111/j.1467-9868.2010.00765.x}
}

@inproceedings{brubaker2012family,
  author    = {Brubaker, Marcus and Salzmann, Mathieu and Urtasun, Raquel},
  title     = {A Family of {MCMC} Methods on Implicitly Defined Manifolds},
  booktitle = {Proceedings of the Fifteenth International Conference on Artificial Intelligence and Statistics},
  series    = {Proceedings of Machine Learning Research},
  volume    = {22},
  pages     = {161--172},
  publisher = {PMLR},
  year      = {2012},
  url       = {https://proceedings.mlr.press/v22/brubaker12.html}
}

@inproceedings{debortoli2022riemannian,
  author    = {De Bortoli, Valentin and Mathieu, Emile and Hutchinson, Michael and Thornton, James and Teh, Yee Whye and Doucet, Arnaud},
  title     = {{Riemannian} Score-Based Generative Modelling},
  booktitle = {Advances in Neural Information Processing Systems},
  volume    = {35},
  pages     = {2406--2422},
  year      = {2022}
}

@inproceedings{lou2023reflected,
  author    = {Lou, Aaron and Ermon, Stefano},
  title     = {Reflected Diffusion Models},
  booktitle = {Proceedings of the 40th International Conference on Machine Learning},
  series    = {Proceedings of Machine Learning Research},
  volume    = {202},
  pages     = {22675--22701},
  publisher = {PMLR},
  year      = {2023},
  url       = {https://proceedings.mlr.press/v202/lou23a.html}
}

@techreport{breckenridge1999quaternions,
  author      = {Breckenridge, William G.},
  title       = {Quaternions Proposed Standard Conventions},
  institution = {Jet Propulsion Laboratory},
  type        = {Interoffice Memorandum},
  number      = {IOM 343-79-1199},
  year        = {1999}
}

@inproceedings{yang2019quaternion,
  author    = {Yang, Heng and Carlone, Luca},
  title     = {A Quaternion-Based Certifiably Optimal Solution to the {Wahba} Problem With Outliers},
  booktitle = {Proceedings of the IEEE/CVF International Conference on Computer Vision (ICCV)},
  pages     = {1665--1674},
  year      = {2019},
  doi       = {10.1109/ICCV.2019.00175}
}

@article{yang2020teaser,
  author  = {Yang, Heng and Shi, Jingnan and Carlone, Luca},
  title   = {{TEASER}: Fast and Certifiable Point Cloud Registration},
  journal = {IEEE Transactions on Robotics},
  volume  = {37},
  number  = {2},
  pages   = {314--333},
  year    = {2021},
  doi     = {10.1109/TRO.2020.3033695}
}

@inproceedings{chin2019star,
  author    = {Chin, Tat-Jun and Bagchi, Samya and Eriksson, Anders and van Schaik, Andre},
  title     = {Star Tracking Using an Event Camera},
  booktitle = {Proceedings of the IEEE/CVF Conference on Computer Vision and Pattern Recognition Workshops (CVPRW)},
  pages     = {1646--1655},
  year      = {2019},
  doi       = {10.1109/CVPRW.2019.00208}
}

@article{wahba1965least,
  author  = {Wahba, Grace},
  title   = {A Least Squares Estimate of Satellite Attitude},
  journal = {SIAM Review},
  volume  = {7},
  number  = {3},
  pages   = {409},
  year    = {1965},
  doi     = {10.1137/1007077}
}

@book{grigoryan2009heat,
  author    = {Grigor'yan, Alexander},
  title     = {Heat Kernel and Analysis on Manifolds},
  series    = {AMS/IP Studies in Advanced Mathematics},
  volume    = {47},
  publisher = {American Mathematical Society and International Press},
  year      = {2009}
}

@misc{USGSComCat,
  author       = {{U.S. Geological Survey, Earthquake Hazards Program}},
  title        = {{Advanced National Seismic System (ANSS)} Comprehensive Catalog of Earthquake Events and Products},
  howpublished = {U.S. Geological Survey},
  year         = {2017},
  url          = {https://doi.org/10.5066/F7MS3QZH}
}

@misc{PierreAugerOpenData,
  author       = {{Pierre Auger Collaboration}},
  title        = {{Pierre Auger Observatory Open Data}},
  howpublished = {Zenodo},
  year         = {2024},
  url          = {https://doi.org/10.5281/zenodo.10488964}
}

@inproceedings{TzenRaginsky2019,
  author    = {Tzen, Belinda and Raginsky, Maxim},
  title     = {Theoretical Guarantees for Sampling and Inference in Generative Models with Latent Diffusions},
  booktitle = {Proceedings of the Thirty-Second Conference on Learning Theory},
  series    = {Proceedings of Machine Learning Research},
  volume    = {99},
  pages     = {3084--3114},
  publisher = {PMLR},
  year      = {2019},
  url       = {https://proceedings.mlr.press/v99/tzen19a.html}
}

@article{schrodinger1931umkehrung,
  author  = {Schr{\"o}dinger, Erwin},
  title   = {{\"U}ber die {Umkehrung} der {Naturgesetze}},
  journal = {Sitzungsberichte der Preussischen Akademie der Wissenschaften, Physikalisch-Mathematische Klasse},
  pages   = {144--153},
  year    = {1931}
}

@inproceedings{Guo2026DiscreteASBS,
  author    = {Guo, Wei and Zhu, Yuchen and Du, Xiaochen and Nam, Juno and Chen, Yongxin and G{\'o}mez-Bombarelli, Rafael and Liu, Guan-Horng and Tao, Molei and Choi, Jaemoo},
  title     = {Discrete Adjoint {Schr{\"o}dinger} Bridge Sampler},
  booktitle = {Forty-third International Conference on Machine Learning (ICML)},
  year      = {2026}
}

@article{cui2024lowrank,
  author  = {Cui, Tiangang and Gorodetsky, Alex A.},
  title   = {Low-rank {Bayesian} Matrix Completion via Geodesic {Hamiltonian Monte Carlo} on {Stiefel} Manifolds},
  journal = {Foundations of Data Science},
  year    = {2026},
  note    = {Early Access},
  doi     = {10.3934/fods.2026015}
}

@article{pourzanjani2021bayesian,
  author  = {Pourzanjani, Arya A. and Jiang, Richard M. and Mitchell, Brian and Atzberger, Paul J. and Petzold, Linda R.},
  title   = {{Bayesian} Inference over the {Stiefel} Manifold via the {Givens} Representation},
  journal = {Bayesian Analysis},
  volume  = {16},
  number  = {2},
  pages   = {639--666},
  year    = {2021},
  doi     = {10.1214/20-BA1202}
}

@article{jauch2021bayesian,
  author  = {Jauch, Michael and Hoff, Peter D. and Dunson, David B.},
  title   = {{Monte Carlo} Simulation on the {Stiefel} Manifold via Polar Expansion},
  journal = {Journal of Computational and Graphical Statistics},
  volume  = {30},
  number  = {3},
  pages   = {622--631},
  year    = {2021},
  doi     = {10.1080/10618600.2020.1859382}
}

@inproceedings{havens2025adjoint,
  author    = {Havens, Aaron J. and Miller, Benjamin Kurt and Yan, Bing and Domingo-Enrich, Carles and Sriram, Anuroop and Levine, Daniel S. and Wood, Brandon M. and Hu, Bin and Amos, Brandon and Karrer, Brian and Fu, Xiang and Liu, Guan-Horng and Chen, Ricky T. Q.},
  title     = {Adjoint Sampling: Highly Scalable Diffusion Samplers via Adjoint Matching},
  booktitle = {Proceedings of the 42nd International Conference on Machine Learning},
  series    = {Proceedings of Machine Learning Research},
  volume    = {267},
  pages     = {22204--22237},
  publisher = {PMLR},
  year      = {2025},
  url       = {https://proceedings.mlr.press/v267/havens25a.html}
}

@inproceedings{domingoenrich2025adjoint,
  author    = {Domingo-Enrich, Carles and Drozdzal, Michal and Karrer, Brian and Chen, Ricky T. Q.},
  title     = {Adjoint Matching: Fine-tuning Flow and Diffusion Generative Models with Memoryless Stochastic Optimal Control},
  booktitle = {International Conference on Learning Representations (ICLR)},
  year      = {2025},
  url       = {https://openreview.net/forum?id=xQBRrtQM8u}
}

@inproceedings{guan2025riemannianproximal,
  author    = {Guan, Yunrui and Balasubramanian, Krishnakumar and Ma, Shiqian},
  title     = {{Riemannian} Proximal Sampler for High-Accuracy Sampling on Manifolds},
  booktitle = {Advances in Neural Information Processing Systems},
  volume    = {38},
  year      = {2025}
}

@inproceedings{havens2026flowsampling,
  author    = {Havens, Aaron J. and Karrer, Brian and Shaul, Neta},
  title     = {Flow Sampling: Learning to Sample from Unnormalized Densities via Denoising Conditional Processes},
  booktitle = {Forty-third International Conference on Machine Learning (ICML)},
  year      = {2026}
}

@inproceedings{noble2023unbiased,
  author    = {Noble, Maxence and De Bortoli, Valentin and Durmus, Alain},
  title     = {Unbiased Constrained Sampling with Self-Concordant Barrier {Hamiltonian Monte Carlo}},
  booktitle = {Advances in Neural Information Processing Systems},
  volume    = {36},
  year      = {2023}
}

@inproceedings{cheng2022efficient,
  author    = {Cheng, Xiang and Zhang, Jingzhao and Sra, Suvrit},
  title     = {Efficient Sampling on {Riemannian} Manifolds via {Langevin} {MCMC}},
  booktitle = {Advances in Neural Information Processing Systems},
  volume    = {35},
  year      = {2022}
}

@misc{thornton2022riemannian,
  author       = {Thornton, James and Hutchinson, Michael and Mathieu, Emile and De Bortoli, Valentin and Teh, Yee Whye and Doucet, Arnaud},
  title        = {{Riemannian} Diffusion {Schr{\"o}dinger} Bridge},
  howpublished = {arXiv:2207.03024},
  year         = {2022},
  url          = {https://arxiv.org/abs/2207.03024}
}

@inproceedings{deng2024reflected,
  author    = {Deng, Wei and Chen, Yu and Yang, Nicole Tianjiao and Du, Hengrong and Feng, Qi and Chen, Ricky Tian Qi},
  title     = {Reflected {Schr{\"o}dinger} Bridge for Constrained Generative Modeling},
  booktitle = {Proceedings of the Fortieth Conference on Uncertainty in Artificial Intelligence},
  series    = {Proceedings of Machine Learning Research},
  volume    = {244},
  pages     = {1055--1082},
  publisher = {PMLR},
  year      = {2024},
  url       = {https://proceedings.mlr.press/v244/deng24b.html}
}

@inproceedings{caluya2020reflected,
  author    = {Caluya, Kenneth F. and Halder, Abhishek},
  title     = {Reflected {Schr{\"o}dinger} Bridge: Density Control with Path Constraints},
  booktitle = {2021 American Control Conference (ACC)},
  pages     = {1137--1142},
  year      = {2021},
  doi       = {10.23919/ACC50511.2021.9482813}
}

@inproceedings{liu2023mirror,
  author    = {Liu, Guan-Horng and Chen, Tianrong and Theodorou, Evangelos and Tao, Molei},
  title     = {Mirror Diffusion Models for Constrained and Watermarked Generation},
  booktitle = {Advances in Neural Information Processing Systems},
  volume    = {36},
  year      = {2023}
}

@inproceedings{fishman2023metropolis,
  author    = {Fishman, Nic and Klarner, Leo and Mathieu, Emile and Hutchinson, Michael and De Bortoli, Valentin},
  title     = {Metropolis Sampling for Constrained Diffusion Models},
  booktitle = {Advances in Neural Information Processing Systems},
  volume    = {36},
  year      = {2023}
}

@article{fishman2023diffusion,
  author  = {Fishman, Nic and Klarner, Leo and De Bortoli, Valentin and Mathieu, Emile and Hutchinson, Michael John},
  title   = {Diffusion Models for Constrained Domains},
  journal = {Transactions on Machine Learning Research},
  year    = {2023},
  url     = {https://openreview.net/forum?id=xuWTFQ4VGO}
}

@inproceedings{christopher2024constrained,
  author    = {Christopher, Jacob K. and Baek, Stephen and Fioretto, Ferdinando},
  title     = {Constrained Synthesis with Projected Diffusion Models},
  booktitle = {Advances in Neural Information Processing Systems},
  volume    = {37},
  year      = {2024}
}

@inproceedings{chamon2024constrained,
  author    = {Chamon, Luiz F. O. and Karimi, Mohammad Reza and Korba, Anna},
  title     = {Constrained Sampling with Primal-Dual {Langevin} {Monte Carlo}},
  booktitle = {Advances in Neural Information Processing Systems},
  volume    = {37},
  year      = {2024}
}

@inproceedings{blanke2026strictly,
  author    = {Blanke, Matthieu and Qu, Yongquan and Shamekh, Sara and Gentine, Pierre},
  title     = {Strictly Constrained Generative Modeling via Split Augmented {Langevin} Sampling},
  booktitle = {International Conference on Learning Representations (ICLR)},
  year      = {2026}
}

@inproceedings{liu2025adjoint,
  author    = {Liu, Guan-Horng and Choi, Jaemoo and Chen, Yongxin and Miller, Benjamin Kurt and Chen, Ricky T. Q.},
  title     = {Adjoint {Schr{\"o}dinger} Bridge Sampler},
  booktitle = {Advances in Neural Information Processing Systems},
  volume    = {38},
  year      = {2025}
}

@article{delmoral2006sequential,
  author  = {Del Moral, Pierre and Doucet, Arnaud and Jasra, Ajay},
  title   = {Sequential {Monte Carlo} Samplers},
  journal = {Journal of the Royal Statistical Society: Series B (Statistical Methodology)},
  volume  = {68},
  number  = {3},
  pages   = {411--436},
  year    = {2006},
  doi     = {10.1111/j.1467-9868.2006.00553.x}
}

@article{kullback1951information,
  author  = {Kullback, Solomon and Leibler, Richard A.},
  title   = {On Information and Sufficiency},
  journal = {The Annals of Mathematical Statistics},
  volume  = {22},
  number  = {1},
  pages   = {79--86},
  year    = {1951},
  doi     = {10.1214/aoms/1177729694}
}

@inproceedings{zhang2022path,
  author    = {Zhang, Qinsheng and Chen, Yongxin},
  title     = {Path Integral Sampler: A Stochastic Control Approach for Sampling},
  booktitle = {International Conference on Learning Representations (ICLR)},
  year      = {2022},
  url       = {https://openreview.net/forum?id=_uCb2ynRu7Y}
}

@article{neal2001annealed,
  author  = {Neal, Radford M.},
  title   = {Annealed Importance Sampling},
  journal = {Statistics and Computing},
  volume  = {11},
  number  = {2},
  pages   = {125--139},
  year    = {2001},
  doi     = {10.1023/A:1008923215028}
}

@article{metropolis1953equation,
  author  = {Metropolis, Nicholas and Rosenbluth, Arianna W. and Rosenbluth, Marshall N. and Teller, Augusta H. and Teller, Edward},
  title   = {Equation of State Calculations by Fast Computing Machines},
  journal = {The Journal of Chemical Physics},
  volume  = {21},
  number  = {6},
  pages   = {1087--1092},
  year    = {1953},
  doi     = {10.1063/1.1699114}
}

@article{chen2016relation,
  author  = {Chen, Yongxin and Georgiou, Tryphon T. and Pavon, Michele},
  title   = {On the Relation Between Optimal Transport and {Schr{\"o}dinger} Bridges: A Stochastic Control Viewpoint},
  journal = {Journal of Optimization Theory and Applications},
  volume  = {169},
  number  = {2},
  pages   = {671--691},
  year    = {2016},
  doi     = {10.1007/s10957-015-0803-z}
}

@book{sarkka2019applied,
  author    = {S{\"a}rkk{\"a}, Simo and Solin, Arno},
  title     = {Applied Stochastic Differential Equations},
  series    = {Institute of Mathematical Statistics Textbooks},
  volume    = {10},
  publisher = {Cambridge University Press},
  year      = {2019},
  doi       = {10.1017/9781108186735}
}

@incollection{follmer1988random,
  author    = {F{\"o}llmer, Hans},
  title     = {Random Fields and Diffusion Processes},
  booktitle = {{\'E}cole d'{\'E}t{\'e} de Probabilit{\'e}s de Saint-Flour XV--XVII, 1985--87},
  series    = {Lecture Notes in Mathematics},
  volume    = {1362},
  pages     = {101--203},
  publisher = {Springer},
  address   = {Berlin},
  year      = {1988},
  doi       = {10.1007/BFb0086180}
}

@article{leonard2014survey,
  author  = {L{\'e}onard, Christian},
  title   = {A Survey of the {Schr{\"o}dinger} Problem and Some of Its Connections with Optimal Transport},
  journal = {Discrete and Continuous Dynamical Systems},
  volume  = {34},
  number  = {4},
  pages   = {1533--1574},
  year    = {2014},
  doi     = {10.3934/dcds.2014.34.1533}
}
\bibliographystyle{arxiv_style}

\newpage
\appendix
\addtocontents{toc}{\protect\setcounter{tocdepth}{2}}
\tableofcontents

\section{Related work}
\label{app: related_work}
It is noteworthy that in constrained sampling, we distinguish between two types of constraints: support constraints, which correspond to our case of study \eqref{eq: M_constraint}, and statistical constraints \citep{chamon2024constrained}.
A classical family of methods for the former relies on rejection sampling (see, e.g., \citep{lang2007bayesian, li2015efficient}). 
While such methods preserve feasibility, they become inefficient, in terms of number of samples generated per iteration, in high-dimensional spaces or under intricate nonlinear constraints. 
Alternative methods have been developed for target distributions supported on manifolds \citep{girolami2011riemann, brubaker2012family}, including split augmented Langevin sampling \citep{blanke2026strictly}, barrier Hamiltonian Monte Carlo \citep{noble2023unbiased}, Riemannian Langevin MCMC \citep{cheng2022efficient}, and Riemannian proximal sampling \citep{guan2025riemannianproximal}. 
These methods provide standardized methodologies for respecting geometric constraints, either by exploiting intrinsic differential geometry, augmenting the constrained dynamics, or using barrier/proximal constructions to avoid infeasible regions. Their main advantage is that they inherit the asymptotic correctness guarantees of classical MCMC-type samplers under appropriate regularity conditions.
However, they are typically iterative Markov-chain methods: they may require long mixing times in multimodal landscapes, constraint-solving steps, and careful tuning of step sizes or auxiliary dynamics. These costs become especially pronounced for high-dimensional, nonlinear, or sharply constrained manifolds (a better intuition of this behavior is given in Section~\ref{subsec: sphere_climate}).

In contrast, the present work follows a \emph{controlled-diffusion} perspective. Rather than constructing a Markov chain whose stationary distribution is $\nu_\mathcal{M}$, we learn a time-dependent \emph{tangent control field} that transports a tractable initial source distribution on $\mathcal{M}$ toward the target Boltzmann distribution at terminal time. 

With a similar philosophy, flow sampling \citep{havens2026flowsampling} provides a grounded framework for learning a diffusion process whose probability path interpolates between a source and a target density through a denoising conditional process and flow-matching-style regression. 
Despite its flexibility and the absence of a corrector network, flow sampling, even if admits a
Riemannian formulation, has closed-form conditional drifts currently restricted to constant-curvature manifolds such as hyperspheres and
hyperbolic spaces.
Instead, our paper formulates unnormalized sampling on embedded equality-constraint manifolds as a manifold SOC--SB problem, allowing more general constrained spaces and retaining an ASBS-style corrector for non-memoryless source distributions.

We also mention, as a complementary line of work, constrained generative modeling with diffusion processes. 
Starting from the availability of data, rather than access to the energy function and its gradient as assumed in the present paper, diffusion models are capable of synthesizing data living on Riemannian manifolds \citep{debortoli2022riemannian}. They may rely on barrier functions, mirror maps \citep{liu2023mirror}, reflected Brownian motion \citep{lou2023reflected, fishman2023diffusion}, or broader constraint constructions \citep{fishman2023metropolis, christopher2024constrained}.
Schr\"odinger bridge formulations have also been extended to settings with constraints \citep{caluya2020reflected, deng2024reflected} or non-Euclidean geometry \citep{thornton2022riemannian}.

\section{Adjoint Matching and Reciprocal Adjoint Matching}
\label{app: adjoint}
This appendix reviews the general continuous-adjoint formulation underlying Adjoint Matching, Adjoint Sampling, and Reciprocal Adjoint Matching.
\subsection{Adjoint Matching}
\label{app: adjoint_matching} 
The SOC problem considered in adjoint SB sampling takes the form:
\begin{equation}
    \min_{u \in \mathcal{U}} \ \mathbb{E}_{\boldsymbol{X} \sim p^u} \left[ \int_0^1 \frac12 \|u_t(X_t)\|^2 \diff t + g(X_1) \right],
    \qquad \text{s.t.} \quad \eqref{eq: flat_controlled_sde},
\label{eq: flat_SOC}
\end{equation}
where $g(x):\R^d \to \R$ is a terminal cost.

Scalable computational methods for solving \eqref{eq: flat_SOC} have been challenging, as naively back-propagating through controlled stochastic trajectories causes prohibitively high computational cost.
Instead, \citet{domingoenrich2025adjoint} presents \emph{Adjoint Matching}\footnote{Note \citet{domingoenrich2025adjoint} considers a more general setting including in the cost functional the running cost $r_t$, which is left for completeness but in our setting we may fix $r_t \equiv 0$. Moreover, no scalar assumption is imposed on $\sigma_t$.}, a regression-based method for solving entropy-regularized stochastic optimal control problems arising in reward fine-tuning of flow and diffusion generative models.
AM essentially combines the continuous adjoint method with least-squares control matching, turning \eqref{eq: flat_SOC} into a supervised regression problem over controlled trajectories.
To this end, we first define the adjoint state along a trajectory $\boldsymbol{X} \sim p^u$ as:
\begin{equation}
    a_t(\boldsymbol{X}, u) := \nabla_{X_t} \bigg[ \int_t^1  \bigg(\frac12 \|u_s(X_s)\|^2 + r_s(X_s) \bigg) \diff s + g(X_1) \bigg], \notag
\end{equation}
where $\boldsymbol{X}$ solves \eqref{eq: flat_controlled_sde}. This implies that 
\begin{equation}
    \mathbb{E}_{\boldsymbol{X} \sim p^u} [a_t(\boldsymbol{X}, u) \vert X_t = x] = \nabla_x J^u(x,t), \notag
\end{equation}
where $J^u$ is the cost functional at $(x,t)$ under the policy $u$.
It can be shown that $a_t \equiv a(t; \boldsymbol{X}_{[t,1]})$ satisfies the backward ODE equation:
\begin{equation}
    \frac{\diff}{\diff t} a_t(\boldsymbol{X},u) = -\nabla_{X_t} \big[(f_t + \sigma_t u_t)(X_t)\big]^T a_t(\boldsymbol{X},u) - \nabla_{X_t} \Big(r_t(X_t) + \frac12 \|u_t(X_t)\|^2 \Big),
\label{eq: adj_diff_eq}
\end{equation}
with boundary condition $a_1(\boldsymbol{X},u) = \nabla_{X_1} g(X_1)$.
The corresponding basic AM objective is
\begin{subequations}
\begin{align}
    &L_{\text{bAM}}(u; \boldsymbol{X}) := \frac{1}{2} \int_0^1 \| u_t(X_t) + \sigma_t a_t(\boldsymbol{X}, \bar{u})\|^2 \diff t, \\
    &\text{s.t.} \quad \boldsymbol{X} \sim p^{\bar{u}}, \quad \bar{u} = \texttt{stopgrad}(u),
\end{align}   
\label{eq: basic_AM}
\end{subequations}
where $\bar{u} = \texttt{stopgrad}(u)$ means that the gradients of $\bar{u}$ with respect to the parameters $\theta$ of the control $u$ are artificially set to zero. 
Via this operator, although $\boldsymbol{X} \sim p^{\bar{u}}$ is sampled according to the control process, AM does not differentiate through the sampling procedure.
Importantly, the only critical point of $\mathbb{E}[L_{\text{bAM}}]$ is the optimal control $u^\star$.
We remark that \eqref{eq: basic_AM} allows us to construct a matching objective without any importance weighting, circumventing the issue of high variance importance weights while retaining the interpretation of matching a vector field.

For practical implementation, to obviate the Jacobian of the control $\nabla_x u$ in \eqref{eq: adj_diff_eq}, AM replaces the full adjoint by the lean adjoint $\tilde{a}$.
Conceptually, AM directly tries to find the fixed point $u_t(x) = -\sigma_t \nabla_x J^u(x,t)$ by replacing $\nabla_x J^u(x,t)$ with a stochastic estimator, which is the lean adjoint.
Then, the unique solution to the fixed point relation
\begin{equation*}
    u_t (x) = - \sigma_t \mathbb{E}_{\boldsymbol{X} \sim p^u}[\tilde{a}_t(\boldsymbol{X}) \vert X_t = x]
\end{equation*}
is $u^\star$ (we defer the formal proofs to \citet[Appendix E]{domingoenrich2025adjoint}).
Thus, this motivates the following objective as a means to solve this fixed point problem:
\begin{subequations}
\begin{align}
    &L_{\text{AM}}(u) := \mathbb{E}_{p^{\bar{u}}} \left[ \frac{1}{2} \int_0^1 \| u_t(X_t) + \sigma_t \tilde{a}_t(\boldsymbol{X})\|^2 \diff t \right], \notag \\
    &\text{s.t.} \quad \boldsymbol{X} \sim p^{\bar{u}}, \quad \bar{u} = \texttt{stopgrad}(u), \\
    &\hphantom{\text{s.t.}} \quad \frac{\diff}{\diff t} \tilde{a}_t(\boldsymbol{X}) = - [\nabla_x f_t(X_t)]^\T \tilde{a}_t(\boldsymbol{X}) - \nabla_x r_t(X_t), \label{eq: ode_a_lean}\\
    &\hphantom{\text{s.t.}} \quad \tilde{a}_1(\boldsymbol{X}) = \nabla_{X_1} g(X_1). \label{eq: a_lean_terminal_cond} 
\end{align}   
\end{subequations}
Note in this variant, the backward ODE \eqref{eq: ode_a_lean} is driven only by the base system dynamics; thus, the neural network (NN) $u_t^\theta(x)$ has been completely excised from the backward integration.

\subsection{Adjoint Sampling and Reciprocal Adjoint Matching}
\label{app: adjoint_sampling}
\emph{Adjoint sampling} (AS) \citep{havens2025adjoint} specializes AM to the problem of sampling from an unnormalized Gibbs--Boltzmann distribution.
The method formulates sampling as a SOC problem over a controlled diffusion initialized from a degenerate source.

Let $p^\mathrm{base} (X_0, X_1)$ denote the joint endpoint density under the base process, and let $p_{1|0}^\mathrm{base} (y|x)$ be the corresponding transition density.
Rewriting \eqref{eq: flat_SOC} as $D_\mathrm{KL} (p^u \vert\vert p^{\mathrm{base}}) + \mathbb{E}_{X_1 \sim p_1^u}[g(X_1)]$ and then computing the analytical solution $p^\star(X_1|X_0) \propto p^\mathrm{base} (X_1|X_0) e^{-g(X_1)}$ and normalization $\int p^\mathrm{base} (X_1 | X_0)$ $e^{-g(X_1)} \diff X_1 = e^{-V_0(X_0)}$, leads to the closed-form optimal distribution \citep[Thm. 1]{TzenRaginsky2019}
\begin{equation}
    p^\star(X_0, X_1) = p^\mathrm{base} (X_0, X_1) e^{-g(X_1) + \myellow{V_0 (X_0)}},
    \label{eq: p_star_X0_X1}
\end{equation}
where $V_0(x) = - \log \int p_{1|0}^\mathrm{base} (y|x) e^{-g(y)} \diff y$ is the initial value function.

Indeed, marginalizing \eqref{eq: p_star_X0_X1} over $X_0$ gives the terminal distribution $p^\star(X_1)$, which generally depends not only on $g(X_1)$ but also on the nontrivial coupling between $X_0$ and $X_1$ under the base process.
Consequently, to ensure that the terminal marginal $p^\star(X_1)$ coincides with the target density $\nu(X_1)$, we must eliminate the extra factor induced by $\myellow{V_0(X_0)}$.

A common approach to discard the aforementioned initial value function bias is to impose a memoryless condition on the base process. Formally, it necessitates that the initial and terminal variables be statistically independent under the base process path measure
\begin{equation}
    p^\mathrm{base} (X_0, X_1) = p^\mathrm{base}(X_0) p^\mathrm{base}(X_1).
    \label{eq: memoryless_condition}
\end{equation}
Under \eqref{eq: memoryless_condition}, the initial value function becomes independent of $X_0$, simplifying the optimal distribution at terminal time $t=1$ to
\begin{align}
    p_1^\star(X_1) &= \int p^\mathrm{base} (X_0) p^\mathrm{base}(X_1) e^{-g(X_1) + \myellow{V_0(X_0)}} \diff X_0 \notag \\
    &\propto p^\mathrm{base} (X_1) e^{-g(X_1)} = \nu(X_1), \label{eq: g_terminal_cost}
\end{align}
where \eqref{eq: g_terminal_cost} follows from setting $g(x) := \log \frac{p_1^\mathrm{base} (x)}{\nu(x)}.$
Therefore, under \eqref{eq: memoryless_condition}, the desired target $\nu$ can be imposed through an explicit terminal cost.

Adjoint sampling chooses a memoryless base process by fixing a degenerate base drift $f_t :=0$, Dirac delta prior $\mu(x) := \delta_0(x)$, and $g(x)$ as above, leading to
\begin{subequations}
\label{eq: flat_SOC_AS}
\begin{align}
    &\min_{u \in \mathcal{U}} \ \mathbb{E}_{\boldsymbol{X} \sim p^u} \left[ \int_0^1 \frac12 \|u_t(X_t)\|^2 \diff t + \log \frac{p_1^{\mathrm{base}}(X_1)}{\nu(X_1)} \right], &\\
    &\text{s.t.} \quad \diff X_t = \sigma_t u_t (X_t) \diff t + \sigma_t \diff W_t, & \\
    &\hphantom{\text{s.t.}} \quad \ \ X_0 = 0. &
\end{align}
\end{subequations}
Observe \eqref{eq: flat_SB_SOC} differs from \eqref{eq: flat_SOC_AS} in the terminal cost and the relaxation of the source distribution from Dirac delta $X_0 = 0$ to a general source $\mu(X_0)$.
Nonetheless, \eqref{eq: memoryless_condition} severely restricts the choice of source distribution and base dynamics.
In practice, enforcing memorylessness typically requires either a degenerate $\mu$ or a strong noising mechanism that destroys useful dependence\footnote{For instance, in molecular Boltzmann sampling, a reasonable choice is to initialize $\mu$ as a Gaussian (harmonic approximation), to encode physically meaningful considerations.} between the initial and terminal states.

Since the base drift $f_t$ and the running cost $r_t$ are identically zero, the lean adjoint equation in \eqref{eq: ode_a_lean} simplifies significantly to $\tilde{a}_t = \nabla g(X_1), \ \forall t \in [0,1]$.
Therefore, this yields
\begin{equation}
\begin{split}
    &L_{\mathrm{AM}}(u) = \mathbb{E}_{p^{\bar{u}}} \left[ \frac{1}{2} \int_0^1 \| u_t(X_t) + \sigma_t \nabla g(X_1)\|^2 \diff t \right],\\
    &\boldsymbol{X} \sim p^{\bar{u}}, \quad \bar{u} = \texttt{stopgrad}(u).
    \label{eq: L_AM}
\end{split}
\end{equation}
Observe that for the general formulation \eqref{eq: flat_SOC}, AM demands two simulations, one to sample a trajectory from the stochastic process $\boldsymbol{X} \sim p^{\bar{u}}$ and one to solve the lean adjoint state backwards in time \eqref{eq: ode_a_lean} from a terminal condition at $t=1$ \eqref{eq: a_lean_terminal_cond}. However, in AS, no additional simulation of $\tilde{a}$ is required.
Essentially, here AM offers a simple interpretation: for each intermediate state $X_t$, simply regress the control onto the negative gradient of the terminal cost $-\nabla g(X_1)$ for all possible $X_1$ that can be reached from $X_t$. Since this is a moving target, this will, over the course of optimization, slowly shift the process towards regions with small terminal cost $g$.

Notwithstanding this, AM still involves two computationally expensive operations at every iteration. First, the simulation of the controlled process, and second, the evaluation of the terminal cost. The omission of both inefficiencies fundamentally motivates the development of \emph{Reciprocal Adjoint Matching}.

Since the training loss \eqref{eq: L_AM} only depends on endpoint-intermediate pairs $(X_t, X_1) \sim p_{t,1}^u$, rather than on the entire controlled trajectory $\boldsymbol{X}$, \eqref{eq: L_AM} is equivalent to sampling pairs $(X_t, X_1)$ from the joint distribution defined by $p_{t,1}^u$
\begin{equation}
    L_{\mathrm{AM}}(u) = \int_0^1 \mathbb{E}_{(X_t, X_1) \sim p_{t,1}^{\bar{u}}} \bigg[ \frac{1}{2} \| u_t(X_t) + \sigma_t \nabla g(X_1)\|^2 \bigg] \diff t .\notag
\end{equation}
However, sampling from $p_{t,1}^{\bar{u}}$ still requires simulating the controlled SDE. RAM avoids this by using the fact that, at the optimal solution $u^\star$, the conditional law of intermediate states is the one induced by the base process.
Hence, instead of sampling $(X_t, X_1)$ from the controlled trajectory law, RAM keeps the terminal marginal $p_1^{\bar{u}}$ generated by the current control and replaces the conditional $p_{t \vert 1}^{\bar{u}}$ with the base posterior $p_{t \vert 1}^{\mathrm{base}}$.
This operation is the Reciprocal projection:
\begin{equation}
    p^{\bar u}_{t,1}(X_t,X_1) \quad \leadsto \quad p^{\mathrm{base}}_{t|1}(X_t\mid X_1)p^{\bar u}_1(X_1).
\end{equation}
Equivalently, the current path measure is projected onto the SB that has the same terminal marginal but whose conditional bridges are those of the base process. 
This leads to the RAM objectives
\begin{equation}
    L_{\mathrm{RAM}} (u) = \int_0^1 \lambda_t \mathbb{E}_{X_t \sim p^{\mathrm{base}}_{t|1}(\cdot \vert X_1), \ X_1 \sim p_1^{\bar{u}}} \bigg[ \frac12 \|u_t(X_t) + \sigma_t \nabla g(X_1)\|^2 \bigg] \diff t, 
    \label{eq: L_RAM}
\end{equation}
where $\lambda_t := 1/\sigma_t^2$ applies a time scaling which does not affect the optimal solution but improves numerical stability.
That is, we sample $X_1$ according to the controlled process, then sample $X_t$ conditioned on $X_1$ using the posterior distribution defined by the base process. Assuming the base process is an SDE with zero drift (\eqref{eq: flat_controlled_sde} with $f_t, u \equiv 0$), and is initialized from a deterministic state $X_0 = x_0$, conditionals $p^{\mathrm{base}}_{t|1}$ are known in closed form and can be easily sampled\footnote{In Euclidean spaces, intermediate distributions are Gaussians.}. 

To further increase efficiency, we fix the regression target and delay updating it, and decouple $p(X_1)$ from the regression problem of learning $u$, delaying updates to $p(X_1)$ and performing multiple iterations to train $u$. This translates into the following algorithm: $(i)$ using the current control $u_i$ construct a buffer $\mathcal{B}= (X_1^{(i)}, \nabla g^{(i)})$ with samples $\{X_1^{(i)}\} \overset{\text{iid}}{\sim} p_1^u(X_1)$ and $\nabla g^{(i)} = \nabla g(X_1^{(i)})$; $(ii)$ obtain updated control $u_{i+1}$ by optimizing \eqref{eq: L_RAM} using samples $\{X_1^{(i)}, \nabla g^{(i)}\} \sim \mathcal{B}$.

\section{Schr\"{o}dinger Bridge Formulation and SOC Interpretation}
\label{sec: euclidean_SB_SOC}
The Schr\"{o}dinger Bridge problem \eqref{eq: SB_flat}, as an optimization problem with distribution constraints, is extensively studied in optimal transport, stochastic control and machine learning \citep{schrodinger1931umkehrung, leonard2014survey, chen2016relation}.
Formally, it is written as:
\begin{subequations}
\label{eq: SB_flat}
\begin{align}
     &\min_{u \in \mathcal{U}} \quad \left\{D_{\mathrm{KL}} (p^u \vert \vert p^{\mathrm{base}}) = \mathbb{E}_{\boldsymbol{X} \sim p^u} \left[ \int_0^1 \frac12 \|u_t (X_t) \|^2 \diff t \right]\right\},   \\
    &\ \text{s.t.} \quad \eqref{eq: flat_controlled_sde}, \quad X_1 \sim \nu.
\end{align}
\end{subequations}
where $D_{\mathrm{KL}}$ denotes the Kullback-Leibler divergence \citep{kullback1951information}.

Recall $p_{t \vert s}^{\mathrm{base}}(y \vert x) := p^{\mathrm{base}} (X_t =y \vert X_s = x)$ is the transition kernel of the base process for observing $y$ at time $t$ given $x$ at time $s$.
The SB potentials $\varphi_t (x),  \widehat{\varphi}_t (x) \in C^{1,2}([0,1] \times \R^d)$,
\begin{numcases}{}
    \varphi_t (x) = \int p_{1 \vert t}^{\mathrm{base}}(y \vert x) \varphi_1(y) \diff y, \qquad  \varphi_0 (x) \widehat{\varphi}_0 (x) = \mu(x) \label{eq: SB_cases_backward}\\
    \widehat{\varphi}_t (x) = \int p_{t \vert 0}^{\mathrm{base}}(x \vert y) \widehat{\varphi}_0(y) \diff y, \qquad  \varphi_1 (x) \widehat{\varphi}_1 (x) = \nu(x) \label{eq: SB_cases_forward}
\end{numcases}
are defined, up to some multiplicative constant, as solutions to forward and backward time integration with respect to $p_{t \vert s}^{\mathrm{base}}$.
These equations are computationally hard to solve due to the intractable integration and coupled boundaries at $t=0$ and $1$. Thus, following \citet[Theorem 3.1]{liu2025adjoint}, we recast the SB using a SOC interpretation
\begin{subequations}
\label{eq: flat_SB_SOC}
\begin{align}
    &\min_{u \in \mathcal{U}} \ \mathbb{E}_{\boldsymbol{X} \sim p^u} \left[ \int_0^1 \frac12 \|u_t(X_t)\|^2 \diff t + \log \frac{\widehat{\varphi}_1 (X_1)}{\nu(X_1)} \right], \qquad \ \text{s.t.} \quad \eqref{eq: flat_controlled_sde}.
\end{align}
\end{subequations}
Remarkably, the kinetic-optimal drift \eqref{eq: SB_u_star} in \eqref{eq: SB_flat} solves the associated SOC problem \eqref{eq: flat_SB_SOC}.
Since $\nu$ is given by \eqref{eq: pi_Rd}, $g(x) := -\log \varphi_1 =\log \frac{\widehat{\varphi}_1 (x)}{\nu(x)} = E(x) + \log \widehat{\varphi}_1 (x)$, dropping the irrelevant additive constant $\log Z$.

\paragraph{Hopf--Cole representation and optimal conditional law.}
Let $V_t:\R^d\to\R$ denote the value function associated with \eqref{eq: flat_SB_SOC}. It satisfies the Hamilton--Jacobi--Bellman
equation
\begin{equation}
    \partial_t V_t + \left\langle f_t,\nabla V_t\right\rangle + \frac{\sigma_t^2}{2}\Delta V_t - \frac{\sigma_t^2}{2}\|\nabla V_t\|^2 = 0,
    \qquad V_1=g.
    \label{eq: euclidean_HJB}
\end{equation}
Applying the Hopf--Cole transformation $\varphi_t(x):=\exp\bigl(-V_t(x)\bigr)$ linearizes \eqref{eq: euclidean_HJB} into the backward Kolmogorov
equation and gives the optimal control 
\begin{equation}
    u_t^\star(x) = -\sigma_t\nabla V_t(x) = \sigma_t\nabla\log\varphi_t(x).
    \label{eq: SB_u_star}
\end{equation}
Using \eqref{eq: SB_cases_backward}, the corresponding optimal conditional distribution is the Doob-transformed forward reference conditional
\begin{equation}
    p_{1|t}^\star(y|x)
    =
    p_{1|t}^{\mathrm{base}}(y|x)
    \frac{\varphi_1(y)}{\varphi_t(x)}.
    \label{eq: euclidean_optimal_conditional}
\end{equation}
Similarly, evaluating \eqref{eq: SB_cases_forward} at $t=1$ gives $\widehat\varphi_1(x) = \int_{\mathbb R^d} p_{1|0}^{\mathrm{base}}(x|y) \widehat\varphi_0(y)\,dy$.
The corresponding optimal reverse conditional density is
\begin{equation}
p_{0|1}^{\star}(y|x)
=
\frac{
\widehat\varphi_0(y)
p_{1|0}^{\mathrm{base}}(x|y)
}{
\widehat\varphi_1(x)
}.
\label{eq: euclidean_reverse_optimal_conditional}
\end{equation}
We now derive the two matching identities used by ASBS: denoising matching for the endpoint corrector $\nabla\log\widehat\varphi_1$, and adjoint matching for the controller score $\nabla\log\varphi_t$.

\colordm{Denoising matching (corrector)}.
Differentiating $\widehat\varphi_1$ with respect to $x$ gives
\begin{equation}
\nabla\log\widehat\varphi_1(x)
=
\frac{1}{\widehat\varphi_1(x)}
\int_{\mathbb R^d}
\nabla_x p_{1|0}^{\mathrm{base}}(x|y)
\widehat\varphi_0(y)\diff y
\stackrel{\eqref{eq: euclidean_reverse_optimal_conditional}}{=}
\mathbb E_{p_{0|1}^{\star}(\cdot|x)}
\left[
\nabla_x\log
p_{1|0}^{\mathrm{base}}(x|X_0)
\right].
\label{eq: euclidean_DM_identity}
\end{equation}
This is the exact denoising-matching characterization underlying
\eqref{eq: euclidean_corrector}. Note at the outset that this identity requires only a differentiable, strictly positive reference transition density and does not demand an additive or zero-drift reference, making denoising matching state-space agnostic.

\colortm{Adjoint matching (controller)}. 
For $f_t\equiv0$, the transition density is additive, i.e.,  $p_{1|t}^{\mathrm{base}}(y|x) = q_t(y-x)$ where $q_t = \mathcal N \left(0, \bar{\sigma}_t^2 I_d \right), \ \bar{\sigma}_t^2 := \int_t^1\sigma_s^2\diff s$.
Therefore,
\begin{align*}
    \nabla_x\varphi_t(x) &= \int_{\R^d} \nabla_xq_t(y-x)\varphi_1(y)\diff y \\
    &= - \int_{\R^d} \myellow{\nabla_yq_t(y-x)}\varphi_1(y)\diff y
    \shortnote{additivity: $\nabla_x q_t(y{-}x)=-\nabla_y q_t(y{-}x)$} \\
    &= \int_{\R^d} q_t(y-x)\myellow{\nabla_y\varphi_1(y)}\diff y.
    \shortnote{integration by parts}
\end{align*}
Dividing by $\varphi_t(x)$ and using
\eqref{eq: euclidean_optimal_conditional} yields
\begin{equation}
    \nabla\log\varphi_t(x)
    =
    \mathbb E_{p_{1|t}^\star(\cdot|x)}
    \left[
        \nabla\log\varphi_1(X_1)
    \right].
    \label{eq: euclidean_AM_identity}
\end{equation}
Unlike denoising matching, adjoint matching therefore relies on the additive structure of the reference transition kernel.

\section{Geometric and Stochastic Preliminaries}
\label{app: geometric_stochastic_preliminaries}

This section collects the definitions behind the operators $\nabla_{\mathcal M}$, $\operatorname{div}_{\mathcal M}$, $\Delta_{\mathcal M}$ of Section~\ref{subsec: notation}, and the exact meaning of the SDEs \eqref{eq: manifold_base_sde} and \eqref{eq: manifold_controlled_sde}. Throughout, $\mathcal M\subset\R^d$ is as in Section~\ref{subsec: notation}, equipped with the Riemannian metric induced by the Euclidean embedding, and $P_x$ denotes the orthogonal projection onto $T_x\mathcal M$, i.e., the tangential projection of \citet[Prop.~2.16]{lee2018introduction}.
Since $\mathcal M$ is a compact metric space, the path space $C([0,1],\mathcal M)$ is Polish, so regular conditional probabilities (disintegrations over endpoints or over $X_t$) exist and are unique up to null sets; we use them without further comment.

\paragraph{Intrinsic calculus.}
For $f\in C^1(\mathcal M)$, the Riemannian gradient $\nabla_{\mathcal M}f(x)\in T_x\mathcal M$ is the unique tangent vector satisfying
\begin{equation}
    \left\langle \nabla_{\mathcal M}f(x),\, v\right\rangle \;=\; df_x(v)
    \qquad \forall\, v\in T_x\mathcal M,
    \label{eq: riemannian_gradient_def}
\end{equation}
that is, $\nabla_{\mathcal M}f=(df)^{\sharp}$ \citep[Eq.~(2.14), p.~27]{lee2018introduction}. Because the metric is the induced one, \eqref{eq: riemannian_gradient_def} is equivalent to the extrinsic formula
\begin{equation*}
    \nabla_{\mathcal M}f(x) \;=\; P_x\,\nabla F(x)
    \qquad \text{for any smooth local extension } F \text{ of } f,
\end{equation*}
since $df_x = dF_x\vert_{T_x\mathcal M}$. 
For a smooth vector field $X$ on $\mathcal M$, the divergence $\operatorname{div}_{\mathcal M}X$ is characterized by $d\big(\iota_X\, d\mathrm{vol}_{\mathcal M}\big)=(\operatorname{div}_{\mathcal M}X)\, d\mathrm{vol}_{\mathcal M}$, with $\iota_X$ the interior product \citep[Eq.~(2.19)]{lee2018introduction} (via the Riemannian density \citep[Prop.~2.44]{lee2018introduction} in the unoriented case; coordinate expressions in \citep[Prop.~2.46]{lee2018introduction}), and the Laplace--Beltrami operator is $\Delta_{\mathcal M}:=\operatorname{div}_{\mathcal M}\circ\nabla_{\mathcal M}$ \citep[Eq.~(2.20)]{lee2018introduction}. Direct computation from these definitions yields the pointwise Leibniz identities
\begin{equation}
\begin{gathered}
    \nabla_{\mathcal M}(\varphi\psi)=\varphi\,\nabla_{\mathcal M}\psi+\psi\,\nabla_{\mathcal M}\varphi,
    \qquad
    \operatorname{div}_{\mathcal M}(h X)=h\operatorname{div}_{\mathcal M}X+\langle\nabla_{\mathcal M}h,\,X\rangle,\\
    \Delta_{\mathcal M}(\varphi\psi)=\varphi\,\Delta_{\mathcal M}\psi+2\big\langle\nabla_{\mathcal M}\varphi,\nabla_{\mathcal M}\psi\big\rangle+\psi\,\Delta_{\mathcal M}\varphi.
\end{gathered}
\label{eq: leibniz_identities}
\end{equation}
Since $\mathcal M$ is compact and boundaryless, the divergence theorem \citep[Problem~2-22, 2-23]{lee2018introduction} reads $\int_{\mathcal M}\operatorname{div}_{\mathcal M}X\,d\mathrm{vol}_{\mathcal M}=0$; applying it to $\psi X$ and to $\varphi\,\nabla_{\mathcal M}\psi$ gives the integration-by-parts and Green identities
\begin{equation*}
\begin{split}
    &\int_{\mathcal M}\big\langle\nabla_{\mathcal M}\psi,\,X\big\rangle\,d\mathrm{vol}_{\mathcal M}
    =-\int_{\mathcal M}\psi\,\operatorname{div}_{\mathcal M}X\,d\mathrm{vol}_{\mathcal M},
    \\
    &\int_{\mathcal M}\varphi\,\Delta_{\mathcal M}\psi\,d\mathrm{vol}_{\mathcal M}
    =\int_{\mathcal M}\psi\,\Delta_{\mathcal M}\varphi\,d\mathrm{vol}_{\mathcal M},
\end{split}
\end{equation*}
with no boundary terms. In particular, $\Delta_{\mathcal M}$ is symmetric on $L^2(d\mathrm{vol}_{\mathcal M})$, which underlies the symmetry of the heat kernel, $p_r^{\mathcal M}(x,y)=p_r^{\mathcal M}(y,x)$ \citep{grigoryan2009heat}. 

\paragraph{Diffusions on $\mathcal M$ as martingale problems.}
Given a tangent drift $b_t(x)\in T_x\mathcal M$, smooth in $(t,x)$, let $\mathcal L_t\psi:=\langle b_t,\nabla_{\mathcal M}\psi\rangle+\tfrac{\sigma_t^2}{2}\Delta_{\mathcal M}\psi$ for $\psi\in C^{\infty}(\mathcal M)$; the generators \eqref{eq: manifold_base_generator} and \eqref{eq: manifold_controlled_generator} are the cases $b_t=f_t^{\mathcal M}$ and $b_t=f_t^{\mathcal M}+\sigma_t u_t$, the tangency constraint $u_t(x)\in T_x\mathcal M$ being precisely what makes the controlled drift a vector field on $\mathcal M$. The SDE notations \eqref{eq: manifold_base_sde} and \eqref{eq: manifold_controlled_sde} are \emph{defined} through the associated martingale problem: an $\mathcal M$-valued process $X$ with $X_0\sim\mu_{\mathcal M}$ is an $\mathcal L$-diffusion if
\begin{equation}
    \psi(X_t)-\psi(X_0)-\int_0^t \mathcal L_s\psi(X_s)\,\diff s
    \quad\text{is a local martingale for every } \psi\in C^{\infty}(\mathcal M)
    \label{eq: manifold_martingale_problem}
\end{equation}
\citep[Def.~1.3.1]{hsu2002stochastic}. For smooth coefficients, an $\mathcal L$-diffusion exists \citep[Thm.~1.3.4]{hsu2002stochastic}, is unique in law \citep[Thm.~1.3.6]{hsu2002stochastic}, and is strong Markov \citep[Thm.~1.3.7]{hsu2002stochastic}; equivalently, it solves a Stratonovich SDE on $\mathcal M$, constructed by embedding and extension \citep[Def.~1.2.3, Thm.~1.2.9]{hsu2002stochastic} (see also \citep[Ch.~V]{ikeda1989stochastic}), with no explosion since $\mathcal M$ is compact. Time-dependent coefficients are covered by the standard space--time augmentation, i.e., applying the same results to the diffusion $(t,X_t)$ generated by $\partial_t+\mathcal L_t$ on $\R \times \mathcal M$. Brownian motion $W^{\mathcal M}$ is the $\tfrac12\Delta_{\mathcal M}$-diffusion \citep[Ch.~3]{hsu2002stochastic}, and its transition density is the heat kernel: smooth, strictly positive, and conservative, $\int_{\mathcal M}p_r^{\mathcal M}(x,y)\,d\mathrm{vol}_{\mathcal M}(y)=1$ \citep[\S4.1--4.2]{hsu2002stochastic}.

Given a strictly positive $h\in C^{1,2}([0,1]\times\mathcal M)$ that is \emph{space--time harmonic} for the reference dynamics,
\begin{equation}
    \partial_t h_t+\mathcal L^{\mathrm{base},\mathcal M}_t h_t=0,
    \label{eq: spacetime_harmonic}
\end{equation}
the \emph{Doob $h$-transform} of $p^{\mathrm{base}}_{\mathcal M}$ is the Markov law with transition densities $p^{\mathrm{base},\mathcal M}_{t|s}(y|x)\,h_t(y)/h_s(x)$ and Radon--Nikodym derivative $h_1(X_1)/h_0(X_0)$ on path space; see \citep[\S3]{leonard2014survey} and \citep{jamison1975markov}. By \eqref{eq: manifold_phi_backward} and the backward Kolmogorov equation for the reference kernel, the Schr\"odinger potential $\varphi^{\mathcal M}_t$ satisfies \eqref{eq: spacetime_harmonic}.

\paragraph{Geodesics and Transport.}
Throughout, $\nabla$ denotes the Levi--Civita connection of the induced metric and $\operatorname{Ric}^{\sharp}$ the Ricci endomorphism, $\langle\operatorname{Ric}^{\sharp}v,w\rangle=\operatorname{Ric}(v,w)$ \citep[Chs.~5, 7]{lee2018introduction}. Geodesics are the curves with $\nabla_{\dot\gamma}\dot\gamma=0$; since $\mathcal M$ is compact, it is geodesically complete, the exponential map $\operatorname{Exp}_x:T_x\mathcal M\to\mathcal M$ is globally defined, its local inverse $\operatorname{Exp}_x^{-1}$ (the Log map) is single-valued off the cut locus of $x$, and $d_{\mathcal M}$ denotes the geodesic distance \citep[Chs.~2, 4--6, 10]{lee2018introduction}. The parallel transport $\mathcal T^{\gamma}_{s\to t}:T_{\gamma_s}\mathcal M\to T_{\gamma_t}\mathcal M$ along a curve $\gamma$ is the solution operator of $\nabla_{\dot\gamma}A=0$ \citep[Ch.~4]{lee2018introduction}, i.e., precisely \eqref{eq: parallel_transport_equation}; a \emph{retraction} is a smooth map $R_x:T_x\mathcal M\to\mathcal M$ with $R_x(0)=x$ and $\diff R_x(0)=\mathrm{id}_{T_x\mathcal M}$, called second order when its curves match geodesics to second order, so that $R_x(v)=\operatorname{Exp}_x(v)+O(\|v\|^3)$. 
Along the paths of $X$, parallel transport is defined in the stochastic (Stratonovich) sense through the horizontal lift on the frame bundle, and every $\mathcal M$-valued semimartingale admits an $\R^n$-valued \emph{anti-development} $W$ \citep[\S2.3]{hsu2002stochastic}; $X$ is a Brownian motion on $\mathcal M$ exactly when its anti-development is a Euclidean Brownian motion \citep[Ch.~3]{hsu2002stochastic}. In this language, the damped transport \eqref{eq: damped_transport} means $\mathcal W_{t,s}=/\!/_{t,s}\,Q_{t,s}$, where $/\!/_{t,s}$ is stochastic parallel transport and $Q_{t,s}$ solves the pathwise linear ODE $\frac{\diff}{\diff s}Q_{t,s}=-\frac{\sigma_s^2}{2}(/\!/_{t,s})^{-1}\operatorname{Ric}^{\sharp}_{X_s}/\!/_{t,s}\,Q_{t,s}$, $Q_{t,t}=I$; in particular, by Gr\"onwall's inequality, $\|\mathcal W_{t,s}\|_{\mathrm{op}}\le\exp\big(\tfrac12\int_t^s\sigma_r^2\,\|\operatorname{Ric}^{-}\|_\infty\,\diff r\big)$, so damped transport differs from ordinary parallel transport by a bounded finite-variation correction on short time intervals.

\section{Proofs}
\label{app: manifold_SOC_formulation}

\subsection[Proof of the Manifold SOC--SB Equivalence]{Proof of \texorpdfstring{Theorem~\ref{theo: SOC_SB_manifold_equivalence}}{Theorem 1} (Manifold SOC--SB Equivalence)}
\begin{proof}[Proof of Theo.~\ref{theo: SOC_SB_manifold_equivalence}]
\textbf{Step 1 (Feasibility and Schr\"odinger potentials).}$\quad$
Let $R:=p_{\mathcal M}^{\mathrm{base}}$ and let $R_{0,1}$ be its endpoint law. Under the stated positivity and smoothness assumptions, the static Schr\"odinger problem is feasible with finite entropy: on the compact manifold\footnote{The product coupling $\mu_{\mathcal M}\otimes\nu_{\mathcal M}$ is the independent endpoint law on $\mathcal M\times\mathcal M$, with density $\mu_{\mathcal M}(x_0)\nu_{\mathcal M}(x_1)$ and marginals $\mu_{\mathcal M}$ and $\nu_{\mathcal M}$. It is used only as a finite-entropy feasible coupling, not as the Schr\"odinger optimizer.} $\mu_{\mathcal M}\otimes\nu_{\mathcal M}$ has a bounded, strictly positive density with respect to $R_{0,1}$, so $D_{\mathrm{KL}}(\mu_{\mathcal M}\otimes\nu_{\mathcal M}\,\|\,R_{0,1})<\infty$. Hence the dynamic Schr\"odinger problem has a unique minimizer $P^\star:=p_{\mathcal M}^\star$, and the Schr\"odinger system admits positive potentials satisfying \eqref{eq: manifold_phi_backward}--\eqref{eq: manifold_phihat_forward} and the endpoint products in the theorem \citep{jamison1975markov, follmer1988random, leonard2014survey}.

\textbf{Step 2 (Regularity and admissibility of $u^\star$).}$\quad$
We first record the regularity used below. Since the reference kernels are smooth and strictly positive on the compact manifold, the representations \eqref{eq: manifold_phi_backward}--\eqref{eq: manifold_phihat_forward} show that $\varphi_t^{\mathcal M}$ is strictly positive and smooth in $(t,x)$ for $t\in[0,1)$ and $\widehat\varphi_t^{\mathcal M}$ for $t\in(0,1]$, with boundary values $\varphi_1^{\mathcal M}=\nu_{\mathcal M}/\widehat\varphi_1^{\mathcal M}$ and $\widehat\varphi_0^{\mathcal M}=\mu_{\mathcal M}/\varphi_0^{\mathcal M}$ that are strictly positive and $C^2$ by the assumptions on $\mu_{\mathcal M},\nu_{\mathcal M}$; consequently both potentials are $C^{1,2}$ on all of $[0,1]\times\mathcal M$. In particular, every gradient appearing below is well defined, and $u_t^\star=\sigma_t\nabla_{\mathcal M}\log\varphi_t^{\mathcal M}$ is a continuous, bounded tangent field on the compact set $[0,1]\times\mathcal M$, hence admissible: $u^\star\in\mathcal U_{\mathcal M}$.

\textbf{Step 3 (Static optimizer and endpoint factorization).}$\quad$
Since $R_0=\mu_{\mathcal M}$, the static optimizer has density $\pi^\star(x_0,x_1)=\widehat\varphi_0^{\mathcal M}(x_0)\,p_{1|0}^{\mathrm{base},\mathcal M}(x_1|x_0)\,\varphi_1^{\mathcal M}(x_1)$ with respect to $d\mathrm{vol}_{\mathcal M}\otimes d\mathrm{vol}_{\mathcal M}$. Indeed, its marginals satisfy, by \eqref{eq: manifold_phi_backward}--\eqref{eq: manifold_phihat_forward} and the boundary products,
\begin{align*}
&\int_{\mathcal M}\pi^\star(x_0,x_1)\,d\mathrm{vol}_{\mathcal M}(x_1)
=\widehat\varphi_0^{\mathcal M}(x_0)\varphi_0^{\mathcal M}(x_0)=\mu_{\mathcal M}(x_0),
\shortnote{by \eqref{eq: manifold_phi_backward}}
\\
&\int_{\mathcal M}\pi^\star(x_0,x_1)\,d\mathrm{vol}_{\mathcal M}(x_0)
=\varphi_1^{\mathcal M}(x_1)\widehat\varphi_1^{\mathcal M}(x_1)=\nu_{\mathcal M}(x_1),
\shortnote{by \eqref{eq: manifold_phihat_forward}}
\end{align*}
so both endpoint constraints hold. Dividing by the reference endpoint density $\mu_{\mathcal M}(x_0)\,p_{1|0}^{\mathrm{base},\mathcal M}(x_1|x_0)$ yields the endpoint-factor representation
\begin{equation}
\frac{dP^\star}{dR}
=
\frac{\widehat\varphi_0^{\mathcal M}(X_0)}{\mu_{\mathcal M}(X_0)}\varphi_1^{\mathcal M}(X_1).
\label{eq: theorem_proof_endpoint_factor}
\end{equation}
The initial Schr\"odinger constraint gives $\widehat\varphi_0^{\mathcal M}/\mu_{\mathcal M}=1/\varphi_0^{\mathcal M}$, proving \eqref{eq: manifold_SB_RN}.

\textbf{Step 4 (Doob $h$-transform and the optimal control).}$\quad$
The corresponding Doob transform has transition density
\begin{equation*}
p_{t|s}^{\star,\mathcal M}(y|x)
=
p_{t|s}^{\mathrm{base},\mathcal M}(y|x)
\frac{\varphi_t^{\mathcal M}(y)}{\varphi_s^{\mathcal M}(x)}.
\end{equation*}
This kernel is a bona fide Markov transition density: it is nonnegative, and by \eqref{eq: manifold_phi_backward} together with the Chapman--Kolmogorov identity,
$\int_{\mathcal M}p_{t|s}^{\mathrm{base},\mathcal M}(y|x)\,\varphi_t^{\mathcal M}(y)\,d\mathrm{vol}_{\mathcal M}(y)=\varphi_s^{\mathcal M}(x)$,
so it integrates to one. 
Moreover, $\varphi^{\mathcal M}$ is space--time harmonic, \eqref{eq: spacetime_harmonic}, and the endpoint tilt \eqref{eq: theorem_proof_endpoint_factor} is exactly the Doob $h$-transform of the Markov law $R$ by $\varphi^{\mathcal M}$ (Appendix~\ref{app: geometric_stochastic_preliminaries}), whence $P^\star$ is Markov with the displayed transitions.
For a smooth test function $\psi$, its generator is
\begin{align*}
\mathcal L_t^{\star,\mathcal M}\psi
&=
\frac{1}{\varphi_t^{\mathcal M}}
\left[
\mathcal L_t^{\mathrm{base},\mathcal M}(\varphi_t^{\mathcal M}\psi)
-
\psi\mathcal L_t^{\mathrm{base},\mathcal M}\varphi_t^{\mathcal M}
\right]\\
&=
\mathcal L_t^{\mathrm{base},\mathcal M}\psi
+
\myellow{\sigma_t^2
\left\langle
\nabla_{\mathcal M}\log\varphi_t^{\mathcal M},
\nabla_{\mathcal M}\psi
\right\rangle},
\shortnote{by \eqref{eq: leibniz_identities}}
\end{align*}
Since $\myellow{\nabla_{\mathcal M}\log\varphi_t^{\mathcal M}(x)}\in T_x\mathcal M$, comparing this generator with \eqref{eq: manifold_controlled_generator} shows that $P^\star$ and the diffusion controlled by \eqref{eq: manifold_optimal_control} solve the same well-posed martingale problem \eqref{eq: manifold_martingale_problem} with initial law $\mu_{\mathcal M}$; they coincide by uniqueness in law (Appendix~\ref{app: geometric_stochastic_preliminaries}).

\textbf{Step 5 (Girsanov entropy identity).}$\quad$
Let $B_t$ denote the $\R^n$-valued Brownian anti-development of the
reference diffusion, and let $U_t:\R^n\to T_{X_t}\mathcal M$ be the
associated orthonormal stochastic frame. Define
$\widetilde u_t:=U_t^{-1}u_t(X_t)$.
Then, by Girsanov's theorem,
\begin{equation}
\frac{\diff p_{\mathcal M}^{u}}{\diff p_{\mathcal M}^{\mathrm{base}}}
=
\exp\!\left(
\int_0^1\langle\widetilde u_t,\diff B_t\rangle
-\frac12\int_0^1\|\widetilde u_t\|^2\diff t
\right).
\label{eq: manifold_girsanov_density}
\end{equation}
Taking $\mathbb E_{p^u_{\mathcal M}}\log$ of \eqref{eq: manifold_girsanov_density} yields\footnote{For the Euclidean entropy--energy identity with proof, see
\citep[Prop.~4.1, Eq.~(4.7)]{leonard2014survey}. In the manifold
setting, the same argument applies after representing the tangent
control in the orthonormal stochastic frame, so that the
anti-development is an $\mathbb R^n$-valued Brownian motion; see
\citep[\S2.3, Ch.~3]{hsu2002stochastic}.} the entropy identity
\begin{equation}
D_{\mathrm{KL}}(p_{\mathcal M}^u\|R)
=
\frac12\mathbb E_{p_{\mathcal M}^u}
\int_0^1\|u_t(X_t)\|_{T_{X_t}\mathcal M}^2\diff t.
\label{eq: manifold_girsanov_entropy}
\end{equation}
\textbf{Step 6 (SOC identification and terminal adjoint).}$\quad$
To identify the terminal-cost SOC minimizer as a path measure, let $g:\mathcal M\to\mathbb R$ and define
\begin{equation*}
Z_g(x):=\mathbb E_R[e^{-g(X_1)}\mid X_0=x],
\qquad
\frac{dQ^g}{dR}:=\frac{e^{-g(X_1)}}{Z_g(X_0)}.
\end{equation*}
For every path measure $P\ll R$ with $P_0=\mu_{\mathcal M}$,
\begin{align}
D_{\mathrm{KL}}(P\|R)+\mathbb E_P[g(X_1)]
&=
D_{\mathrm{KL}}(P\|Q^g)
-
\mathbb E_{\mu_{\mathcal M}}[\log Z_g(X_0)].
\label{eq: gibbs_path_decomposition}
\end{align}
Hence $Q^g$ is the unique minimizer. Choose
\begin{equation*}
\myellow{g^{\star,\mathcal M}(x)
=-\log\varphi_1^{\mathcal M}(x)}
=
\log\frac{\widehat\varphi_1^{\mathcal M}(x)}{\nu_{\mathcal M}(x)}.
\end{equation*}
Then $Z_{g^\star}(x)=\varphi_0^{\mathcal M}(x)$ by \eqref{eq: manifold_phi_backward}, and therefore
\begin{equation*}
\frac{dQ^{g^\star}}{dR}
=
\frac{\myellow{\varphi_1^{\mathcal M}(X_1)}}{\varphi_0^{\mathcal M}(X_0)}
=
\frac{dP^\star}{dR}.
\shortnote{by \eqref{eq: manifold_SB_RN}}
\end{equation*}
Thus the SOC and SB solutions coincide as complete path measures. Moreover, the unconstrained minimizer is attained inside the admissible class: the control $u^\star=\sigma_t\nabla_{\mathcal M}\log\varphi_t^{\mathcal M}$ belongs to $\mathcal U_{\mathcal M}$ (as recorded at the start of the proof) and induces precisely $P^\star=Q^{g^\star}$. Combining \eqref{eq: manifold_girsanov_entropy} with \eqref{eq: gibbs_path_decomposition} then proves \eqref{eq: SOC_theo1} over the admissible control class.

Finally, if $\nu_{\mathcal M}\propto e^{-E}$, differentiating $g^{\star,\mathcal M}=E+\log\widehat\varphi_1^{\mathcal M}+\mathrm{const.}$ proves \eqref{eq: manifold_terminal_adjoint}.
\end{proof}

\subsection{Reciprocal Property}
A fundamental result in SB theory, which follows from Theorem~\ref{theo: SOC_SB_manifold_equivalence}, known as the reciprocal property, states that the optimal controlled bridge conditioned on its endpoints is identical to the uncontrolled reference bridge \citep{leonard2014survey}.
\begin{corollary}[Reciprocal property]\label{cor: reciprocal_property_manifold}
Let $p_{\mathcal M}^\star$ solve \eqref{eq: manifold_SB_problem}. Then, for every $t\in(0,1)$,
\begin{equation}
p_{\mathcal M}^\star(X_t\mid X_0,X_1)
=
p_{\mathcal M}^{\mathrm{base}}(X_t\mid X_0,X_1)
\end{equation}
for $p_{0,1}^{\star,\mathcal M}$-almost every endpoint pair.
\end{corollary}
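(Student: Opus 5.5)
The plan is to read the reciprocal property directly off the endpoint-factor Radon--Nikodym representation \eqref{eq: manifold_SB_RN} established in Theorem~\ref{theo: SOC_SB_manifold_equivalence}. Write $R:=p_{\mathcal M}^{\mathrm{base}}$ and $P^\star:=p_{\mathcal M}^\star$. By \eqref{eq: manifold_SB_RN}, $dP^\star/dR=F(X_0,X_1)$ with $F(x_0,x_1):=\varphi_1^{\mathcal M}(x_1)/\varphi_0^{\mathcal M}(x_0)$. The key point is that this density depends on the path only through its endpoints, and that $F$ is strictly positive and finite, since the potentials are strictly positive and continuous on the compact manifold (Step 2 of the proof).

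The central computation is to fix a bounded measurable test function $\psi$ of $X_t$ and a bounded measurable function $\eta$ of $(X_0,X_1)$. Using the change of measure and then conditioning on the endpoints under $R$, we get
\[
\mathbb E_{P^\star}[\psi(X_t)\eta(X_0,X_1)]=\mathbb E_R\big[F(X_0,X_1)\eta(X_0,X_1)\,\mathbb E_R[\psi(X_t)\mid X_0,X_1]\big].
\]
Applying the change of measure once more in reverse, the right-hand side equals $\mathbb E_{P^\star}\big[\eta(X_0,X_1)\,\mathbb E_R[\psi(X_t)\mid X_0,X_1]\big]$. Since $\eta$ is arbitrary, this gives
\[
\mathbb E_{P^\star}[\psi(X_t)\mid X_0,X_1]=\mathbb E_R[\psi(X_t)\mid X_0,X_1]
\]
$P^\star_{0,1}$-a.s. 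Regular conditional probabilities exist because $C([0,1],\mathcal M)$ is Polish, as noted in Appendix~\ref{app: geometric_stochastic_preliminaries}. Choosing $\psi$ from a countable convergence-determining class of $C(\mathcal M)$ then yields equality of the conditional laws for a.e.\ endpoint pair. The same argument with path functionals in place of $\psi(X_t)$ gives equality of the full bridges, which is stronger than the claim.

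The remaining issue is null sets. The identity holds $R_{0,1}$-a.e.\ in the sense of disintegrations, and we need it $P^\star_{0,1}$-a.e. This follows from $P^\star_{0,1}\ll R_{0,1}$, which is immediate from $P^\star\ll R$. Because $F>0$, the two endpoint laws are in fact mutually equivalent.

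We do not expect a real obstacle. The only care needed is measure-theoretic: checking that conditioning on $(X_0,X_1)$ commutes correctly with the endpoint-measurable density. If one wants densities with respect to $d\mathrm{vol}_{\mathcal M}$, a direct route is also available. We would write $p^\star(x_0,x_t,x_1)=\mu_{\mathcal M}(x_0)p^{\mathrm{base}}_{t|0}(x_t|x_0)p^{\mathrm{base}}_{1|t}(x_1|x_t)F(x_0,x_1)$ using the Markov property of $R$. Dividing by $\pi^\star(x_0,x_1)=\mu_{\mathcal M}(x_0)p^{\mathrm{base}}_{1|0}(x_1|x_0)F(x_0,x_1)$ from Step 3, the factor $F$ cancels, leaving the base bridge density
\[
p^{\mathrm{base}}_{t|0}(x_t|x_0)\,p^{\mathrm{base}}_{1|t}(x_1|x_t)/p^{\mathrm{base}}_{1|0}(x_1|x_0),
\]
which is well defined because the reference kernels are strictly positive.
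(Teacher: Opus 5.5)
Your proposal is correct and takes the same route as the paper. The paper also notes that the density $\varphi_1^{\mathcal M}(X_1)/\varphi_0^{\mathcal M}(X_0)$ in \eqref{eq: manifold_SB_RN} depends only on the endpoints, so it cancels when conditioning on $(X_0,X_1)$, and it then identifies the bridge marginal as $p_{t|0}^{\mathrm{base},\mathcal M}(x_t|x_0)\,p_{1|t}^{\mathrm{base},\mathcal M}(x_1|x_t)/p_{1|0}^{\mathrm{base},\mathcal M}(x_1|x_0)$ using the Markov property and Chapman--Kolmogorov. Your test-function computation with $\psi(X_t)\eta(X_0,X_1)$ is a rigorous version of the paper's one-line ``constant factor cancels under conditional normalization'' step.
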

\begin{proof}
By \eqref{eq: manifold_SB_RN}, the optimal law is an endpoint tilt of the Markov reference law:
\begin{equation*}
\frac{d p_{\mathcal M}^\star}{d p_{\mathcal M}^{\mathrm{base}}}(\boldsymbol X)
=
\frac{\varphi_1^{\mathcal M}(X_1)}{\varphi_0^{\mathcal M}(X_0)}.
\end{equation*}
Conditioning on $X_0=x_0$ and $X_1=x_1$ turns this Radon--Nikodym factor into a constant, which cancels under conditional normalization. Therefore the complete endpoint-conditioned path laws coincide. In particular, for $t\in(0,1)$,
\begin{equation*}
p_{\mathcal M}^\star(x_t\mid x_0,x_1)
=
\frac{p_{t|0}^{\mathrm{base},\mathcal M}(x_t|x_0)
      p_{1|t}^{\mathrm{base},\mathcal M}(x_1|x_t)}
     {p_{1|0}^{\mathrm{base},\mathcal M}(x_1|x_0)}
=
p_{\mathcal M}^{\mathrm{base}}(x_t\mid x_0,x_1),
\end{equation*}
where the second equality is the Markov factorization of the reference bridge, obtained from the Markov property and the Chapman--Kolmogorov identity, and the first equality then follows from the coincidence of the endpoint-conditioned path laws established above.
\end{proof}
Corollary~\ref{cor: reciprocal_property_manifold} is a path-space consequence of the endpoint factorization \eqref{eq: manifold_SB_RN}, not of Euclidean geometry. It permits exact reference-bridge sampling when that bridge is available, and motivates the noisy-geodesic approximation when it is not.

\subsection{Exact Intrinsic Matching Identities}
\graybox{%
\begin{lemma}[Exact manifold denoising matching for the corrector]
\label{prop: exact_manifold_DM}
Under the hypotheses of Theorem~\ref{theo: SOC_SB_manifold_equivalence},
\begin{equation}
\nabla_{\mathcal M}\log\widehat\varphi_1^{\mathcal M}(x)
=
\mathbb E_{p_{0|1}^{\star,\mathcal M}(\cdot|x)}
\left[
\nabla_{\mathcal M,x}\log p_{1|0}^{\mathrm{base},\mathcal M}(x|X_0)
\right].
\label{eq: manifold_corrector_DM}
\end{equation}
Consequently, the corrector admits the exact variational characterization
\begin{equation}
\nabla_{\mathcal M}\log\widehat\varphi_1^{\mathcal M}
=
\argmin_h\mathbb E_{p_{0,1}^{\star,\mathcal M}}
\left[
\left\|h(X_1)-\nabla_{\mathcal M,X_1}\log p_{1|0}^{\mathrm{base},\mathcal M}(X_1|X_0)\right\|^2
\right].
\label{eq: manifold_exact_corrector_loss}
\end{equation}
\end{lemma}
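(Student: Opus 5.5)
The plan mirrors the Euclidean derivation of \eqref{eq: euclidean_DM_identity}, replacing the ambient gradient by $\nabla_{\mathcal M}$. We start from the representation \eqref{eq: manifold_phihat_forward} at $t=1$:
\[
\widehat\varphi_1^{\mathcal M}(x)=\int_{\mathcal M}p_{1|0}^{\mathrm{base},\mathcal M}(x|y)\,\widehat\varphi_0^{\mathcal M}(y)\,d\mathrm{vol}_{\mathcal M}(y).
\]
By Theorem~\ref{theo: SOC_SB_manifold_equivalence} this function is strictly positive, so $\log\widehat\varphi_1^{\mathcal M}$ is well defined.

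The first step is to differentiate under the integral sign in $x$. Since $\mathcal M$ is compact, the kernel $(x,y)\mapsto p_{1|0}^{\mathrm{base},\mathcal M}(x|y)$ is smooth by hypothesis, so its $x$-differential is jointly continuous and hence uniformly bounded on $\mathcal M\times\mathcal M$. The function $\widehat\varphi_0^{\mathcal M}=\mu_{\mathcal M}/\varphi_0^{\mathcal M}$ is continuous and therefore integrable. Dominated convergence, applied in a local chart around $x$ (or along any smooth curve through $x$ with prescribed velocity $v\in T_x\mathcal M$), gives
\[
d(\widehat\varphi_1^{\mathcal M})_x(v)=\int d_x p_{1|0}^{\mathrm{base},\mathcal M}(\cdot|y)(v)\,\widehat\varphi_0^{\mathcal M}(y)\,d\mathrm{vol}_{\mathcal M}(y).
\]
Applying the musical isomorphism \eqref{eq: riemannian_gradient_def} then lets us pull $\nabla_{\mathcal M}$ inside the integral.

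The second step divides by $\widehat\varphi_1^{\mathcal M}(x)$. We write $\nabla_{\mathcal M,x}p=p\,\nabla_{\mathcal M,x}\log p$, which is legitimate because the kernel is strictly positive, and then recognize the weight
\[
p_{1|0}^{\mathrm{base},\mathcal M}(x|y)\,\widehat\varphi_0^{\mathcal M}(y)/\widehat\varphi_1^{\mathcal M}(x)
\]
as the reverse conditional $p_{0|1}^{\star,\mathcal M}(y|x)$. This identification comes from Bayes' rule applied to the static optimizer $\pi^\star$ of Step~3 in the proof of Theorem~\ref{theo: SOC_SB_manifold_equivalence}. The second marginal of $\pi^\star$ is $\nu_{\mathcal M}=\varphi_1^{\mathcal M}\widehat\varphi_1^{\mathcal M}$, so the factor $\varphi_1^{\mathcal M}(x)$ cancels and we obtain the manifold analogue of \eqref{eq: euclidean_reverse_optimal_conditional}. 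Together these two steps give \eqref{eq: manifold_corrector_DM}.

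The final step turns \eqref{eq: manifold_corrector_DM} into the variational statement \eqref{eq: manifold_exact_corrector_loss} using the standard $L^2$ conditional-expectation argument. Let $Y:=\nabla_{\mathcal M,X_1}\log p_{1|0}^{\mathrm{base},\mathcal M}(X_1|X_0)\in T_{X_1}\mathcal M$. This is bounded, since the kernel is smooth and bounded below on a compact set, so $Y$ is square integrable. For any measurable tangent field $h$ with $h(X_1)\in L^2$, we decompose
\[
\mathbb E\|h(X_1)-Y\|^2=\mathbb E\|h(X_1)-\mathbb E[Y|X_1]\|^2+\mathbb E\|Y-\mathbb E[Y|X_1]\|^2 .
\]
The cross term vanishes by the tower property, and the last term does not depend on $h$. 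The minimizer is therefore $h=\mathbb E[Y|X_1]$, unique $\nu_{\mathcal M}$-almost everywhere, and by \eqref{eq: manifold_corrector_DM} this equals $\nabla_{\mathcal M}\log\widehat\varphi_1^{\mathcal M}$. There is one geometric subtlety: $Y$ and $h(X_1)$ lie in $T_{X_1}\mathcal M\subset\R^d$, so we use the ambient norm. Conditioning on $X_1$ fixes the tangent space, which means $\mathbb E[Y|X_1]$ still lies in $T_{X_1}\mathcal M$ and no transport is needed.

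The main obstacle is justifying the differentiation under the integral and the global regularity of $\widehat\varphi_1^{\mathcal M}$ up to $t=1$. Smoothness of the kernel for the fixed time gap $1-0$ handles this by compactness. We will also note the degenerate case: if $\int_0^1\sigma_s^2\,\diff s$ were zero the kernel would become singular, but the standing lower bound $\sigma_t\ge\underline\sigma>0$ excludes this.
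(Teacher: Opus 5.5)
Your proposal is correct and follows essentially the same route as the paper. Both arguments differentiate the $t=1$ forward representation \eqref{eq: manifold_phihat_forward} under the integral sign, rewrite $\nabla_{\mathcal M,x}p=p\,\nabla_{\mathcal M,x}\log p$, identify the resulting weight as the optimal reverse conditional $p_{0|1}^{\star,\mathcal M}$, and conclude via the $L^2$-projection property of conditional expectation. Your extra details (dominated convergence on the compact manifold, deriving $p_{0|1}^{\star,\mathcal M}$ by Bayes' rule from the static optimizer $\pi^\star$, and the explicit orthogonal decomposition of the squared loss) only make explicit steps the paper asserts without proof.
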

}
\begin{proof}
Evaluating \eqref{eq: manifold_phihat_forward} at $t=1$ and
differentiating under the integral sign gives
\begin{align*}
\nabla_{\mathcal M}\log\widehat\varphi_1^{\mathcal M}(x)
&=
\frac{1}{\widehat\varphi_1^{\mathcal M}(x)}
\int_{\mathcal M}
\nabla_{\mathcal M,x}
p_{1|0}^{\mathrm{base},\mathcal M}(x|y)
\widehat\varphi_0^{\mathcal M}(y)
\,d\mathrm{vol}_{\mathcal M}(y)
\shortnote{by \eqref{eq: manifold_phihat_forward}}
\\
&=
\int_{\mathcal M}
\nabla_{\mathcal M,x}
\log p_{1|0}^{\mathrm{base},\mathcal M}(x|y)
\myellow{
\frac{
p_{1|0}^{\mathrm{base},\mathcal M}(x|y)
\widehat\varphi_0^{\mathcal M}(y)
}{
\widehat\varphi_1^{\mathcal M}(x)
}}
\,d\mathrm{vol}_{\mathcal M}(y).
\end{align*}
Since the optimal reverse conditional density is $\myellow{p_{0|1}^{\star,\mathcal M}(y|x)} = \frac{ \widehat\varphi_0^{\mathcal M}(y) p_{1|0}^{\mathrm{base},\mathcal M}(x|y)}{\widehat\varphi_1^{\mathcal M}(x)}$, the preceding identity is exactly \eqref{eq: manifold_corrector_DM}.
Finally, conditional expectation is the unique squared-loss minimizer
up to $p_1^{\star,\mathcal M}$-almost-everywhere equality, which yields
\eqref{eq: manifold_exact_corrector_loss}.
\end{proof}
\graybox{%
\begin{lemma}[Exact manifold adjoint matching for the controller]
\label{prop: exact_manifold_AM}
Assume the hypotheses of Theorem~\ref{theo: SOC_SB_manifold_equivalence}, $f_t^{\mathcal M}\equiv0$, and $\sigma_t = \sigma$ for simplicity. Then for $t\in[0,1)$,
\begin{equation}
\nabla_{\mathcal M}\log\varphi_t^{\mathcal M}(x)
=
\mathbb E_{p_{\mathcal M}^{\star}(\cdot\mid X_t=x)}
\left[
(\mathcal W_{t,1})^*
\nabla_{\mathcal M}\log\varphi_1^{\mathcal M}(X_1)
\right].
\label{eq: manifold_exact_controller_AM}
\end{equation}
Equivalently, defining $a_1(x):=\nabla_{\mathcal M}E(x)+\nabla_{\mathcal M}\log\widehat\varphi_1^{\mathcal M}(x)$, the optimal controller satisfies
\begin{equation}
u^\star
=
\argmin_u
\int_0^1\mathbb E_{p_{\mathcal M}^{\star}}
\left[
\left\|u_t(X_t)+\sigma_t(\mathcal W_{t,1})^*a_1(X_1)\right\|_{T_{X_t}\mathcal M}^2
\right]\diff t.
\label{eq: manifold_exact_AM_loss}
\end{equation}
Here, for each $t$, the transported adjoint $(\mathcal W_{t,1})^*a_1(X_1)$ is a functional of the path segment on $[t,1]$.
\end{lemma}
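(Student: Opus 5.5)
The plan is to combine a Bismut-type derivative formula for the heat semigroup with the Doob $h$-transform description of $p_{\mathcal M}^\star$ from Theorem~\ref{theo: SOC_SB_manifold_equivalence}.

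Since $f_t^{\mathcal M}\equiv 0$ and $\sigma_t=\sigma$, the reference kernel $p_{1|t}^{\mathrm{base},\mathcal M}(y|x)$ is the heat kernel of $\frac{\sigma^2}{2}\Delta_{\mathcal M}$ at time $1-t$. By \eqref{eq: manifold_phi_backward}, $\varphi_t^{\mathcal M}=P_{1-t}\varphi_1^{\mathcal M}$, where $P_r$ is the corresponding heat semigroup.

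\textbf{Step 1: derivative formula under the reference law.} For $r=1-t>0$ and $\varphi_1^{\mathcal M}\in C^1$ (guaranteed by Step 2 of the proof of Theorem~\ref{theo: SOC_SB_manifold_equivalence}), I would establish
\begin{equation*}
d(P_r\varphi_1^{\mathcal M})_x(v)=\mathbb E_{p^{\mathrm{base}}_{\mathcal M}(\cdot\mid X_t=x)}\bigl[d\varphi_1^{\mathcal M}(X_1)(\mathcal W_{t,1}v)\bigr],\qquad v\in T_x\mathcal M.
\end{equation*}
The route is the standard one of \citet{thalmaier1998remarks} and \citet{hsu2002stochastic}. Fix $x$ and $v$, and consider the process $N_s:=d(P_{1-s}\varphi_1^{\mathcal M})_{X_s}(\mathcal W_{t,s}v)$ for $s\in[t,1)$. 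Using the Weitzenböck identity $\Delta\, d f=d\Delta f+$ (a Ricci term acting on $df$, via the Bochner formula on $1$-forms) together with the backward equation $\partial_s\varphi_s+\frac{\sigma^2}{2}\Delta_{\mathcal M}\varphi_s=0$, Itô's formula in the orthonormal frame shows that $N_s$ is a local martingale. The Ricci damping in \eqref{eq: damped_transport} is chosen exactly to cancel the curvature drift. Compactness gives boundedness of $dP_{1-s}\varphi_1^{\mathcal M}$, and the Grönwall bound on $\|\mathcal W_{t,s}\|_{\mathrm{op}}$ recorded in Appendix~\ref{app: geometric_stochastic_preliminaries} then makes $N$ a true martingale. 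Letting $s\uparrow1$ gives $\mathbb E[N_1]=N_t$. Identifying $d\varphi$ with the gradient via the metric and writing $d\varphi(\mathcal W v)=\langle(\mathcal W_{t,1})^*\nabla_{\mathcal M}\varphi_1^{\mathcal M},v\rangle$ yields $\nabla_{\mathcal M}\varphi_t^{\mathcal M}(x)=\mathbb E^{\mathrm{base}}[(\mathcal W_{t,1})^*\nabla_{\mathcal M}\varphi_1^{\mathcal M}(X_1)\mid X_t=x]$.

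\textbf{Step 2: change of measure.} Divide by $\varphi_t^{\mathcal M}(x)$ and write $\nabla\varphi_1^{\mathcal M}=\varphi_1^{\mathcal M}\nabla\log\varphi_1^{\mathcal M}$. By \eqref{eq: manifold_SB_RN} and the Markov property, the law of the segment $X_{[t,1]}$ under $p^\star_{\mathcal M}(\cdot\mid X_t=x)$ has density $\varphi_1^{\mathcal M}(X_1)/\varphi_t^{\mathcal M}(x)$ with respect to the reference conditional law; this is the Doob transform of Step 4 in that proof, applied on path space from time $t$. Since $(\mathcal W_{t,1})^*$ is a functional of that segment, \eqref{eq: manifold_exact_controller_AM} follows.

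\textbf{Step 3: variational form.} Multiply \eqref{eq: manifold_exact_controller_AM} by $-\sigma$. Use $\nabla\log\varphi_1^{\mathcal M}=-\nabla_{\mathcal M}E-\nabla_{\mathcal M}\log\widehat\varphi_1^{\mathcal M}=-a_1$, which holds up to constants by \eqref{eq: manifold_terminal_adjoint}. This gives $u_t^\star(x)=-\sigma\,\mathbb E^\star[(\mathcal W_{t,1})^*a_1(X_1)\mid X_t=x]$. Since a conditional expectation is the unique $L^2$ minimizer of the tangent-valued regression, \eqref{eq: manifold_exact_AM_loss} follows for a.e.\ $t$, as in Lemma~\ref{prop: exact_manifold_DM}.

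\textbf{Main obstacle.} The hard step is Step 1: the Itô computation showing that $N_s$ is a martingale on a curved manifold, together with justifying the limit $s\uparrow1$. The first thing I would try is to work on the orthonormal frame bundle, which turns the damped transport into an $\mathbb R^n$-valued linear ODE. I would cite the Bismut--Elworthy--Li/Thalmaier derivative formula for the Weitzenböck step rather than redoing it, and treat the endpoint limit by dominated convergence, using the $C^1$ regularity of $\varphi_1^{\mathcal M}$.
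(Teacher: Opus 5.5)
Your proposal is correct and follows essentially the same route as the paper. You use the derivative-semigroup formula $\nabla_{\mathcal M}P_{t,1}\varphi_1^{\mathcal M}(x)=\mathbb E^{\mathrm{base}}[(\mathcal W_{t,1})^*\nabla_{\mathcal M}\varphi_1^{\mathcal M}(X_1)\mid X_t=x]$, then the terminal tilt $\varphi_1^{\mathcal M}(X_1)/\varphi_t^{\mathcal M}(x)$ of the conditional future-path law, then $\nabla_{\mathcal M}\log\varphi_1^{\mathcal M}=-a_1$ together with the $L^2$ characterization of conditional expectation. The only difference is that you sketch a proof of the derivative formula: the local martingale $N_s$ via the Weitzenb\"ock identity, with the limit $s\uparrow 1$ handled by the $C^1$ regularity of $\varphi_1^{\mathcal M}$ and the Gr\"onwall bound on $\mathcal W$. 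The paper instead simply cites that formula from \citet{thalmaier1998remarks} and \citet{coulibaly2011brownian}.
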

}
\begin{proof}
Let $P_{t,s}$ denote the heat semigroup of the zero-drift reference diffusion. The derivative-semigroup formula associated with \eqref{eq: damped_transport} \citep{thalmaier1998remarks, coulibaly2011brownian} gives, for every smooth $F$,
\begin{equation}
\nabla_{\mathcal M}P_{t,s}F(x)
=
\mathbb E_{p_{\mathcal M}^{\mathrm{base}}(\cdot\mid X_t=x)}
\left[(\mathcal W_{t,s})^*\nabla_{\mathcal M}F(X_s)\right].
\label{eq: derivative_semigroup_formula}
\end{equation}
Since $\varphi_t^{\mathcal M}=P_{t,1}\varphi_1^{\mathcal M}$,
\begin{align*}
\nabla_{\mathcal M}\varphi_t^{\mathcal M}(x)
&=
\mathbb E_{p_{\mathcal M}^{\mathrm{base}}(\cdot\mid X_t=x)}
\left[(\mathcal W_{t,1})^*\nabla_{\mathcal M}\varphi_1^{\mathcal M}(X_1)\right]
\shortnote{by \eqref{eq: derivative_semigroup_formula}}\\
&=
\mathbb E_{p_{\mathcal M}^{\mathrm{base}}(\cdot\mid X_t=x)}
\left[\varphi_1^{\mathcal M}(X_1)(\mathcal W_{t,1})^*\nabla_{\mathcal M}\log\varphi_1^{\mathcal M}(X_1)\right].
\shortnote{$\nabla\varphi=\varphi\nabla\log\varphi$}
\end{align*}
By \eqref{eq: manifold_SB_RN}, $dp_{\mathcal M}^{\star}/dp_{\mathcal M}^{\mathrm{base}}=\varphi_1^{\mathcal M}(X_1)/\varphi_0^{\mathcal M}(X_0)$. Hence, for every event $A$ in the future $\sigma$-algebra $\sigma(X_s,\,s\ge t)$,
\begin{equation*}
p_{\mathcal M}^{\star}(A\mid X_t=x)
=
\frac{\mathbb E_{R}\!\left[\mathbf 1_A\,\varphi_1^{\mathcal M}(X_1)\,\myellow{\varphi_0^{\mathcal M}(X_0)^{-1}}\mid X_t=x\right]}
     {\mathbb E_{R}\!\left[\varphi_1^{\mathcal M}(X_1)\,\myellow{\varphi_0^{\mathcal M}(X_0)^{-1}}\mid X_t=x\right]}
=
\frac{\mathbb E_{R}\!\left[\mathbf 1_A\,\varphi_1^{\mathcal M}(X_1)\mid X_t=x\right]}{\varphi_t^{\mathcal M}(x)},
\end{equation*}
where the second equality holds because, by the Markov property of $R:=p_{\mathcal M}^{\mathrm{base}}$, the past factor $\myellow{\varphi_0^{\mathcal M}(X_0)^{-1}}$ is conditionally independent of $\bigl(\mathbf 1_A,\varphi_1^{\mathcal M}(X_1)\bigr)$ given $X_t$, so it factors out of numerator and denominator and cancels, while $\mathbb E_R[\varphi_1^{\mathcal M}(X_1)\mid X_t=x]=\varphi_t^{\mathcal M}(x)$ by \eqref{eq: manifold_phi_backward}. Thus the restriction of the optimal future-path law conditional on $X_t=x$ is the terminal tilt
\begin{equation*}
\frac{d p_{\mathcal M}^{\star}(\cdot\mid X_t=x)}{d p_{\mathcal M}^{\mathrm{base}}(\cdot\mid X_t=x)}
=
\frac{\varphi_1^{\mathcal M}(X_1)}{\varphi_t^{\mathcal M}(x)}.
\end{equation*}
Since the Radon--Nikodym density depends only on $X_1$ while the transported integrand is a measurable functional of the path segment on $[t,1]$, dividing the preceding gradient identity by $\varphi_t^{\mathcal M}(x)$ proves \eqref{eq: manifold_exact_controller_AM}.
Moreover, the terminal Schr\"odinger relation
$\varphi_1^{\mathcal M}\widehat\varphi_1^{\mathcal M}
=\nu_{\mathcal M}\propto e^{-E}$ implies
\begin{equation*}
\nabla_{\mathcal M}\log\varphi_1^{\mathcal M}
=
-\nabla_{\mathcal M}E
-
\nabla_{\mathcal M}\log\widehat\varphi_1^{\mathcal M}
=
-a_1.
\end{equation*}
Combining this identity with
\eqref{eq: manifold_optimal_control} and
\eqref{eq: manifold_exact_controller_AM} gives $u_t^\star(x) = -\sigma_t \mathbb E_{p_{\mathcal M}^{\star}(\cdot\mid X_t=x)} \left[ (\mathcal W_{t,1})^*a_1(X_1) \right]$.

Finally, conditional expectation is the unique squared-loss minimizer
up to $dt\otimes p_t^{\star,\mathcal M}$-almost-everywhere equality,
which gives \eqref{eq: manifold_exact_AM_loss}.
\end{proof}
In Euclidean space, $\operatorname{Ric}^{\sharp}=0$ and all tangent spaces are canonically identified, so $(\mathcal W_{t,1})^*=I$ and \eqref{eq: manifold_exact_controller_AM} reduces to \eqref{eq: euclidean_AM_identity}. 

\section{Additional Theoretical Analysis}
\label{app: additional_theoretical_results}
\subsection{Algorithmic Structures}
\label{app: alg_structure}
\begin{algorithm}
\caption{R--ASBS}\label{alg: asbs_m}
\begin{algorithmic}[1]
\Require energy $E(x)$, source $\mu_{\mathcal{M}}$, retraction map $R_x$, orthogonal projector $P_x$, batch size $B$, training epochs $K$, SDE steps $N$, step size $\Delta t$, noise amplitude $\sigma$.
\State \textbf{Initialize:} Controller network $u_\theta(x,t)$ and corrector network $h_\phi(x)$ with weights $\theta_0,\phi_0$.
\For{$k=1$ \textbf{to} $K$}
    \hspace*{-\fboxsep}\colorbox{lightgraybox}{\parbox{\dimexpr\linewidth-\fboxsep\relax}{%
    \State Sample initial particles $\{X_0^{(i)}\}_{i=1}^B\sim\mu_{\mathcal M}$ \COMMENT{controller matching$~$}
    \FORR{$n=0$ \textbf{to} $N-1$ \textbf{do}}
        \State Sample ambient noise $\epsilon_n^{(i)}\sim\mathcal N(0,I_d)$
        \State $X_{n+1}^{(i)}\gets R_{X_n^{(i)}}\!\left(\sigma P_{X_n^{(i)}}u_\theta^{(k-1)}(X_n^{(i)},n\Delta t)\Delta t+\sigma\sqrt{\Delta t}P_{X_n^{(i)}}\epsilon_n^{(i)}\right)$
    \ENDFORR
    \State Sample intermediate times $t^{(i)}\sim\mathrm{Unif}[0,1]$
    \State Compute the base geodesic $\gamma_t$ from $X_0$ to $X_N$
    \State Sample tangent noise $\epsilon_b^{(i)}\sim\mathcal N(0,I_{T_\gamma})$
    \State $X_{t^{(i)}}^{(i)}\gets R_{\gamma^{(i)}_{t^{(i)}}}\!\left(\sigma\sqrt{t^{(i)}(1-t^{(i)})}\epsilon_b^{(i)}\right)$
    \State Evaluate terminal adjoint: $a_1^{(i)}\gets\nabla_{\mathcal M}E(X_N^{(i)})+P_{X_N^{(i)}}h_\phi^{(k-1)}(X_N^{(i)})$
    \State Compute $a_t^{(i)}$ by transporting $a_1^{(i)}$ along the chosen geodesic from $X_N^{(i)}$ to $X_{t^{(i)}}^{(i)}$, as in \eqref{eq: practical_transport_approximation}
    \State Update $\theta$ by descending $\nabla_\theta\mathcal L_u(\theta)$, where
    $\mathcal L_u=\frac1B\sum_{i=1}^B\left\|P_{X_{t^{(i)}}^{(i)}}u_\theta(X_{t^{(i)}}^{(i)},t^{(i)})+\sigma a_t^{(i)}\right\|^2$
    }}\vspace{0.5em}
    \hspace*{-\fboxsep}\colorbox{lightgraybox}{\parbox{\dimexpr\linewidth-\fboxsep\relax}{%
    \State Re-sample trajectories with updated controller: $\{(X_0^{(i)},X_N^{(i)})\}_{i=1}^B\sim p_{0,1}^{u^{(k)}}$
    \State Compute approximate denoising-corrector targets $b^{(i)}$ using \eqref{eq: Varadhan_approximation} \COMMENT{corrector matching$~$}
    \State Update $\phi$ by descending $\nabla_\phi\mathcal L_h(\phi)$, where
    $\mathcal L_h=\frac1B\sum_{i=1}^B\left\|P_{X_N^{(i)}}h_\phi(X_N^{(i)})-b^{(i)}\right\|^2$
    }}
\EndFor
\State \Return Trained controller $u_\theta(x,t)$
\end{algorithmic}
\end{algorithm}
For clarity, Algorithm~\ref{alg: asbs_m} is written for a constant noise amplitude $\sigma$; the time-varying case follows by replacing $\sigma^2$ with the accumulated variance $\int_s^t\sigma_r^2\diff r$.

\begin{algorithm}
\caption{Extended R--ASBS.}\label{alg: extended_asbs_m} 
\begin{algorithmic}[1] 
\Require Same inputs as in Algorithm~\ref{alg: asbs_m}.
\State \textbf{Initialize:} Controller network $u_\theta(x, t)$ and corrector network $h_\phi(x)$ with weights $\theta_0, \phi_0$.
\For{$k = 1$ \textbf{to} $K$}
    \hspace*{-\fboxsep}\colorbox{lightgraybox}{\parbox{\dimexpr\linewidth-\fboxsep\relax}{%
    \State Sample initial particles $\{X_0^{(i)}\}_{i=1}^B \sim \mu_\mathcal{M}$ \COMMENT{controller matching$~$}
    \FORR{$n = 0$ \textbf{to} $N-1$ \textbf{do}}
         \State Sample ambient noise $\epsilon_n^{(i)} \sim \mathcal{N}(0, I_d)$.
         \State $X_{n+1}^{(i)} \gets \Pi_{\mathcal{M}} \Big( X_n^{(i)} + \sigma P_{X_n^{(i)}} u_\theta^{(k-1)} (X_n^{(i)}, n\Delta t)\Delta t + \sigma \sqrt{\Delta t} P_{X_n^{(i)}} \epsilon_n^{(i)} \Big)$.
    \ENDFORR
    \State Initialize $v_N^{(i)} = P_{X_N^{(i)}}\big(\nabla E(X_N^{(i)})+h_\phi^{(k-1)}(X_N^{(i)})\big)$
    \FORR{$j = N-1, N-2, \dots, 0$ \textbf{do}}
        \State $v_j^{(i)} \gets P_{X_j^{(i)}}v_{j+1}^{(i)}$ \COMMENT{PAT \eqref{eq: PAT}}
    \ENDFORR
    \State Update $u_\theta^{(k)} \gets  \argmin_{u} \frac{1}{B} \sum_{i=1}^B \sum_{j=0}^{N-1} \| P_{X_j^{(i)}} u_\theta(X_j^{(i)}, t_j) + \sigma v_j^{(i)} \|^2$
    }}\vspace{0.5em}
    \hspace*{-\fboxsep}\colorbox{lightgraybox}{\parbox{\dimexpr\linewidth-\fboxsep\relax}{%
    \State Re-sample $\{(X_0^{(i)},X_N^{(i)})\}_{i=1}^B \sim p_{0,1}^{u^{(k)}}$ \COMMENT{corrector matching$~$}
    \State Compute projected chord: $b^{(i)} \gets - P_{X_N^{(i)}} \frac{X_N^{(i)} - X_0^{(i)}}{\sigma^2} \in T_{X_N^{(i)}} \mathcal{M}$
    \State Update $h_\phi^{(k)} \gets \argmin_{h} \frac{1}{B} \sum_{i=1}^B \|P_{X_N^{(i)}} h_\phi(X_N^{(i)}) - b^{(i)}\|^2$
    }}
 \EndFor
 \State \Return Trained controller $u_\theta(x, t)$
\end{algorithmic}
\end{algorithm}

\subsection{Controller Normality Independence}
\label{app: theoretical_analysis}
In implementations based on the embedded representation, an unconstrained ambient vector field $\widetilde u_t:\mathcal M\to\mathbb R^d$ can be made admissible by the tangent projection $u_t(x) = P_x\widetilde u_t(x)$.
Proposition~\ref{prop: control_independence_normal} proves the normal component is dynamically irrelevant, and the controller is identifiable only modulo the normal bundle, namely $\widetilde{u} \sim  \widetilde{u} + n$, $n(x) \in N_x \mathcal M$.
\begin{proposition}
\label{prop: control_independence_normal}
Let $\mathcal M\subset\mathbb R^d$ satisfy the hypotheses in Section~\ref{subsec: notation}. Let $\widetilde u_1,\widetilde u_2:[0,1]\times\mathcal M\to\mathbb R^d$ be ambient control fields whose projected tangent components coincide, i.e., $P_x\widetilde u_1(t,x)=P_x\widetilde u_2(t,x),\ \forall (t,x) \in [0,1]\times\mathcal M$.
Assume the corresponding martingale problems are well posed. Then the intrinsic controlled diffusions obtained by inserting $P_x\widetilde u_i(t,x)$ into the drift induce the same path measure on $C([0,1],\mathcal M)$.
In particular, any intrinsic SB objective depending only on the induced path law and tangent control cannot distinguish between the two ambient representatives.
\end{proposition}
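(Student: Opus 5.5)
The plan is to reduce the statement to uniqueness in law for the martingale problem \eqref{eq: manifold_martingale_problem}. For $i=1,2$ set $u^{(i)}_t(x):=P_x\widetilde u_i(t,x)\in T_x\mathcal M$. By \eqref{eq: manifold_controlled_generator}, the intrinsic controlled diffusion has generator
\begin{equation*}
\mathcal L_t^{(i)}\psi=\bigl\langle f_t^{\mathcal M}+\sigma_t u^{(i)}_t,\nabla_{\mathcal M}\psi\bigr\rangle+\frac{\sigma_t^2}{2}\Delta_{\mathcal M}\psi ,\qquad \psi\in C^\infty(\mathcal M).
\end{equation*}
The hypothesis $P_x\widetilde u_1=P_x\widetilde u_2$ gives $u^{(1)}=u^{(2)}$ pointwise on $[0,1]\times\mathcal M$. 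Hence $\mathcal L^{(1)}_t=\mathcal L^{(2)}_t$ as operators on $C^\infty(\mathcal M)$ for every $t$. Both processes start from the same initial law $\mu_{\mathcal M}$. They therefore solve the same martingale problem, which is well posed by assumption, so the two path measures on $C([0,1],\mathcal M)$ coincide.

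It is worth making explicit why the normal component never enters. If one writes the drift with the ambient field $\widetilde u_i$ instead, then for any $v\in\R^d$ and any smooth extension $F$ of $\psi$ we have $\langle v,\nabla_{\mathcal M}\psi(x)\rangle=\langle v,P_x\nabla F(x)\rangle=\langle P_xv,\nabla F(x)\rangle$, because $P_x$ is an orthogonal (self-adjoint) projector. So $\langle\sigma_t\widetilde u_i,\nabla_{\mathcal M}\psi\rangle=\langle\sigma_t P_x\widetilde u_i,\nabla_{\mathcal M}\psi\rangle$. Any normal component $n(x)\in N_x\mathcal M$ is annihilated before it reaches the generator. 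This is the equivalence $\widetilde u\sim\widetilde u+n$ stated before the proposition.

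For the final claim, an intrinsic SB objective such as \eqref{eq: SOC_theo1} or the KL form \eqref{eq: manifold_SB_problem} is a functional of the path law $p^u_{\mathcal M}$ and of the tangent field $u_t(X_t)$. Examples are the running cost $\|u_t(X_t)\|^2_{T_{X_t}\mathcal M}$ and the terminal cost. Both ingredients agree for the two representatives, so the objective values agree. Alternatively, by the entropy identity \eqref{eq: manifold_girsanov_entropy}, the KL term equals the tangent energy, and the Girsanov density \eqref{eq: manifold_girsanov_density} involves only $U_t^{-1}u_t$, which is again identical.

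The only potentially delicate point is whether the time-dependent, merely measurable setting is covered by uniqueness in law. I would handle it as in Appendix~\ref{app: geometric_stochastic_preliminaries}, using the space--time augmentation and the well-posedness assumed in the statement. No further estimate is needed.
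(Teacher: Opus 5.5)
Your proposal is correct and follows essentially the same route as the paper: because the tangent projections coincide, the drifts $f_t^{\mathcal M}+\sigma_t P_x\widetilde u_i$ and hence the generators $\mathcal L_t^{(1)},\mathcal L_t^{(2)}$ are identical, and well-posedness of the martingale problem with the common initial law $\mu_{\mathcal M}$ forces the two path measures to agree. Your additional remarks on the self-adjointness of $P_x$ and on why the objective values coincide are sound, and spell out points the paper leaves implicit.
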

\begin{proof}
Let $\widetilde{u}_i(t,x) = u_{\parallel, i}(t,x) + u_{\perp, i}(t,x)$ denote the orthogonal decomposition of the ambient controls into tangential and normal components. The assumption $P_x \widetilde{u}_1(t,x) = P_x \widetilde{u}_2(t,x)$ implies $u_{\parallel, 1}(t,x) = u_{\parallel, 2}(t,x) =: u(t,x)$ for all $(t,x) \in [0,1]\times\mathcal M$. Substituting the ambient control into the drift term $f_t^{\mathcal{M}} + \sigma_t P_{X_t} \widetilde{u}_i$, we observe $P_{X_t} \widetilde{u}_i = P_{X_t}(u_{\parallel, i} + u_{\perp, i}) = u_{\parallel, i}$. Thus, the drift reduces to $f_t^{\mathcal{M}} + \sigma_t u_t$, which is independent of the normal component $u_{\perp, i}$. Consequently, the infinitesimal generators $\mathcal{L}_t^{(1)}$ and $\mathcal{L}_t^{(2)}$ are identical. By the well-posedness of the martingale problem \citep{ikeda1989stochastic, hsu2002stochastic} for diffusion processes on $\mathcal{M}$, the two SDEs induce the same path measure.
\end{proof}
Thus, standard NNs with outputs in $\mathbb R^d$ may be used, provided the output is projected before entering the dynamics and the control cost is evaluated on the projected field. This approach fully circumvents the need for explicit coordinate-based parameterization of the tangent bundle.

\subsection{Locality of the Tilted Heat-Kernel Approximation}
In the Euclidean zero-drift setting with no running cost, the flat heat semigroup satisfies $\nabla P_tf=P_t\nabla f$, because the Ricci curvature vanishes and all tangent spaces are canonically identified with $\mathbb R^d$. On a Riemannian manifold the exact replacement is Lemma~\ref{prop: exact_manifold_AM}: terminal gradients are propagated by the adjoint of damped stochastic parallel transport, which includes the Ricci-curvature correction in \eqref{eq: damped_transport}. Ordinary parallel transport along a chosen curve is therefore not an exact commutation rule, but the practical surrogate \eqref{eq: practical_transport_approximation}.
Since exact damped stochastic transport, exact manifold bridge sampling, and exact heat-kernel scores are generally unavailable in closed form, Algorithm~\ref{alg: asbs_m} replaces them with local geometric approximations.

Proposition~\ref{prop: tilted_endpoint_concentration} shows that, on a compact Riemannian manifold, tilting the heat kernel by a bounded terminal reward does not change the short-time Brownian scaling of the endpoint law: the tilted endpoint remains at geodesic distance\footnote{For a time-inhomogeneous noise schedule, $r$ should be interpreted as the effective diffusion time $r=\int_s^t \sigma_\tau^2\,d\tau$.} $O(\sqrt r)$ from its starting point with Gaussian tails.
Thus, for small effective diffusion time, the terminal reward tilt does not destroy the local nature of the short-time heat-kernel transition. Together with the fact that damped transport differs from ordinary parallel transport only through a bounded curvature-driven finite-variation term along the same short path, this supports the local substitutions in \eqref{eq: Varadhan_approximation} and \eqref{eq: practical_transport_approximation}. It does not, however, constitute a convergence proof of Algorithm~\ref{alg: asbs_m}, because the algorithm also replaces the stochastic reference bridge by a noisy geodesic construction.

\begin{proposition}[Endpoint concentration under the tilted heat-kernel law]
\label{prop: tilted_endpoint_concentration}
Let $\mathcal M$ be a compact Riemannian manifold\footnote{In particular, this applies to \(\mathbb T^n\), \(\mathbb S^n\), and
\(\mathrm{SO}(3)\) equipped with their standard metrics.} without boundary, and let $n:=\dim\mathcal M$. Let
$g^{\mathcal M}\in C^0(\mathcal M)$ and set $h(y):=e^{-g^{\mathcal M}(y)}$. For $r>0$, define
\begin{equation*}
  \varphi_r(x):=P_rh(x) = \int_{\mathcal M} h(y)p_r^{\mathcal M}(x,y) \,d\mathrm{vol}_{\mathcal M}(y),
\end{equation*}
where $P_r$ is the heat semigroup generated by $\frac12\Delta_{\mathcal M}$, and $p_r^{\mathcal M}$ is its scalar heat kernel.
Define the tilted endpoint law
\begin{equation*}
  p_{\mathcal M}^{\varphi}(\diff y\mid x) := \frac{h(y)p_r^{\mathcal M}(x,y)}{\varphi_r(x)} \,d\mathrm{vol}_{\mathcal M}(y).
\end{equation*}
Then there exist constants $C',C,c,r_0>0$, depending only on
$g^{\mathcal M}$ and $\mathcal M$, such that for all $x\in\mathcal M$,
all $0<r<r_0$, and all $\rho\ge0$,
\begin{equation*}
  p_{\mathcal M}^{\varphi} \bigl(d_{\mathcal M}(x,Y)\ge \rho \,\big|\, x \bigr) \le C\exp\left(-\frac{c\rho^2}{r}\right).
\end{equation*}
Consequently, $\mathbb E_{p_{\mathcal M}^{\varphi}(\cdot\mid x)} \left[ d_{\mathcal M}(x,Y)^2 \right] \le C' r$.
\end{proposition}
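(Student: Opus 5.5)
The plan is to reduce everything to Gaussian (Li--Yau/Grigor'yan) two-sided bounds for the heat kernel on a compact manifold, since the tilt only costs a bounded multiplicative factor. Because $g^{\mathcal M}$ is continuous on a compact $\mathcal M$, set $m:=\min h>0$ and $M:=\max h<\infty$, so that $h/\varphi_r(x)\le M/m$ uniformly in $x$ and $r$: indeed $\varphi_r(x)\ge m\int p_r^{\mathcal M}(x,y)\,d\mathrm{vol}_{\mathcal M}(y)=m$ by conservativeness of the heat kernel (Appendix~\ref{app: geometric_stochastic_preliminaries}). Hence
\begin{equation*}
p_{\mathcal M}^{\varphi}\bigl(d_{\mathcal M}(x,Y)\ge\rho\,\big|\,x\bigr)\le \frac{M}{m}\int_{\{d_{\mathcal M}(x,y)\ge\rho\}}p_r^{\mathcal M}(x,y)\,d\mathrm{vol}_{\mathcal M}(y),
\end{equation*}
and the claim reduces to a Gaussian tail bound for untilted Brownian motion on $\mathcal M$ (with generator $\tfrac12\Delta_{\mathcal M}$, so the heat time is effectively $r$ up to a factor $2$).

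For that tail, I will invoke the Gaussian upper bound of \citet{grigoryan1997gaussian} (or Li--Yau, using that the Ricci curvature is bounded below on compact $\mathcal M$): for $0<r<r_0$,
\begin{equation*}
p_r^{\mathcal M}(x,y)\le \frac{C_1}{\mathrm{vol}(B(x,\sqrt r))}\exp\!\Bigl(-\frac{d_{\mathcal M}(x,y)^2}{C_2 r}\Bigr).
\end{equation*}
Compactness gives uniform volume comparison $\mathrm{vol}(B(x,s))\ge c_0 s^n$ for $s\le \mathrm{diam}\,\mathcal M$, together with Bishop--Gromov doubling. I then integrate over dyadic annuli $A_k=\{2^k\rho\le d<2^{k+1}\rho\}$, using $\mathrm{vol}(A_k)\le C(2^{k+1}\rho)^n$. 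This gives
\begin{equation*}
\int_{d\ge\rho}p_r^{\mathcal M}\le C\sum_k \bigl(2^{k+1}\rho/\sqrt r\bigr)^n e^{-4^k\rho^2/(C_2r)}\le C e^{-c\rho^2/r},
\end{equation*}
where the polynomial prefactor is absorbed by halving the exponent constant. For $\rho\le\sqrt r$ the bound is trivial once $C\ge e^{c}$, and for $\rho>\mathrm{diam}\,\mathcal M$ the probability vanishes. An alternative route avoiding the kernel bound is a martingale/exit-time argument: apply Itô to a smoothed squared distance, whose Laplacian is bounded by compactness, and get $\mathbb P(\sup_{s\le r}d(x,X_s)\ge\rho)\le Ce^{-c\rho^2/r}$. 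I would keep this in reserve in case the cut locus causes trouble.

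The second-moment consequence follows by the layer-cake formula: $\mathbb E[d^2]=\int_0^\infty 2\rho\,\mathbb P(d\ge\rho)\,d\rho\le \int_0^\infty 2\rho\, Ce^{-c\rho^2/r}\,d\rho=(C/c)\,r$, so $C'=C/c$. All constants depend only on $\mathcal M$ (through the curvature lower bound, the injectivity and volume data, and the diameter) and on $g^{\mathcal M}$ through $M/m=\exp(\mathrm{osc}\,g^{\mathcal M})$.

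The main obstacle is justifying a uniform Gaussian upper bound with the volume normalization, and controlling the annulus volumes uniformly in $x$. I will cite Grigor'yan or Li--Yau for the bound and use compactness (Ricci bounded below, Bishop--Gromov) for the volumes, rather than re-deriving any of it.
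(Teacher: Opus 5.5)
Your proposal is correct and follows essentially the paper's argument. You bound the tilt by $\max h/\min h$ using conservativeness of the heat kernel, invoke a short-time Gaussian upper bound for $p_r^{\mathcal M}$, absorb the $r^{-n/2}$ prefactor through an annulus decomposition with uniform volume growth, and finish with the layer-cake formula to get $C'=C/c$. The differences are cosmetic: you use dyadic annuli $\{2^k\rho\le d<2^{k+1}\rho\}$ where the paper uses annuli of width $\sqrt r$, and you start from the volume-normalized Gaussian bound and convert it with $\mathrm{vol}(B(x,\sqrt r))\ge c_0 r^{n/2}$, whereas the paper cites the $r^{-n/2}$ form directly.
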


\begin{proof}
Since $\mathcal M$ is compact and $g^{\mathcal M}$ is continuous, there exist constants $0<h_-\le h_+<\infty$ such that $h_-\le h(y)\le h_+, \ y\in\mathcal M$. Because $p_r^{\mathcal M}(x,\cdot)$ is a probability density,
\begin{equation*}
  h_- \le \varphi_r(x) = \int_{\mathcal M}h(y)p_r^{\mathcal M}(x,y)\,d\mathrm{vol}_{\mathcal M}(y) \le h_+.
\end{equation*}
Hence $p_{\mathcal M}^{\varphi}(\diff y\mid x) \le \frac{h_+}{h_-} p_r^{\mathcal M}(x,y)\,d\mathrm{vol}_{\mathcal M}(y)$. 
The short-time Gaussian upper bound for the heat kernel on a compact
Riemannian manifold \citep{grigoryan1997gaussian, grigoryan2009heat} gives constants $C_0,c_0,r_0>0$ such that for $0<r<r_0$,
\begin{equation*}
  p_r^{\mathcal M}(x,y) \le C_0 r^{-n/2} \exp\left(-\frac{d_{\mathcal M}(x,y)^2}{c_0r}\right).
\end{equation*}
We now integrate this bound over the region $\{y:d_{\mathcal M}(x,y)\ge\rho\}$; the polynomial prefactor $r^{-n/2}$ is absorbed by an annulus decomposition. Set $c_1:=1/(2c_0)$ and $a:=\rho/\sqrt r$. If $a\le1$, the tail bound below holds trivially for any $C_1\ge e^{c_1}$, since the integral is at most $1$. If $a>1$, decompose the region into geodesic annuli $A_k:=\{y:\rho+k\sqrt r\le d_{\mathcal M}(x,y)<\rho+(k+1)\sqrt r\}$, $k\ge0$. 
Compactness yields a uniform volume-growth bound $\mathrm{vol}(B(x,s))\le C_Vs^n$ for all $x\in\mathcal M$ and $s>0$: for $s\le s_0$ by volume comparison under a lower Ricci bound \citep[Ch.~11]{lee2018introduction}, for $s\in[s_0,\operatorname{diam}\mathcal M]$ by $\mathrm{vol}(\mathcal M)/s_0^n$, and trivially for $s>\operatorname{diam}\mathcal M$ since then $B(x,s)=\mathcal M$; hence $\mathrm{vol}(A_k)\le C_V\bigl(\rho+(k+1)\sqrt r\bigr)^{n}=C_Vr^{n/2}(a+k+1)^{n}$. Therefore
\begin{align*}
\int_{\{y:d_\mathcal{M}(x,y)\ge \rho\}} p_r^{\mathcal M}(x,y)\,d\mathrm{vol}_{\mathcal M}(y)
&\le \sum_{k\ge0}\mathrm{vol}(A_k)\,\sup_{y\in A_k}p_r^{\mathcal M}(x,y) \\
&\le C_0C_V\sum_{k\ge0}\myellow{(a+k+1)^{n}}\,e^{-\frac{(a+k)^2}{c_0}}
\shortnote{heat-kernel and volume bounds}\\
&\le C_0C_V\myellow{\Bigl(\sup_{s\ge1}\,(2s)^{n}e^{-\frac{s^2}{2c_0}}\Bigr)}\sum_{k\ge0}e^{-\frac{(a+k)^2}{2c_0}}
\shortnote{$a{+}k{+}1\le2(a{+}k)$}\\
&\le C_1\,e^{-\frac{c_1\rho^2}{r}},
\shortnote{$(a{+}k)^2\ge a^2{+}k^2$}
\end{align*}
using $a+k+1\le2(a+k)$ for $a+k\ge1$, then $(a+k)^2\ge a^2+k^2$ to sum the geometric-type series; the constant $C_1$ depends only on $n$, $C_0$, $C_V$, and $c_0$. Combining with the tilt bound above,
\begin{equation*}
p_{\mathcal M}^{\varphi}\bigl(d_{\mathcal M}(x,Y)\ge\rho \,\big|\, x\bigr)
\le \frac{h_+}{h_-}\,C_1\,e^{-\frac{c_1\rho^2}{r}}
= C\,e^{-\frac{c\rho^2}{r}},
\end{equation*}
which is the claimed bound with $C:=(h_+/h_-)C_1$ and $c:=c_1$. The second-moment estimate follows from the tail identity (layer cake representation)
\begin{equation*}
\mathbb E[d_{\mathcal M}(x,Y)^2] = \int_0^\infty 2s\, p_{\mathcal M}^{\varphi} \bigl( d_{\mathcal M}(x,Y)\ge s \,\big|\, x \bigr) \,ds
\le \int_0^\infty 2s\,C e^{-\frac{cs^2}{r}}\,ds
= \frac{C}{c}\,r =: C' r.
\end{equation*}
\end{proof}

\subsection{Computational Complexity}
We summarize the per-epoch cost of R--ASBS (Algorithm~\ref{alg: asbs_m}) and Extended R--ASBS (Algorithm~\ref{alg: extended_asbs_m}). Let \(B\) be the batch size, \(N\) the number of SDE steps, and \(K_{\mathrm{tr}}\) the number of training epochs. Let \(F_u(M)\), \(F_h(M)\) denote batched forward evaluations of the controller and corrector on \(M\) inputs, and let \(G_u(M)\), \(G_h(M)\) denote the corresponding gradient-update costs. Let \(P_B\), \(R_B\), and \(\Pi_B\) denote, respectively, the cost of applying tangent projection, retraction, and nearest-point projection to a batch of \(B\) particles. For a general embedded equality-constraint manifold,
\[
P_B
=
O\!\left(B(C_{J_c}+m^2d+m^3)\right),
\]
using dense linear algebra for
\[
P_x=I-J_c(x)^\top(J_c(x)J_c(x)^\top)^{-1}J_c(x).
\]
For structured manifolds such as spheres or tori, these operations reduce to closed-form \(O(Bd)\) primitives.

For R--ASBS, each epoch performs two controlled rollouts, one before and one after the controller update. Each rollout requires a controller evaluation, tangent projection of both control and noise, and retraction at every SDE step. Thus the rollout cost is
\[
2N\bigl(F_u(B)+2P_B+R_B\bigr).
\]
The remaining operations, namely bridge interpolation, terminal adjoint evaluation, vector transport, endpoint-score/corrector-target computation, and the controller/corrector updates, are performed on \(B\) samples. If \(C_{\mathrm{geom},B}\) denotes the combined cost of these geometric operations, then
\[
T_{\mathrm{R\text{-}ASBS}}
=
O\!\left(
2N(F_u(B)+2P_B+R_B)
+
F_h(B)+G_u(B)+G_h(B)+C_{\mathrm{geom},B}
\right).
\]
With fixed network size and closed-form geometric primitives, \(T_{\mathrm{R\text{-}ASBS}}=O(BN)\).

For Extended R--ASBS, each epoch starts with an initial projection of the source particles, costing \(\Pi_B\). It then performs two projected rollouts, each costing
\[
N(F_u(B)+2P_B+\Pi_B).
\]
Projection-as-transport applies tangent projection along the stored trajectory and costs \(NP_B\). The controller loss is evaluated on all \(BN\) trajectory states, while the corrector loss is evaluated only on the \(B\) terminal states. Therefore,
\[
T_{\mathrm{Ext}}
=
O\!\left(
\Pi_B
+
2N(F_u(B)+2P_B+\Pi_B)
+
NP_B
+
F_h(B)+G_u(BN)+G_h(B)+C_{E,B}+P_B
\right),
\]
where \(C_{E,B}\) is the cost of evaluating the terminal energy gradients. For fixed network size and a bounded number of Newton iterations in \(\Pi_M\), this also scales as \(O(BN)\), but with a larger constant due to repeated nearest-point projections, projection-as-transport, and the \(BN\)-sample controller update.

Let $N_u := |\theta_u|,\ N_h := |\theta_h|$ be the numbers of network parameters. 
The memory cost of R--ASBS is
\[
M_{\mathrm{R\text{-}ASBS}}=O(Bd+N_u+N_h),
\]
up to neural-network activation memory, since rollouts can be streamed and only endpoints and bridge intermediates are retained. Extended R--ASBS stores the full trajectory and transported adjoints, hence
\[
M_{\mathrm{Ext}}=O(BNd+N_u+N_h),
\]
again plus activation memory for the controller update on \(BN\) inputs.

After training, both methods generate samples by simulating only the learned controlled diffusion. The corrector is not used at sampling time. For \(S\) generated samples,
\[
T_{\mathrm{gen}}^{\mathrm{R\text{-}ASBS}}
=
O\!\left(N(F_u(S)+2P_S+R_S)\right),
\qquad
T_{\mathrm{gen}}^{\mathrm{Ext}}
=
O\!\left(N(F_u(S)+2P_S+\Pi_S)\right).
\]
Thus, training is the expensive phase, while post-training sampling is amortized and linear in both the number of samples and SDE steps.

\section{Additional Numerical Results}
\label{app: additional_num_results}
\subsection{Stiefel manifolds}
\label{app: app_stiefel_manifold}
The Stiefel manifold is defined as $St(n,p) := \{ X \in \R^{n \times p}: X^\T X = I_p\}$. Differentiating the constraint $c(X) = X^\T X - I_p = 0$ yields the tangent space at $X$
\begin{equation}
    T_X St(n,p) = \{V \in \R^{n \times p}: V^\T X + X^\T V = 0\}.
    \label{eq: TX_St}
\end{equation}
To project an arbitrary ambient matrix $Z \in \R^{n \times p}$ onto \eqref{eq: TX_St}, we decompose $Z$ into a tangent component $V$ and a normal component. The normal space at $X$ is given by $N_X St(n,p) = \{XS: S=S^\T \in \R^{p \times p} \}$.
Setting $Z=V +XS$ and left-multiplying by $X^\T$, we get $X^\T Z = X^\T V + S$. Because $V \in T_X St(n,p)$, $X^\T V$ is a skew-symmetric matrix.
Taking the symmetric part of both sides isolates $S$, yielding $S = \mathrm{sym}(X^\T Z) := \frac{1}{2} (X^\T Z + Z^\T X)$.
Therefore, the orthogonal projector $P_X : \R^{n \times p} \to T_X St(n,p)$ is
\begin{equation}
    P_X(Z) = Z - X \frac{X^\T Z + Z^\T X}{2}.
    \label{eq: PX_St}
\end{equation}
Using \eqref{eq: PX_St}, the Riemannian gradient is simply the orthogonal projection of the ambient Euclidean gradient
\begin{equation*}
    \nabla_\mathcal{M} E(X) = P_X(\nabla_{\R^{n \times p}} E(X))
    = \nabla_{\R^{n \times p}} E(X) - X \mathrm{sym}\big(X^\T \nabla_{\R^{n \times p}} E(X)\big).
\end{equation*}
Computing the exact exponential map on the Stiefel manifold is computationally expensive as it involves computing matrix exponentials of size $2p \times 2p$ or solving ODEs. Therefore, 
We use the QR retraction map\footnote{A second-order alternative is the polar retraction $R_X^{\mathrm{polar}}(V) =(X+V)\bigl((X+V)^\top(X+V)\bigr)^{-1/2}$.} $R_X : T_X St (n,p) \to St(n,p)$ defined as $R_X(V) = \mathrm{qf}(X + V)$, where $\mathrm{qf}(\cdot)$ extracts the orthogonal $Q$ factor from the QR decomposition.
The absence of an analytical expression for a transport map led to the implementation of Extended R--ASBS.

For the experiment in Figure~\ref{fig: stiefel_manifold}, we use
$H = \begin{bmatrix}
        A & B \\
        B & A
    \end{bmatrix}, \ \substack{p=2\\ n=4}$,
$A := \begin{bmatrix}
        4 & 0.5\\
        0.5 & 4
    \end{bmatrix},
B := \begin{bmatrix}
       2.5 & 1 \\
       1 & 2.5 
    \end{bmatrix}.$

\subsection{Robust Wahba problem}
\label{app: app_robust_Wahba_problem}
\begin{figure}[!t]
\centering
\includegraphics[width=0.50\columnwidth]{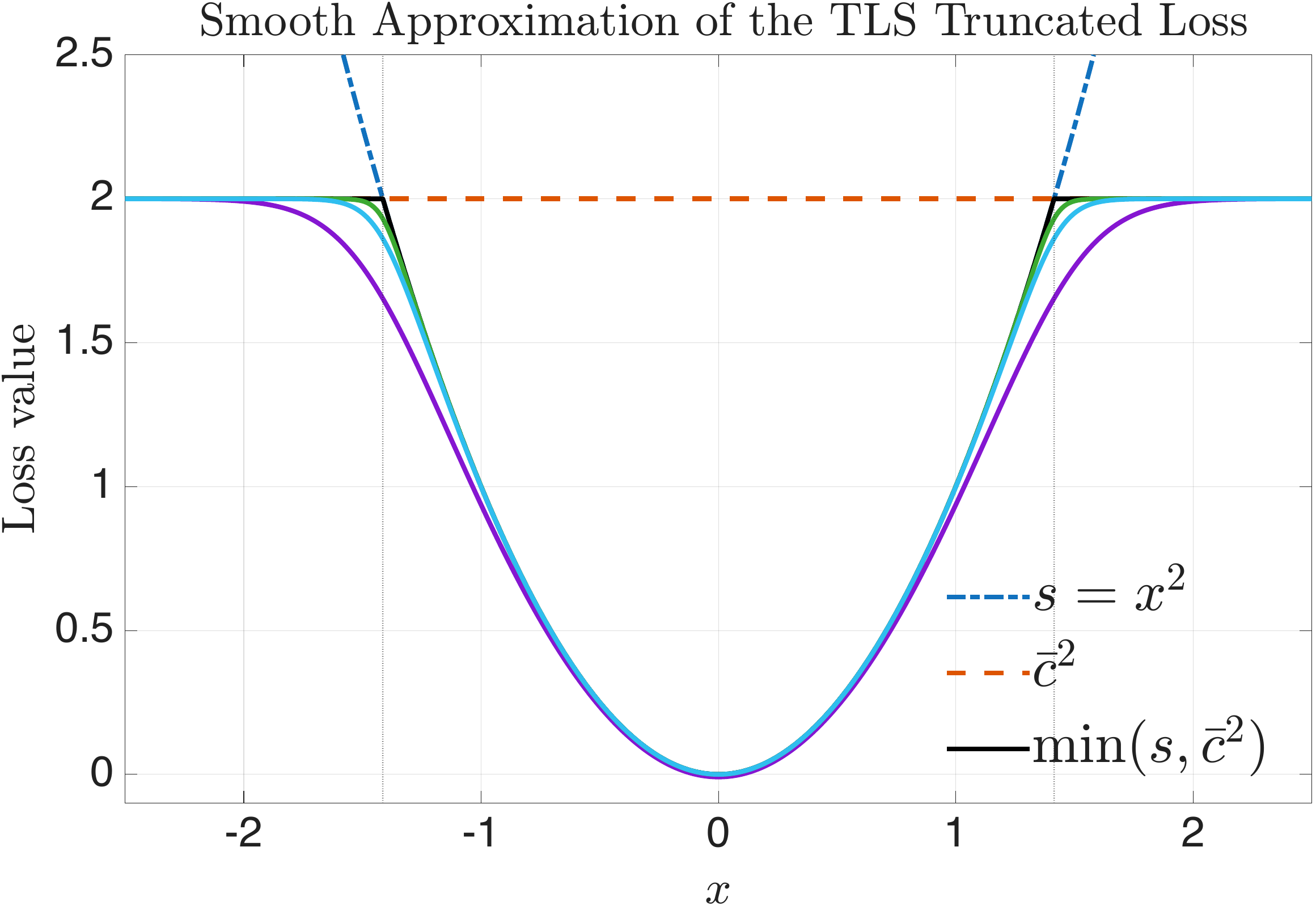}
\captionof{figure}{The figure illustrates smoothing for $\tau \in \{0.1, 0.2, 0.5\}$.}
\label{fig: smoothed_loss}
\end{figure}
During neural network training, the non-differentiability of the energy function $\min(s,\bar{c}^2)$ is avoided by employing the smooth approximation $-\tau \log (e^{-s/\tau} + e^{-\bar{c}^2 / \tau})$ with $\tau \ll 1$ (Figure \ref{fig: smoothed_loss}).
\begin{figure}[!t]
\centering
\includegraphics[width=0.70\columnwidth]{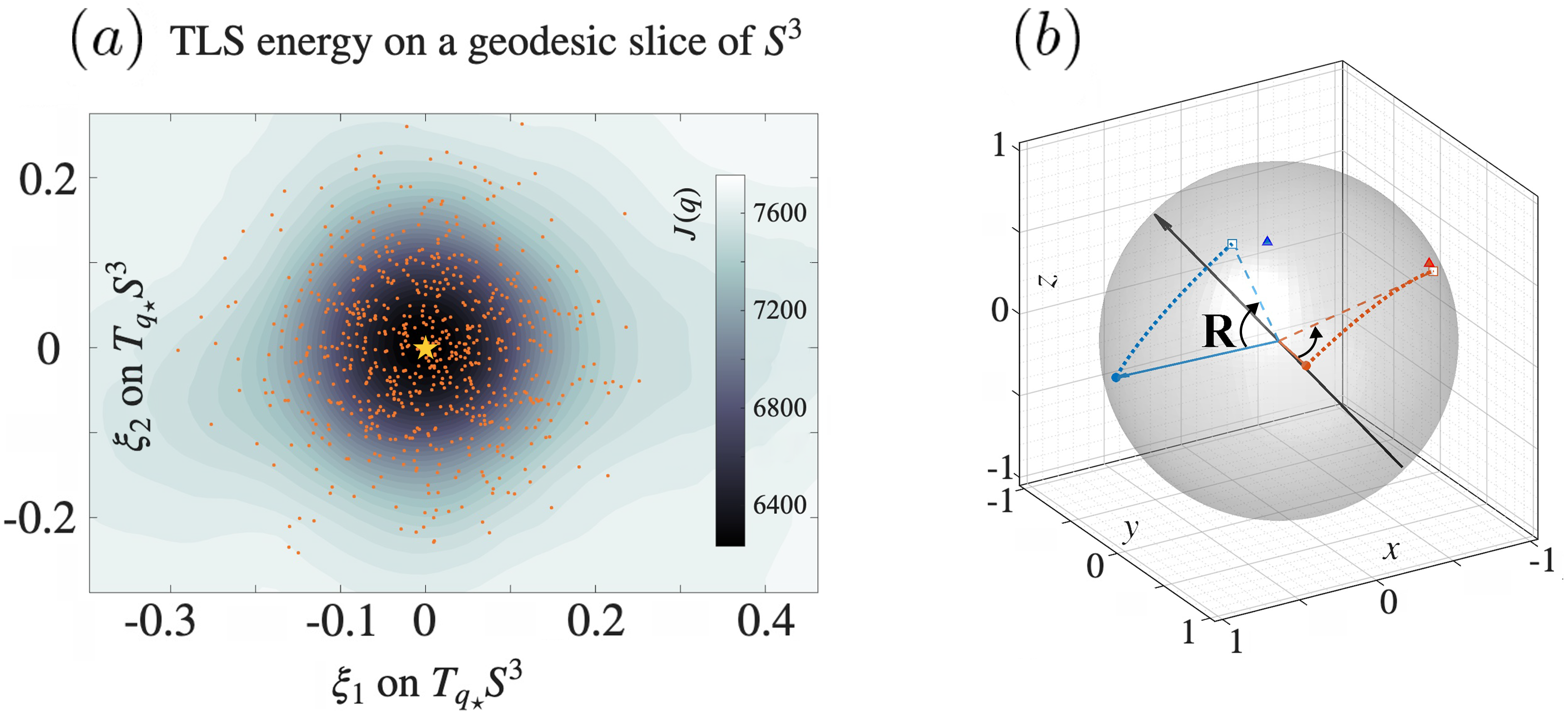}
\caption{\( (a) \) TLS objective evaluated on a geodesic slice of \( \mathbb{S}^3 \) in a neighborhood of \( \boldsymbol{q}^\star \), with Extended R--ASBS samples (\( \textcolor{orange}{\bullet} \)) concentrated near the optimal quaternion ($\color{yellow!70!black} \star$). $(b)$ Directional correspondences on \( \mathbb S^2 \), showing source vectors ($\textcolor{cyan}{\bullet}, \textcolor{orange}{\bullet}$), true rotated directions (dashed lines), measured endpoints affected by zero-mean Gaussian noise ($\textcolor{cyan}{\blacktriangle}, \textcolor{orange}{\blacktriangle}$), rotation arcs (dot lines), and the true rotation axis (black arrow).}
\label{fig: wahba_problem}
\end{figure}
The ablation study in Table~\ref{tab: asbsm_wahba_extra_results} shows that the sample budget is the dominant factor affecting performance (Panel~A), whereas R--ASBS remains comparatively robust to the choice of $\beta$ over the tested range (Panel~B).
\begin{table}[H]
\centering
\caption{R--ASBS parameter sensitivity for robust Wahba on $\mathrm{SO}(3)$ via $\mathbb{S}^3$ quaternions.}
\label{tab: asbsm_wahba_extra_results}
\small
\newcommand{\sd}[1]{\graysd{#1}}

\begin{minipage}[t]{0.48\textwidth}
\centering
\renewcommand{\arraystretch}{1.2}
\setlength{\tabcolsep}{7.0pt}
\begin{tabular}{c c c}
\toprule
\multicolumn{3}{c}{Panel A: Effect of $N_s$, $\beta=1$} \\
\midrule
$N_s$ &
TLS gap &
R--ASBS err. \\
&
(\%) &
(deg) \\
\midrule
$500$
& $0.290$ \sd{0.504}
& $2.055$ \sd{0.991} \\

$1000$
& $0.131$ \sd{0.193}
& $1.662$ \sd{0.822} \\

$2500$
& $0.054$ \sd{0.131}
& $1.290$ \sd{0.679} \\

$5000$
& $0.033$ \sd{0.117}
& $1.269$ \sd{0.735} \\

$10000$
& \cellhi $0.003$ \sd{0.090}
& \cellhi $1.175$ \sd{0.750} \\
\bottomrule
\end{tabular}
\end{minipage}
\hfill
\begin{minipage}[t]{0.48\textwidth}
\centering
\renewcommand{\arraystretch}{1.2}
\setlength{\tabcolsep}{7.0pt}
\begin{tabular}{c c c}
\toprule
\multicolumn{3}{c}{Panel B: Effect of $\beta$, $N_s=10{,}000$} \\
\midrule
$\beta$ &
TLS gap &
R--ASBS err. \\
&
(\%) &
(deg) \\
\midrule
$0.25$
& $-0.015$ \sd{0.097}
& $1.540$ \sd{0.917} \\

$0.50$
& $-0.010$ \sd{0.080}
& $1.235$ \sd{0.701} \\

$1$
& \cellhi $0.003$ \sd{0.090}
& $1.175$ \sd{0.750} \\

$2$
& $0.025$ \sd{0.107}
& \cellhi $1.169$ \sd{0.615} \\

$4$
& $-0.015$ \sd{0.081}
& $1.371$ \sd{0.929} \\
\bottomrule
\end{tabular}
\end{minipage}
\end{table}
All simulations in Section  \ref{sec: numerical_resuls} were implemented in MATLAB R2025b using the Deep Learning Toolbox. The source code is available at: \href{https://github.com/mattiamosso/Hard-Constrained-Sampling-on-Embedded-Riemannian-Manifold-via-Adjoint-Schrodinger-Bridges}{GitHub repository}.
\end{document}